\documentclass[11pt]{article}
\usepackage[english]{babel}
\usepackage[letterpaper,top=2cm,bottom=2cm,left=2cm,right=2cm,marginparwidth=1.75cm]{geometry}
\usepackage{amsmath}
\usepackage[none]{hyphenat}
\usepackage{amssymb}
\usepackage[numbers,sort&compress]{natbib}
\usepackage{enumitem}
\usepackage{amsthm}

\usepackage{float}
\usepackage{braket}
\usepackage{pdflscape}

\usepackage{longtable}
\usepackage{array}

\newcolumntype{L}[1]{>{\raggedright\arraybackslash}p{#1}}

\usepackage{tikz}
\usepackage{quantikz}
\usepackage{adjustbox}

\usetikzlibrary{
    positioning,
    fit,
    backgrounds,
    shapes.geometric,
    calc,
    tikzmark
}
\providecommand{\ket}[1]{|#1\rangle}
\providecommand{\bra}[1]{\langle#1|}
\providecommand{\braket}[2]{\langle#1|#2\rangle}

\usepackage[colorlinks=true, allcolors=blue]{hyperref}
\usepackage{booktabs} % for toprule, midrule, bottomrule in tables
\usepackage{authblk} % for multiple authors and affiliations

\newtheorem{proposition}{Proposition}
\makeatletter
\renewcommand{\maketitle}{%
  \begingroup
  \setlength{\parindent}{0pt}%
  \setlength{\parskip}{0pt}%
  \renewcommand\thefootnote{\@fnsymbol\c@footnote}%
  \begin{center}
    {\LARGE\bfseries \@title \par}
    \vspace{4pt}
    {\normalsize \@author \par}
    \vspace{-6pt}
  \end{center}
  \setcounter{footnote}{0}%
  \renewcommand\thefootnote{\arabic{footnote}}%
  \endgroup
}
\makeatother

\usepackage{graphicx}
\usepackage{subcaption}

\usepackage{caption}
\usepackage{setspace}
\usepackage{titlesec}

\titleformat{\section}
  {\normalfont\bfseries\normalsize}{\thesection}{0.6em}{}
\titleformat{\subsection}
  {\normalfont\bfseries\small}{\thesubsection}{0.6em}{}

\titlespacing*{\section}{0pt}{0.6\baselineskip}{0.4\baselineskip}
\titlespacing*{\subsection}{0pt}{0.5\baselineskip}{0.3\baselineskip}

\author[1]{Rulla Al-Haideri}
\author[1]{Bilal Farooq\thanks{Corresponding author. Email: bilal.farooq@torontomu.ca}}
\author[2]{Karim Ismail}

\affil[1]{\footnotesize Laboratory of Innovations in Transportation (LiTrans), Toronto Metropolitan University}
\affil[2]{\footnotesize Department of Civil and Environmental Engineering, Carleton University}

\title{Q-SCM: A Quantum-Sequential Choice Model for Driver Mental State Evolution}
\date{}

\begin{document}
\maketitle
\begingroup \renewcommand\thefootnote{\fnsymbol{footnote}} \footnotetext[1]{Corresponding author. Email: bilal.farooq@torontomu.ca} \endgroup
\vspace{12pt}

\begin{abstract}
\noindent
Driver behaviour in traffic interactions is shaped by evolving perception, exposure to perceptual cues, and the order in which risk related information is processed in the driver's mind.
%Conventional latent class choice models can represent unobserved behavioural heterogeneity, however, class membership is commonly specified as a static or observation specific function of observed covariates. 
%Although lagged variables or dynamic extensions can be added, these models do not usually provide an explicit and natural mechanism where cue history, cue processing order, and phase-like memory jointly shape the evolution of a driver's latent mental state.
%Conventional models do not usually provide an explicit and natural mechanism where cue history, cue processing order, and phase-like memory jointly shape the evolution of a driver's latent mental state.
We propose a Quantum-Sequential Choice Model (Q-SCM) for modelling driver mental state evolution in interactive traffic environments. 
Q-SCM retains the classical latent class choice structure, but replaces the conventional class membership formulation with a quantum cognitive state model. 
%A unique feature of this model is that the quantum component is confined to the class membership layer, while the action choice layer remains a classical RUM. 
% In this way, Q-SCM departs from existing quantum choice models, which either derive choice probabilities directly from the Born rule or insert a quantum interference term inside the choice equation
% the model preserves the interpretability of random utility discrete choice models while still capturing memory, phase, and cue order effects.
The driver's latent state is represented as a two-state quantum system on the Bloch sphere including neutral and defensive states. 
Perceptual cues, including separation distance, closing time-to-collision (CTTC), and lane deviation induce sequential unitary rotations governed by Pauli quantum state evolution operators. 
This formulation allows the model to capture memory, phase effects, cue order dependence, and transitions between behavioural regimes that depend on prior cue history.
% add the relaxation mech
%To ensure well behaved state evolution, we introduce three control mechanisms: a monotonicity constraint that prevents pendulum-like overshoot, a geodesic safeguard mechanism that ensures convergence toward the defensive state under sustained threat exposure, and a relaxation step that allows recovery toward the neutral baseline when the threat weakens.
The model is estimated using 85,754 observations from 9,610 drivers extracted from naturalistic trajectories. 
The empirical results show that defensive state formation is not governed only by the instantaneous values of traffic cues, but also by the accumulated cue history and the order in which cues are processed. 
In particular, state changing cues, represented by separation distance and CTTC, directly increase the defensive probability. 
The phase changing cue, represented by lane deviation, modifies how subsequent state changing cue are interpreted.
%These findings suggest that Q-SCM provides a behaviourally meaningful framework for representing memory, cue order effects, and evolving defensive appraisal in driver behaviour modelling.

\end{abstract}

\noindent
{keywords: Driver behaviour; latent class model; quantum probability theory; Quantum cognition; dynamic mental state}

% =========
\section{Introduction}

Latent class models provide a highly flexible framework to capture unobserved behavioural heterogeneity in discrete choices. 
In these models, observed choices are modelled as a mixture of class-specific choice probabilities, weighted by the probability that the decision maker belongs to each latent class~\cite{greene2003latent}. 
This structure is attractive for driver behaviour modelling because the same traffic interaction might be interpreted as routine by one driver and threatening by another.
% drivers may interpret the same traffic interaction differently. 
For example, 
% the same surrounding traffic conditions may be perceived as routine by one driver but threatening to another driver.
% In many applications, latent class membership could be specified as a function of observed individual characteristics, contextual variables, or traffic interaction indicators
in many applications, class membership is specified as a function of individual characteristics, contextual variables or contemporaneous traffic conditions~\cite{alhaideri2026dualState}. 
% This provides a flexible and interpretable way to explain heterogeneity across drivers or observations. 
However, driving can be seen as a sequential cognitive process rather than a series of independent responses to isolated observations. 
Previous cue exposure and the order of which traffic cues are processed can influence how a driver interprets the current situation. 
Consequently, two drivers facing the same current traffic conditions may have different probabilities of being in a neutral or defensive state because they reached those conditions through different sequences of prior cues.
% who has just experienced a rapidly closing vehicle may not process the next separation distance in the same way as a driver who reached that same separation distance after a stable interaction. 
Therefore, a model of mental state evolution should connect the observed maneuvre not only to the current traffic conditions, also to the evolving internal state produced by earlier perceptual information.
% of latent state membership is fundamental to modelling defensive driving behaviour.

Temporal dependence has traditionally been represented in classical latent class models through lagged variables, interaction terms, hidden Markov structures, and transition equations.
%Therefore, the motivation for this paper is not that classical probability models are incapable of representing memory, order dependence or state transitions. 
%Instead, 
These effects generally require an explicitly designed dynamic structure involving additional variables, parameters or model-specific assumptions. 
%The narrower comparison examined in 
%This work concerns the mechanisms used to generate the time-varying latent state probabilities. 
In the classical latent class model, the probability of belonging to the defensive state is specified as a logit function of contemporaneous traffic conditions and it is therefore memoryless in the present specification.
Our work concerns the mechanisms used to generate the time-varying latent state probabilities.
In the proposed formulation, the cognitive state is carried forward and updated sequentially as new perceptual cues are processed. We use quantum state and transition mechanism to encode this cognitive state.
% Both formulations retain the same class-specific random utility model (RUM) for the observed manoeuvre. 

The probability amplitudes determine the current neutral and defensive state probabilities.
The relative phase and non-commuting cue updates allow prior context and cue processing order to influence subsequent state evolution~\cite{pothos2013BBS,bruza2015quantum}. 
This controlled comparison isolates the class membership mechanism.
%In the classical benchmark, the membership is contemporaneous and memoryless in its present specification.
In the proposed mechanism, the membership allows the current state probabilities to depend on the accumulated sequence of prior cue updates.
% The probability amplitudes determine the current neutral and defensive state probabilities.
% The relative phase and non-commuting cue updates allow prior context and cue processing order to influence subsequent state evolution~\cite{pothos2013BBS,bruza2015quantum}.
The quantum component is used only to generate the time-varying latent state probabilities for class participation. %These probabilities are then passed to the same class-specific classical random utility model used in the benchmark that determines the probability of the observed manoeuvre. Therefore, this controlled comparison isolates the latent class membership mechanism while keeping the observed choice model unchanged. 

In this paper, we model the driver's observed manoeuvre choice at successive trajectory point over time. 
The observed maneuver is assumed to be influenced by an unobserved mental state, represented here as either neutral or defensive~\cite{alhaideri2026dualState}. 
A model intended to represent the evolution of these states must do more than relating state membership to the traffic conditions observed at a single trajectory point. It must allow the neutral and defensive states evolve over time, carry forward the effects of previous cue exposure, account for context and sequence in which cues are processed, and connect the resulting latent state to the observed manoeuvre.
Quantum probability theory (QPT) provides a compact mathematical structure for the sequential state evolution. 
%The term ``quantum'' in this paper does not imply that the brain is assumed to be a physical quantum system. The term ``quantum'' in this context does not imply that the brain is physically a quantum system. 
Quantum cognition instead uses the mathematical formalism of QPT to represent context-dependent judgment and decision processes
~\cite{busemeyer2015quantum_cognition,pothos2022quantum}
A quantum cognitive state is represented by a state vector that changes when information is processed.
The squared magnitudes of its probability amplitudes produce the observable probabilities of the latent states, while the relative phase retains contextual information that can influence subsequent updates~\cite{pothos2013BBS,bruza2015quantum}.
% uses the mathematical structure of QPT to describe cognitive phenomena that are difficult to represent parsimoniously using a single fixed classical probability space. 
% These phenomena include, order effects~\cite{trueblood2011quantum,wang2014context}, interference and conjunction-disjunction violations~\cite{pothos2013BBS}, entanglement-like context dependence in concept combinations~\cite{aerts2014entanglement} and state change in evaluation~\cite{bruza2015quantum}.
% In this framework, an individual's cognitive state is represented by a state vector, and an evaluation changes that state. 
% Then, the probability of a later judgment or action can depend on the sequence of prior evaluations.
% This distinction is important for driver behaviour. 
% In classical probability theory (CPT), events are represented within a common sample space.
% The probabilities are updated according to classical rules. 
% In QPT, cognitive evaluations can be incompatible~\cite{busemeyer2015quantum_cognition}.
% Incompatibility means that evaluating one aspect of a situation can change the state from which another aspect is evaluated. 
% This distinction is useful for modelling driver appraisal. Two drivers may have the same current defensive state probability but different internal phases because of their previous cue histories, and may therefore respond differently to the same subsequent cue. Moreover, cue-update operators associated with different cognitive effects need not commute.
Processing cue $A$ followed by cue $B$ can produce a different state from processing cue $B$ followed by cue $A$. 
In this way, state carry forward, contextual phase and cue order dependence emerge from one sequential state representation rather than from separate lag, interaction and transition specification.

This paper proposes a Quantum-Sequential Choice Model (Q-SCM) for modelling the evolution of neutral and defensive driver states established in \citet{alhaideri2026dualState}, in interactive traffic environments. 
The Q-SCM retains the standard latent class choice structure but replaces the conventional class membership function with a quantum cognitive state layer.
% The class membership is reformulated using a quantum cognitive state model while the conditional choice layer remains classical. 
% Once the probabilities of the latent states are obtained, the final action probability is computed as a weighted sum of class specific choice probabilities. 
% The proposed contribution focuses on how class membership probabilities are generated. 
% The Q-SCM represents class membership as an evolving cognitive state shaped by perceived sequential cue updates.
% An important feature of the proposed model is that the quantum mechanism is restricted to the class membership layer. The conditional choice layer remains a classical random utility model (RUM). 
Perceptual traffic cues update the driver's latent state sequentially.
The resulting neutral and defensive probabilities are obtained from the state amplitudes using the Born rule. 
These probabilities serve only as the time-varying class membership weights. 
Conditional on each latent state, the observed manoeuvre remains governed by a classical random utility maximization (RUM).
The final manoeuvre probability is obtained as the weighted sum of the state-specific choice probabilities. 
Thus, Q-SCM combines two probability frameworks with distinct roles. First, QPT which governs the evolution of latent class membership.
Second, the classical probability and random utility theory govern the observed manoeuvre choice. The model does not replace RUM based discrete choice modelling with a quantum choice rule.
The contributions of this paper are threefold:
\begin{itemize}
    \item Formulation of dynamic latent class membership as an evolving quantum cognitive state model that carries forward the effects of previous perceptual cue exposure within the latent class modelling framework.
    \item Sequential neutral--defensive state representation where state-changing and phase-changing updates allow the memory, context and cue effects to influence the class membership probabilities.
    \item  Estimation on naturalistic trajectory data and comparison with a classical latent class benchmark which uses the same conditional choice specification.
\end{itemize}
%First, the paper formulates latent class membership as an evolving quantum cognitive state model that carries forward the effects of previous perceptual cue exposure. 
%Second, it develops a sequential neutral--defensive state representation where state-changing and phase-changing updates allow the memory, context and cue effects to influence the class membership probabilities, while retaining a fully classical RUM for the observed manoeure. 
% Third, unlike existing quantum choice models, which obtain choice probabilities from the Born rule or embed a quantum interference term inside the choice equation over the alternatives, the Q-SCM confines the quantum mechanism to the class membership layer and retains a fully classical RUM choice layer. 
% This retains the interpretability and welfare-analytic properties of RUM discrete choice models while still capturing memory, phase, and cue order effects through the membership layer. 
%Third, it estimates the proposed model using. 
%Since the choice kernel is held fixed, differences between the models can be attributed to the mechanism used to generate the latent state probabilities instead of a different utility specification.
% it demonstrates the proposed formulation using naturalistic trajectory data and compares it with a classical latent class benchmark that uses the same conditional choice layer.

The remainder of the paper is organized as follows. 
Section~\ref{sec:lit_rev} reviews the relevant background literature.
Section~\ref{sec:framework} presents the proposed Q-SCM framework and explains how the quantum membership layer is combined with the classical manoeuvre choice layer. 
Section~\ref{sec:application} applies the proposed framework to a publicly-available naturalistic trajectory dataset.
Section~\ref{sec:results} presents the estimation results and discusses the behavioural interpretation of the model.
Section~\ref{sec:conc} concludes the paper.

% =====================================================
% add a comparison table for the reviewed studies at end of sec
\section{Background}
\label{sec:lit_rev}
% Existing models that use the QPT attempted to model how people make choices
Hanckock et al.~\cite{hancock2020quantum_trb} developed choice models based on QPT. 
Their models represent the preferences of an individual using a belief state (i.e., state vector) as a unit vector in a complex Hilbert space. 
In their model, each alternative is set at an axis. The choice probabilities are obtained by squaring the projection of that vector onto each alternative (i.e., Born rule).
They develop two versions of the model. The amplitude model captures the final belief state amplitudes as a function of attribute differences between alternatives. 
The Hamiltonian model evolves an initial belief state over time using the Schrödinger equation. 
% The attribute comparisons in this model are embedded in the Hamiltonian. 
% Both models can transform objective attribute differences such as differences in travel time or cost to subjective evaluations using several behavioural value functions.
The models also employ quantum rotations to capture context and order effects, such as changes in perspective when choosing the best alternative versus the worst alternative. 
However, these models are not RUMs as they replace the additive error and utility maximization structure with choice probabilities derived from probability amplitudes by the Born rule.
%As an extension to these quantum choice models, 
Hanckock et al.~\cite{hancock2020quantum_jocm} further extend the quantum amplitude framework for moral choices. Those choices represent ones that are defined by a change of perspective by the decision maker using two mechanisms. 
The first one applies a quantum rotation based on the Pauli matrices to the state vector when a morally charged trade-off is present which affects the choice probabilities of the alternatives. 
The second assigns a separate complex phase to each attribute specific value function. 
This allows moral attributes such as deaths and injuries, to be processed differently from concrete attributes like time and cost.

Epping et al.~\cite{epping2023open} compared three models that capture how individuals make choices between two alternatives. 
They focused on jointly predicting which alternative is chosen and how long does the decision takes. 
% Their models capture the gradual accumulation of preference as the individual weighs the alternatives until enough evidence builds up to select an alternative.
Vitetta~\cite{vitetta2016quantum} proposed a quantum utility model (QUM) for route choice when individuals retain multiple possible routes before making a final decision during the trip. 
The model extends classical RUM probability by adding a quantum interference term that captures the coexistence of intermediate route preferences. 
When the pre-trip decision is unique, the interference term collapses and the QUM reduces to conventional RUM.
However, when multiple alternatives coexist, the model departs from the classical RUM formulation.
Di Gangi and Vitetta~\cite{digangi2021quantum} proposed Quantum Utility integrated with a Random Utility Model (QUARUM) for route choice. 
The model retains the classical RUM structure and adds a QUM-derived interference term at the route generation level to represent interactions among candidate routes during perception.
After this perception stage, route choice is modelled using a RUM formulation where route overlap is accounted for by a covariance structure among alternatives.
Lipovetsky~\cite{lipovetsky2018quantum} used probability amplitudes to model when individuals make choices when selecting among competing products. 
In their formulation, each alternative corresponds to one component of a state vector representing the decision maker's preferences. Those components are complex probability amplitudes where the squared moduli produce the choice probabilities.
Yu and Jayakrishnan~\cite{yu2018quantum} proposed a quantum model where the traveler's mental state is represented as a state vector in a Hilbert space. 
The survey question and the real world decision were treated as two different measurements acting on that state. 
Each possible choice such as driving alone or taking a ride sharing service is represented as its own direction in this space. Then the probability of a choice is given by how closely the mental state vector aligns with that direction.

To the best of the authors' knowledge, there is no existing study that has formulated the quantum state vector as an evolving latent cognitive state which updates sequentially by perceptual traffic cues to generate class membership probabilities in a choice model.
In the reviewed quantum choice models, the state vector generally represents preferences over choice alternatives, the basis axes correspond to those alternatives, and choice probabilities are obtained from the Born rule. Thus, the quantum component operates at the level of the alternatives rather than at the level of an internal driver state that carries memory of prior cues before an action is chosen.
Table~\ref{tab:qscm_comparison} summarizes this distinction. Most reviewed quantum choice models replace the additive random utility and utility maximization structure with probabilities derived from quantum amplitudes. 
The main exceptions are the quantum utility models of Vitetta~\cite{vitetta2016quantum} and Di Gangi and Vitetta~\cite{digangi2021quantum}. They retain a RUM component but augment the alternative choice probability with a quantum interference term. In those models, the quantum and RUM components are blended within the same choice equation defined over alternatives.

\begin{landscape}
\begin{table}[!htbp]
\centering
\caption{Comparison of existing quantum-like choice models and proposed Q-SCM.}
\label{tab:qscm_comparison}
\footnotesize
\setlength{\tabcolsep}{4pt}
\renewcommand{\arraystretch}{1.15}
\begin{adjustbox}{max width=\linewidth}
\begin{tabular}{p{0.1\linewidth} p{0.10\linewidth} p{0.1\linewidth} p{0.11\linewidth} p{0.13\linewidth} p{0.11\linewidth} p{0.15\linewidth} p{0.11\linewidth}}
\toprule
\textbf{Paper} &
\textbf{Application} &
\textbf{State vector represents} &
\textbf{Basis axes represent} &
\textbf{Model memory} &
\textbf{Order effects} &
\textbf{Choice probability mechanism} &
\textbf{RUM structure?} \\
\midrule
Hancock et al.~\cite{hancock2020quantum_trb}
& Travel choice (SP)
& Decision maker's preference
& Choice alternatives, one axis per alternative
& Partial: Hamiltonian model evolves belief in time; amplitude model is static
& Perspective rotations
& Born rule
& \textbf{No} \\
\addlinespace
Hancock et al.~\cite{hancock2020quantum_jocm}
& Moral choices
& Decision maker's preference
& Choice alternatives
& No
& Pauli rotation and per-attribute phase
& Born rule 
& \textbf{No} \\
\addlinespace
Epping et al.~\cite{epping2023open}
& Binary choice and response time
& Accumulated preference
& Two alternatives, response / evidence states
& No
& No
& Open system quantum Markov evidence accumulation
& \textbf{No} \\
\addlinespace
Di Gangi \& Vitetta~\cite{digangi2021quantum}
& Path / route choice
& Traveler's route preference
& Candidate routes
& No 
& Via interference (not sequential cue order)
& RUM (logit) choice + quantum interference terms
& \textbf{Hybrid}: interference added in choice equation over alternatives (RUM is special case) \\
\addlinespace
Vitetta~\cite{vitetta2016quantum}
& Route choice
& Traveler's route preference
& Candidate routes
& No (intermediate decision levels only)
& Via interference
& RUM probability + interference terms
& \textbf{Hybrid}: interference added in choice equation of alternatives \\
\addlinespace
Lipovetsky~\cite{lipovetsky2018quantum}
& Product choice
& Decision maker's preferences
& Alternatives, one amplitude component per product
& No
& Entanglement / interference
& Squared moduli of complex amplitudes, augmenting MNL
& \textbf{No} \\
\addlinespace
Yu and Jayakrishnan~\cite{yu2018quantum}
& SP vs.\ RP mode choice
& Traveller's mental state
& Choice alternatives as directions, SP and RP are two measurements
& No
& Measurement order
& Born rule
& \textbf{No} \\
\midrule
\textbf{Q-SCM}
& Driver manoeuvre choice
& \textbf{Driver's internal latent mental state}
& \textbf{Two cognitive states}: neutral and defensive
& \textbf{Yes}: sequential Bloch rotations with state carry forward
& Non-commuting Pauli updates
& Quantum layer generates latent class membership only, classical MNL choice layer
& \textbf{Yes} in choice layer, quantum only in membership \\
\bottomrule
\end{tabular}
\end{adjustbox}
\end{table}
\end{landscape}

The proposed Q-SCM differs in both the role and location of the quantum component. 
The state vector represents the driver's internal mental state, defined here by neutral and defensive states, rather than the choice alternatives. 
The quantum layer only generates the time varying latent class membership probabilities.
The observed manoeuvre choice is still modelled through a classical RUM layer. 
Therefore, the quantum probabilities do not enter the choice equation directly.
They only determine the weights assigned to the neutral and defensive RUM sub-models. 
In this essence, Q-SCM keeps the quantum membership layer and the RUM action choice layer as two separate and stacked components, rather than fusing them within a single alternative level choice equation.

% --------------------------------------------------------------------------

\section{Methods}
\label{sec:framework}

Let $i=1,\ldots,I$ index drivers, $t=0,\ldots,T_i$ index trajectory samples within the driving episode of driver $i$, and $j\in\mathcal{A}$ index the observed manoeuvre alternatives. 
Also, let $\mathcal{S}=\{N,D\}$ represent the two latent mental states, where $N$ and $D$ represent the neutral and defensive states, respectively.
The Q-SCM retains the standard latent class choice structure. The probability that driver $i$ selects manoeuvre $j$ at trajectory sample $t$ is~\cite{greene2003latent}:
\begin{equation}
P_{ijt} = \sum_{s\in\mathcal{S}} P_{ijt\mid s}P_{ist}
\label{eq:latent_class_choice}
\end{equation}
where $P_{ist}$ is probability that driver $i$ is in latent mental state $s$ at trajectory sample $t$. $P_{ijt\mid s}$ is the probability that driver $i$ select manoeuvre $j$ conditional on being in latent mental state $s$. 
For ease of reference, a consolidated summary of all the indices, sets, variables, operators, probabilities, and parameters used throughout the paper are provided in Appendix Table~\ref{tab:notation_summary}.

The distinguishing feature of Q-SCM lies in the generation of the time varying latent state membership probabilities $P_{ist}$. 
Rather than specifying these probabilities as direct functions of contemporaneous covariates, Q-SCM represents the driver's latent mental state as an evolving quantum state vector. 
Perceptual traffic cues update this state sequentially.
The resulting neutral and defensive probabilities are used as the latent class membership weights in Eq.~\eqref{eq:latent_class_choice}. 
The conditional manoeuvre choice probabilities $P_{ijt\mid s}$ remain governed by a classical random utility model.

The remainder of this section is organized in two parts. We first provide a concise overview of the five connected steps in the Q-SCM architecture for computing class membership and summarize their relationships in Figure~\ref{fig:qscm_architecture}. 
The subsequent subsections then develop each component in detail, including representation and initialization of the latent mental state, the sequential cue update process, the model control and relaxation mechanisms, and the likelihood and estimation procedure. The Q-SCM transforms perceptual cue information into an observed manoeuvre
probability through five connected steps.

\medskip

\noindent\textbf{Step 1: Initial state specification.}
At the beginning of the observed driving episode, driver $i$ is assigned an initial latent mental state:
\begin{equation}
\ket{\psi_{i0,0}} = \mathcal{U}_{\mathrm{init}}(\boldsymbol{\zeta}_i)\ket{N},
\label{eq:architecture_initialization}
\end{equation}
where $\mathcal{U}_{\mathrm{init}}(\boldsymbol{\zeta}_i)$ is the selected initialization operator and $\boldsymbol{\zeta}_i$ contains its governing parameters. 
The framework allows different initial state specifications, such as a fully neutral state, an equal neutral--defensive superposition using a Hadamard gate, or an estimated initial state. Initialization is applied
only once, at $t=0$.

\medskip

\noindent\textbf{Step 2: Cue encoding.}
At trajectory sample $t$, let $c_{it,m}\in\mathcal{K}$ represent the type of cue processed at within-sample update $m$.
The $\mathcal{K}$ is the set of admissible cue types. 
Let $x_{it,m}$ represent the observed input associated with that cue. The raw cue input is first transformed to a cue-specific representation:
\begin{equation}
s_{it,m} = f_{c_{it,m}} \left( x_{it,m}; \boldsymbol{\xi}_{c_{it,m}} \right),
\label{eq:general_cue_representation}
\end{equation}
% \noindent
where $f_c(\cdot)$ is the cue encoding function and $\boldsymbol{\xi}_c$ contains its parameters. 

The encoded cue is then mapped to a rotation angle:
\begin{equation}
\theta_{it,m} = g_{c_{it,m}} \left( s_{it,m}; \boldsymbol{\rho}_{c_{it,m}} \right),
\label{eq:general_rotation_mapping}
\end{equation}
where $g_c(\cdot)$ is a cue-specific rotation-response function and $\boldsymbol{\rho}_c$ contains its parameters. The cue is also assigned to a rotation axis by:
\begin{equation}
\mathfrak{a}(c_{it,m}) \in \{x,z\}.
\label{eq:architecture_axis_assignment}
\end{equation}
The cue encoding functions, rotation-response functions, and axis
assignments are specified separately for each empirical application.

\medskip

\noindent\textbf{Step 3: Sequential state updates.}
Each encoded cue produces a candidate update of the current latent mental
state:
\begin{equation}
\ket{\psi^{*}_{it,m+1}} = \mathcal{U}_{\mathfrak{a}(c_{it,m})} \left( \theta_{it,m} \right) \ket{\psi_{it,m}}, 
\label{eq:architecture_candidate_update}
\end{equation}
where:
\begin{equation}
\mathcal{U}_a(\theta_{it,m}) = \exp\left(-i\frac{\theta_{it,m}}{2}\sigma_a \right), \qquad a\in\{x,z\}.
\label{eq:architecture_single_axis_operator}
\end{equation}
The candidate state is evaluated by the model control operation $\mathcal{C}_{it,m}$, which incorporates the monotonicity constraint and, when required, the geodesic safeguard:
\begin{equation}
\ket{\psi_{it,m+1}} = \mathcal{C}_{it,m}
\left( \ket{\psi^{*}_{it,m+1}}, \ket{\psi_{it,m}} \right).
\label{eq:architecture_controlled_update}
\end{equation}
The cue update process is repeated for $m=0,\ldots,M_{it}-1$ until all cues at the trajectory sample $t$ have been processed.

\medskip

\noindent\textbf{Step 4: Relaxation and state carry forward.}
After all $M_{it}$ within-sample cue updates have been implemented, the pre-relaxation state is defined as:
\begin{equation}
\ket{\widetilde{\psi}_{it}} = \ket{\psi_{it,M_{it}}}.
\label{eq:architecture_pre_relaxation_state}
\end{equation}
The relaxation operation then gives the final latent mental state at trajectory sample $t$:
\begin{equation}
\ket{\psi_{it}} = \mathcal{R}_{\lambda} \left( \ket{\widetilde{\psi}_{it}}, \ket{\psi_{i0,0}} \right),
\label{eq:architecture_relaxation}
\end{equation}
where $\mathcal{R}_{\lambda}$ is normalized interpolation toward the initial baseline state $\ket{\psi_{i0,0}}$.
The $\lambda$ governs the strength of relaxation. The resulting state $\ket{\psi_{it}}$ is the final state at trajectory sample $t$ after all cue updates and relaxation have been applied.

This final state is carried forward as the start-of-sample state for the next trajectory sample:
\begin{equation}
\ket{\psi_{i,t+1,0}} = \ket{\psi_{it}}, \qquad t=0,\ldots,T_i-1.
\label{eq:architecture_carry_forward}
\end{equation}
Therefore, the initialization is applied only once at $t=0$.
The cue update, model control, relaxation, and state carry forward operations are repeated throughout the observed driving episode.

\medskip

\noindent\textbf{Step 5: Probability extraction and manoeuvre choice.}
The final latent mental state produces neutral and defensive latent state membership probabilities by the Born rule:
\begin{equation}
P_{iNt} = \left| \braket{N}{\psi_{it}} \right|^2, \qquad P_{iDt} = \left| \braket{D}{\psi_{it}} \right|^2.
\label{eq:architecture_membership_probabilities}
\end{equation}
These probabilities are combined with the classical state conditional
manoeuvre choice probabilities:
\begin{equation}
P_{ijt} = P_{iNt}P_{ijt\mid N} + P_{iDt}P_{ijt\mid D}.
\label{eq:architecture_final_probability}
\end{equation}

% In Figure~\ref{fig:qscm_circuit}, the steps and colour coding are linked to the architecture diagram in Figure~\ref{fig:qscm_architecture}. 
% The notation $R_a(\theta)$ denotes a rotation of the cognitive state by angle $\theta$ around axis $a\in\{x,y,z\}$ on the Bloch sphere. 
% % the step 1 includes initialization + carry forward
% The Step~1 gate initializes the driver's cognitive state by setting the baseline mixture between the neutral state $\ket{N}$ and defensive state $\ket{D}$. This is achieved by applying the unitary matrix operation on \ket{\psi} to rotate around y-axis:

\newpage
Figure~\ref{fig:qscm_architecture} shows the five steps of the Q-SCM architecture. 
Steps~1 and~2 provide the initial state and cue-specific update parameters to the sequential state evolution process in Step~3.
After all within-sample cues have been processed, Step~4 applies relaxation and carries the final state forward to the next trajectory sample. 
Step~5 converts the final state into latent state membership probabilities and combines them with the state-conditional manoeuvre choice probabilities.

\begin{figure}[H]
\centering
\resizebox{0.8\textwidth}{!}{%
\begin{tikzpicture}[
    arrow/.style={->,thick},
    stepone/.style={
        rectangle,
        rounded corners,
        draw=red!60!black,
        fill=red!10,
        align=center,
        minimum width=7.0cm,
        minimum height=1.8cm,
        font=\small
    },
    steptwo/.style={
        rectangle,
        rounded corners,
        draw=blue!60!black,
        fill=blue!10,
        align=center,
        minimum width=7.0cm,
        minimum height=1.8cm,
        font=\small
    },
    stepthree/.style={
        rectangle,
        rounded corners,
        draw=green!50!black,
        fill=green!10,
        align=center,
        minimum width=10.5cm,
        minimum height=2.0cm,
        font=\small
    },
    stepfour/.style={
        rectangle,
        rounded corners,
        draw=orange!80!black,
        fill=orange!12,
        align=center,
        minimum width=10.5cm,
        minimum height=1.9cm,
        font=\small
    },
    stepfive/.style={
        rectangle,
        rounded corners,
        draw=purple!70!black,
        fill=purple!10,
        align=center,
        minimum width=10.5cm,
        minimum height=1.9cm,
        font=\small
    }
]

\node[stepone] (init) at (-4.3,0) {
\textbf{Step 1: Initial-state specification}\\
\textit{Applied once at $t=0$}\\
$\ket{\psi_{i0,0}} = \mathcal{U}_{\mathrm{init}}(\boldsymbol{\zeta}_i)\ket{N}$
};

\node[steptwo] (encoding) at (4.3,0) {
\textbf{Step 2: Cue encoding}\\
$s_{it,m} = f_{c_{it,m}}
(x_{it,m};\boldsymbol{\xi}_{c_{it,m}})$\\
$\theta_{it,m} = g_{c_{it,m}}
(s_{it,m};\boldsymbol{\rho}_{c_{it,m}})$,
\qquad
$\mathfrak{a}(c_{it,m})\in\{x,z\}$
};

\node[stepthree] (update) at (0,-4.0) {
\textbf{Step 3: Sequential state updates}\\
$\ket{\psi^{*}_{it,m+1}}
= \mathcal{U}_{\mathfrak{a}(c_{it,m})} (\theta_{it,m})
\ket{\psi_{it,m}}$\\
$\ket{\psi_{it,m+1}} = \mathcal{C}_{it,m} (\ket{\psi^{*}_{it,m+1}},\ket{\psi_{it,m}})$
};

\node[stepfour] (relaxation) at (0,-7.4) {
\textbf{Step 4: Relaxation and state carry forward}\\
$\ket{\widetilde{\psi}_{it}} = \ket{\psi_{it,M_{it}}}$,
\qquad
$\ket{\psi_{it}} = \mathcal{R}_{\lambda} (\ket{\widetilde{\psi}_{it}},\ket{\psi_{i0,0}})$
};

\node[stepfive] (choice) at (0,-10.8) {
\textbf{Step 5: Probability extraction and manoeuvre choice}\\
$P_{iNt}=|\braket{N}{\psi_{it}}|^2$,
\qquad
$P_{iDt}=|\braket{D}{\psi_{it}}|^2$\\
$P_{ijt} = P_{iNt}P_{ijt\mid N} + P_{iDt}P_{ijt\mid D}$
};

\draw[arrow]
(init.south)
--
node[
    midway,
    left=3pt,
    font=\scriptsize,
    align=center
]
{state input\\applied once at $t=0$}
(update.north west);

\draw[arrow]
(encoding.south)
--
node[
    midway,
    right=3pt,
    font=\scriptsize,
    align=center
]
{cue-specific\\angle and axis}
(update.north east);

\draw[arrow] (update) -- (relaxation);
\draw[arrow] (relaxation) -- (choice);

\draw[
    arrow,
    dashed
]
([yshift=-0.45cm]update.west)
to[
    out=180,
    in=180,
    looseness=1.5
]
node[
    left,
    font=\scriptsize,
    align=center
]
{ next cue\\ $m\leftarrow m+1$ }
([yshift=0.45cm]update.west);

\draw[
    arrow,
    dashed
]
(relaxation.east)
to[
    out=0,
    in=0,
    looseness=1.6
]
node[
    right,
    font=\scriptsize,
    align=center
]
{
state carry forward\\
$\ket{\psi_{i,t+1,0}}=\ket{\psi_{it,m}}$
}
(update.east);

\end{tikzpicture}%
}

\caption{General architecture of the Q-SCM.}
\label{fig:qscm_architecture}
\end{figure}

The remainder of this section develops the components summarized in
Figure~\ref{fig:qscm_architecture}. 
First we define the latent mental state representation and its initialization. 
Then, we formulate the sequential state evolution process, including the cue-induced rotations, monotonicity constraint, geodesic safeguard, relaxation, and state carry forward mechanisms. 
Finally, we present the likelihood function and estimation procedure.

%In applied quantum mechanics, a circuit is the standard notation used to show how a state is transformed through an ordered sequence of operations. 
%Each operation is shown as a unitary gate acting on the state. 

\subsection{Mental State Representation}
\label{sec:mental_state}

In its most general form, the latent mental state can be represented as a
normalized vector in a $K$-dimensional complex Hilbert space:
\begin{equation}
\mathcal{H}_K = \operatorname{span} \left\{ \ket{S_1}, \ldots, \ket{S_K} \right\}
\label{eq:general_hilbert_space}
\end{equation}
where each basis vector $\ket{S_k}$ corresponds to one of $K$ possible latent mental states. The state of driver $i$ at trajectory sample $t$ and within-sample update $m$ can be expressed as:
\begin{equation}
\ket{\psi_{it,m}} = \sum_{k=1}^{K} \alpha_{k,it,m}\ket{S_k}, \label{eq:general_cognitive_state}
\end{equation}
where $\alpha_{k,it,m}\in\mathbb{C}$ is the complex probability amplitude associated
with the state $S_k$. 

Normalization requires:
\begin{equation}
\sum_{k=1}^{K} \left| \alpha_{k,it,m} \right|^2 = 1.
\label{eq:general_state_normalization}
\end{equation}
Consequently, the squared modulus $\left|\alpha_{k,it,m}\right|^2$ gives the latent state membership probability associated with basis state $\ket{S_k}$.
The present paper develops the foundational two-state formulation were $K=2$ \cite{alhaideri2026dualState}. The two basis states represent neutral $\ket{N}$ and defensive $\ket{D}$ driving:
\begin{equation}
\ket{N} = \begin{bmatrix} 1\\ 0
\end{bmatrix},
\qquad
\ket{D} = \begin{bmatrix} 0\\ 1
\end{bmatrix}.
\label{eq:basis_states}
\end{equation}
The latent mental state of driver $i$ at trajectory sample $t$ after $m$ within-sample cue updates can be written as:
\begin{equation}
\ket{\psi_{it,m}} = \alpha_{it,m}\ket{N} + \beta_{it,m}\ket{D},
\label{eq:two_state_representation}
\end{equation}
subject to:
\begin{equation}
\left|\alpha_{it,m}\right|^2 + \left|\beta_{it,m}\right|^2 = 1.
\label{eq:two_state_normalization}
\end{equation}
Here, $\alpha_{it,m}$ and $\beta_{it,m}$ are complex probability amplitudes. Their squared moduli produce the neutral and defensive state membership probabilities.
Also, their relative phase influence how the latent mental state responds to subsequent cue updates.
For the two-state formulation, the same latent mental state can be parameterized using a polar angle $\vartheta_{it,m}\in[0,\pi]$ and an azimuthal angle $\varphi_{it,m}\in[0,2\pi)$:
\begin{equation}
\ket{\psi_{it,m}} = \cos\left(\frac{\vartheta_{it,m}}{2}\right)\ket{N} + e^{i\varphi_{it,m}} \sin\left(\frac{\vartheta_{it,m}}{2}\right)\ket{D}.
\label{eq:bloch_parameterization}
\end{equation}
The corresponding neutral and defensive state membership probabilities
are:
\begin{equation}
\left|\braket{N}{\psi_{it,m}}\right|^2 = \cos^2\left(\frac{\vartheta_{it,m}}{2}\right), \qquad \left|\braket{D}{\psi_{it,m}}\right|^2 = \sin^2\left(\frac{\vartheta_{it,m}}{2}\right).
\label{eq:bloch_intermediate_probabilities}
\end{equation}
The polar angle $\vartheta_{it,m}$ is the neutral--defensive probability balance.
The azimuthal angle $\varphi_{it,m}$ is the relative phase. Two latent mental states may have the same current defensive state probability but different relative phases and may consequently respond differently to the same subsequent cue. The complete Bloch vector representation is provided in Appendix~\ref{sec:bloch}.
% add fig
The north and south poles of the Bloch sphere correspond to the neutral state $\ket{N}$ and defensive state $\ket{D}$, respectively. 
The polar angle $\vartheta_{it,m}$ determines the neutral--defensive probability
balance.
The azimuthal angle $\varphi_{it,m}$ represents the relative phase. Consequently, two latent mental states may have the same current defensive state probability but respond differently to a subsequent cue because their relative phases differ, as illustrated in Figure~\ref{fig:bloch_sphere}.

\begin{figure}[!htbp]
\centering
\includegraphics[
    width=0.4\linewidth,
    trim=1.4cm 1cm 1.4cm 1.9cm,
    clip
]{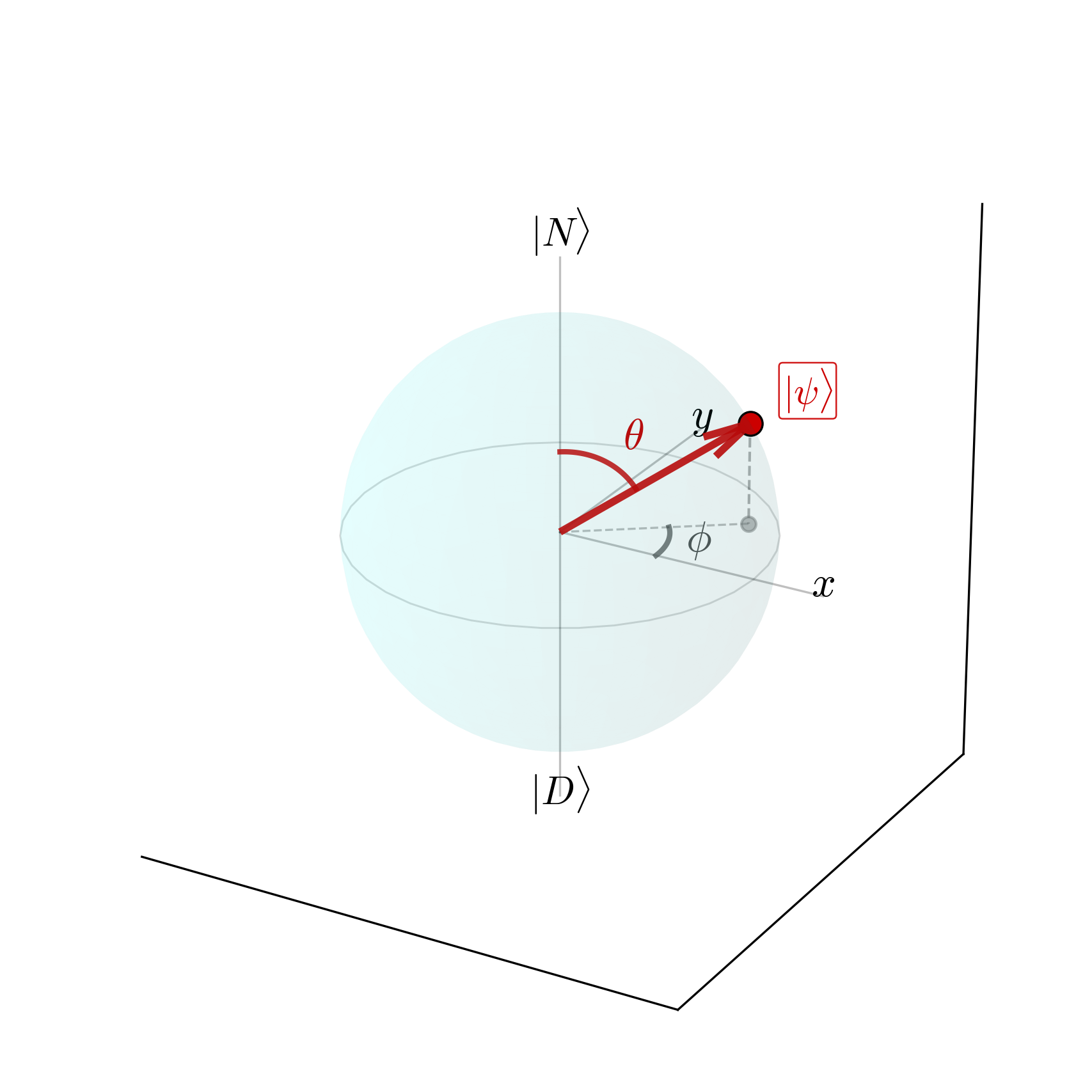}
\caption{Bloch sphere representation of the two-state Q-SCM latent mental
state.}
\label{fig:bloch_sphere}
\end{figure}

\subsection{Sequential Mental State Evolution}
\label{sec:state_evolution}

Each observed driving episode for driver $i$ consists of trajectory samples $t=0,\ldots,T_i$. 
At sample $t$, driver process $M_{it}$ perceptual cues sequentially. 
The index $m=0,\ldots,M_{it}$ identifies the intermediate latent mental states in the same trajectory sample. 
Specifically, $\ket{\psi_{it,0}}$ denotes the state before any cue at sample $t$ has been processed.
The $\ket{\psi_{it,M_{it}}}$ denotes the state after all $M_{it}$ cues have been processed.
The cue update index runs on $m=0,\ldots,M_{it}-1$, since the cue processed at update $m$ transforms the current state $\ket{\psi_{it,m}}$ into the next state $\ket{\psi_{it,m+1}}$. 
At $t=0$, the start-of-sample state $\ket{\psi_{i0,0}}$ is obtained from the initialization mechanism. In each subsequent trajectory sample, $\ket{\psi_{it,0}}$ is obtained by carrying forward the final state from the preceding sample.
Let $c_{it,m}\in\mathcal{K}$ denote the type of cue processed at update $(i,t,m)$, where $\mathcal{K}$ is the set of admissible cue types. 

\newpage
As defined in Eqs.~\eqref{eq:general_cue_representation} and \eqref{eq:general_rotation_mapping}, the observed cue input $x_{it,m}$ is first transformed into the encoded cue representation $s_{it,m}$ and then mapped to the rotation angle $\theta_{it,m}$. Each cue type is assigned to a rotation axis by:
\begin{equation}
\mathfrak{a}:\mathcal{K}\rightarrow\{x,z\},
\qquad
c_{it,m}\mapsto\mathfrak{a}(c_{it,m}).
\label{eq:axis_mapping}
\end{equation}
The cue encoding functions, rotation-response functions, axis assignments, and within-sample cue-processing order are application dependent modelling choices.
For the two-state Q-SCM, the cue-induced rotations are generated using the Pauli matrices:
\begin{equation}
\sigma_x = \begin{bmatrix} 0 & 1\\ 1 & 0
\end{bmatrix},
\qquad
\sigma_y = \begin{bmatrix} 0 & -i\\ i & 0
\end{bmatrix},
\qquad
\sigma_z = \begin{bmatrix} 1 & 0\\ 0 & -1
\end{bmatrix}.
\label{eq:pauli_matrices}
\end{equation}
Although cue types are assigned only to the $x$- and $z$-axes in the
present formulation, $\sigma_y$ is included because it emerges from the
non-commutativity of the $x$- and $z$-axis generators.
The unitary operator associated with cue $c_{it,m}$ is: 
\begin{equation}
\mathcal{U}_{it,m} \equiv \mathcal{U}_{\mathfrak{a}(c_{it,m})} \left( \theta_{it,m} \right)
=
\exp\left[
-i\frac{\theta_{it,m}}{2}
\sigma_{\mathfrak{a}(c_{it,m})}
\right].
\label{eq:cue_unitary_operator}
\end{equation}
This operator satisfies the unitarity condition:

\begin{equation}
\mathcal{U}_{it,m}^{\dagger}\mathcal{U}_{it,m} = I_2,
\label{eq:cue_unitarity_condition}
\end{equation}
where $\mathcal{U}_{it,m}^{\dagger}$ is the conjugate transpose of
$\mathcal{U}_{it,m}$ and $I_2$ is the two dimensional identity matrix. 
Therefore, each cue-induced rotation preserves the norm of the latent mental state and the total probability associated with the neutral and defensive basis states.
For a state-changing cue assigned to the $x$-axis, the rotation operator
is:
\begin{equation}
\mathcal{U}_x(\theta_{it,m}) = \begin{bmatrix}
\cos\left(\frac{\theta_{it,m}}{2}\right)
&
-i\sin\left(\frac{\theta_{it,m}}{2}\right)
\\[1ex]
-i\sin\left(\frac{\theta_{it,m}}{2}\right)
&
\cos\left(\frac{\theta_{it,m}}{2}\right)
\end{bmatrix}.
\label{eq:x_rotation_matrix}
\end{equation}

% Applying this operator to the current state:
% \begin{equation}
% \ket{\psi_{it,m}} = \alpha_{it,m}\ket{N} + \beta_{it,m}\ket{D}
% \end{equation}

\newpage
Using the representation in
Eq.~\eqref{eq:two_state_representation}, application of the
$x$-axis rotation produces the candidate amplitudes:
\begin{equation}
\begin{aligned}
\alpha^{*}_{it,m+1}
&=
\cos\left(\frac{\theta_{it,m}}{2}\right)\alpha_{it,m}
-
i\sin\left(\frac{\theta_{it,m}}{2}\right)\beta_{it,m},
\\
\beta^{*}_{it,m+1}
&=
-i\sin\left(\frac{\theta_{it,m}}{2}\right)\alpha_{it,m}
+
\cos\left(\frac{\theta_{it,m}}{2}\right)\beta_{it,m}.
\end{aligned}
\label{eq:x_rotation_amplitudes}
\end{equation}
Since $x$-axis rotation mixes the neutral and defensive amplitudes, it can directly change their squared moduli and therefore change the neutral--defensive probability balance.
For a phase-changing cue assigned to the $z$-axis, the rotation operator
is:
\begin{equation}
\mathcal{U}_z(\theta_{it,m}) = \begin{bmatrix}
e^{-i\frac{\theta_{it,m}}{2}} & 0
\\[1ex]
0 & e^{i\frac{\theta_{it,m}}{2}}
\end{bmatrix}.
\label{eq:z_rotation_matrix}
\end{equation}
Applying the $z$-axis rotation gives:
\begin{equation}
\alpha^{*}_{it,m+1} = e^{-i\frac{\theta_{it,m}}{2}}\alpha_{it,m}, \qquad \beta^{*}_{it,m+1} = e^{i\frac{\theta_{it,m}}{2}}\beta_{it,m}.
\label{eq:z_rotation_amplitudes}
\end{equation}
Consequently:
\begin{equation}
\left|\alpha^{*}_{it,m+1}\right|^2 = \left|\alpha_{it,m}\right|^2,
\qquad \left|\beta^{*}_{it,m+1}\right|^2 = \left|\beta_{it,m}\right|^2.
\label{eq:z_rotation_probability_invariance}
\end{equation}
Therefore, a $z$-axis rotation does not directly change the pre-relaxation neutral- or defensive state probabilities. 
It changes the  relative phase between the two amplitudes. Relaxation map acts on the complex amplitudes, this phase can influence the post relaxation
state at the current trajectory sample. The phase is also retained through
state carry forward and can influence subsequent state-changing updates.
For either axis assignment, the cue-induced candidate update is:
\begin{equation}
\ket{\psi^{*}_{it,m+1}} = \mathcal{U}_{\mathfrak{a}(c_{it,m})}
\left( \theta_{it,m} \right) \ket{\psi_{it,m}},
\qquad
m=0,\ldots,M_{it}-1.
\label{eq:proposed_cue_update}
\end{equation}

The superscript $*$ indicates that $\ket{\psi^{*}_{it,m+1}}$ is a candidate state produced by unitary cue-induced rotation. It is not necessarily the accepted state because the candidate update must first be evaluated by the model control operation.
The accepted state is therefore defined as:

\begin{equation}
\ket{\psi_{it,m+1}} = \mathcal{C}_{it,m} \left( \ket{\psi^{*}_{it,m+1}}, \ket{\psi_{it,m}} \right),
\label{eq:controlled_cue_update}
\end{equation}
where $\mathcal{C}_{it,m}$ incorporates the monotonicity constraint and, if required, geodesic safeguard. These mechanisms determine if the candidate update is accepted directly or redirected before the next cue is processed. These are developed in the upcoming subsections.
In addition, the order in which cues are processed can affect the resulting latent mental state. 
For instance, suppose the driver first perceives a state-changing threat cue assigned to the $x$-axis and then processes a phase-changing contextual cue assigned to the $z$-axis. 
Processing the same two cues in the reverse order may produce a different final state because the $x$- and $z$-axis generators do not commute:
\begin{equation}
[\sigma_x,\sigma_z] = \sigma_x\sigma_z-\sigma_z\sigma_x
= -2i\sigma_y \neq 0.
\label{eq:pauli_noncommutativity}
\end{equation}
Consequently, in general:
\begin{equation}
\mathcal{U}_x(\theta_x)\mathcal{U}_z(\theta_z) \neq \mathcal{U}_z(\theta_z)\mathcal{U}_x(\theta_x).
\label{eq:rotation_order_dependence}
\end{equation}
Therefore, processing a state-changing cue followed by a phase-changing cue need not produce the same latent mental state as processing the same two cues in the reverse order. This noncommutativity provides the Q-SCM mechanism through which within-sample cue order and prior cue exposure influence the subsequent evolution of the driver's latent mental state.

\subsubsection{Monotonicity Constraint Mechanism}
\label{sec:monotonicity_constraint}

The unitary cue induced rotations are periodic. 
Therefore, repeatedly applying a state-changing rotation could move the latent mental state beyond the defensive pole and begin reducing the defensive state probability, even though if the processed cue continues to represent a threat. Such behaviour of an increase followed by a decrease would be inconsistent with the intended behavioural interpretation of a state-changing threat cue.
To prevent this rotational overshoot, the Q-SCM evaluates every candidate
state-changing update before it is accepted. Let:
\begin{equation}
p^{D}_{it,m} = \left| \braket{D}{\psi_{it,m}} \right|^2
\label{eq:current_intermediate_defensive_probability}
\end{equation}
represnet the defensive state probability before cue $m$ is processed, and
let:
\begin{equation}
p^{D,*}_{it,m+1} = \left| \braket{D}{\psi^{*}_{it,m+1}} \right|^2
\label{eq:candidate_intermediate_defensive_probability}
\end{equation}
represent the defensive state probability associated with the candidate state generated by Eq.~\eqref{eq:proposed_cue_update}. The superscript $*$ indicates that this probability is obtained before application of the model control operation.
For a state-changing threat cue assigned to the $x$-axis, the
monotonicity operation is defined as:
\begin{equation}
\mathcal{M}_{it,m} \left( \ket{\psi^{*}_{it,m+1}},
\ket{\psi_{it,m}} \right) = \begin{cases} \ket{\psi^{*}_{it,m+1}},
&p^{D,*}_{it,m+1} \geq p^{D}_{it,m}, \\[1ex] \ket{\psi_{it,m}},
& p^{D,*}_{it,m+1} < p^{D}_{it,m}.
\end{cases}
\label{eq:monotonicity_operation}
\end{equation}
Phase-changing cues assigned to the $z$-axis do not require a separate monotonicity test because, as shown in
Eq.~\eqref{eq:z_rotation_probability_invariance}.
These leave the pre-relaxation Neutral- and Defensive state probabilities unchanged. The candidate update is accepted directly. Its phase effect may nevertheless influence the post relaxation state because the relaxation
operation acts on the complex amplitudes.

\begin{proposition}
\label{prop:cue_update_monotonicity}
Every controlled cue update performed before relaxation preserves or
increases the Defensive state probability.
\end{proposition}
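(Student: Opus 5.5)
The plan is a case analysis on the axis assignment $\mathfrak{a}(c_{it,m})\in\{x,z\}$ of the cue processed at update $m$. For each case I will show that the accepted state $\ket{\psi_{it,m+1}}=\mathcal{C}_{it,m}(\ket{\psi^{*}_{it,m+1}},\ket{\psi_{it,m}})$ satisfies $p^{D}_{it,m+1}\ge p^{D}_{it,m}$. Here $p^{D}_{it,m+1}=|\braket{D}{\psi_{it,m+1}}|^2$, as in Eq.~\eqref{eq:current_intermediate_defensive_probability}. Chaining these inequalities over $m=0,\ldots,M_{it}-1$ then shows that the defensive probability is non-decreasing along the whole within-sample sequence, up to the pre-relaxation state $\ket{\widetilde{\psi}_{it}}$ of Eq.~\eqref{eq:architecture_pre_relaxation_state}. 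Relaxation is excluded, so the statement stops there.

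\emph{Phase-changing cue ($z$-axis).} No monotonicity test is applied and the candidate is accepted directly, so $\ket{\psi_{it,m+1}}=\mathcal{U}_z(\theta_{it,m})\ket{\psi_{it,m}}$. By Eq.~\eqref{eq:z_rotation_probability_invariance}, $|\beta^{*}_{it,m+1}|^2=|\beta_{it,m}|^2$. Hence equality holds, and the defensive probability is preserved.

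\emph{State-changing cue ($x$-axis).} The control operation first applies $\mathcal{M}_{it,m}$ from Eq.~\eqref{eq:monotonicity_operation}, and I split on its two branches. If $p^{D,*}_{it,m+1}\ge p^{D}_{it,m}$, the candidate is accepted and the inequality holds by the branch condition. Otherwise the previous state is retained, giving $p^{D}_{it,m+1}=p^{D}_{it,m}$. In both branches the defensive probability is preserved or increased.

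\textbf{Main obstacle.} The difficulty is that $\mathcal{C}_{it,m}$ may also invoke the geodesic safeguard, which is only previewed here and not yet defined. The proof therefore has to use the property, to be checked against its later definition, that the safeguard redirects the state along the geodesic toward $\ket{D}$. Moving along such a geodesic strictly decreases the polar distance to the defensive pole. Since $p^{D}=\sin^2(\vartheta/2)$ by Eq.~\eqref{eq:bloch_intermediate_probabilities}, and this is increasing in $\vartheta\in[0,\pi]$, the defensive probability cannot decrease. If the safeguard is instead applied to the accepted output of $\mathcal{M}_{it,m}$ and only moves toward $\ket{D}$, the same monotonicity of $\sin^2(\vartheta/2)$ closes the argument.
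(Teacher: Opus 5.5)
Your proposal follows the same route as the paper's proof: split on the axis assignment, get equality for $z$-axis cues from Eq.~\eqref{eq:z_rotation_probability_invariance}, read the inequality off the acceptance condition for accepted $x$-axis candidates, and cover rejected candidates by the geodesic safeguard (the ``retain'' outcome you analyse occurs in Eq.~\eqref{eq:geodesic_control_operation} only when $\phi_{it,m}=0$). When you check the safeguard, make explicit that the no-overshoot property comes from the cap $\delta_{it,m}=\min\{\theta_{it,m},\phi_{it,m}\}$, which leaves distance $\phi_{it,m}-\delta_{it,m}\in[0,\phi_{it,m}]$ to the pole and hence $p^{D}_{it,m+1}=\tfrac{1}{2}\bigl(1+\cos(\phi_{it,m}-\delta_{it,m})\bigr)\geq p^{D}_{it,m}$; without the cap, moving ``along the geodesic toward $\ket{D}$'' by an angle larger than $2\phi_{it,m}$ would pass the pole and lower $p^{D}$.
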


The guarantee is non-strict, so a controlled update may leave the Defensive state probability unchanged. It applies only to the cue driven evolution before relaxation.
The subsequent relaxation operation may reduce the Defensive state probability by moving the state toward its
baseline. A proof of Proposition~\ref{prop:cue_update_monotonicity} is provided in
Appendix~\ref{app:formal_guarantees}.

\subsubsection{Geodesic Safeguard Mechanism}
\label{sec:geodesic_safeguard}

The monotonicity operation prevents an accepted cue-induced rotation to reduce defensive state probability when a persistent threat is prsent. 
However, simply retaining the current state when a candidate rotation is blocked does not guarantee continued progress toward the defensive pole under sustained threat exposure.
This limitation is geometric. A rotation around a fixed Pauli axis moves the Bloch vector in a circular path around that axis. 
In some locations on the Bloch sphere, the assigned $x$-axis rotation could begin
to move around the defensive pole rather than moving closer to it. 
The monotonicity operation correctly identifies and blocks that candidate update.
Howver, the repeated blocking can cause the state to stall at a defensive state probability below one. Figure~\ref{fig:geodesic_safeguard} shows this problem and the proposed mechanism.
\begin{figure}[!htbp]
\centering

\begin{subfigure}[b]{0.4\textwidth}
    \centering
    \includegraphics[
        width=\textwidth,
        trim=0 1cm 0 0.4cm,
        clip
    ]{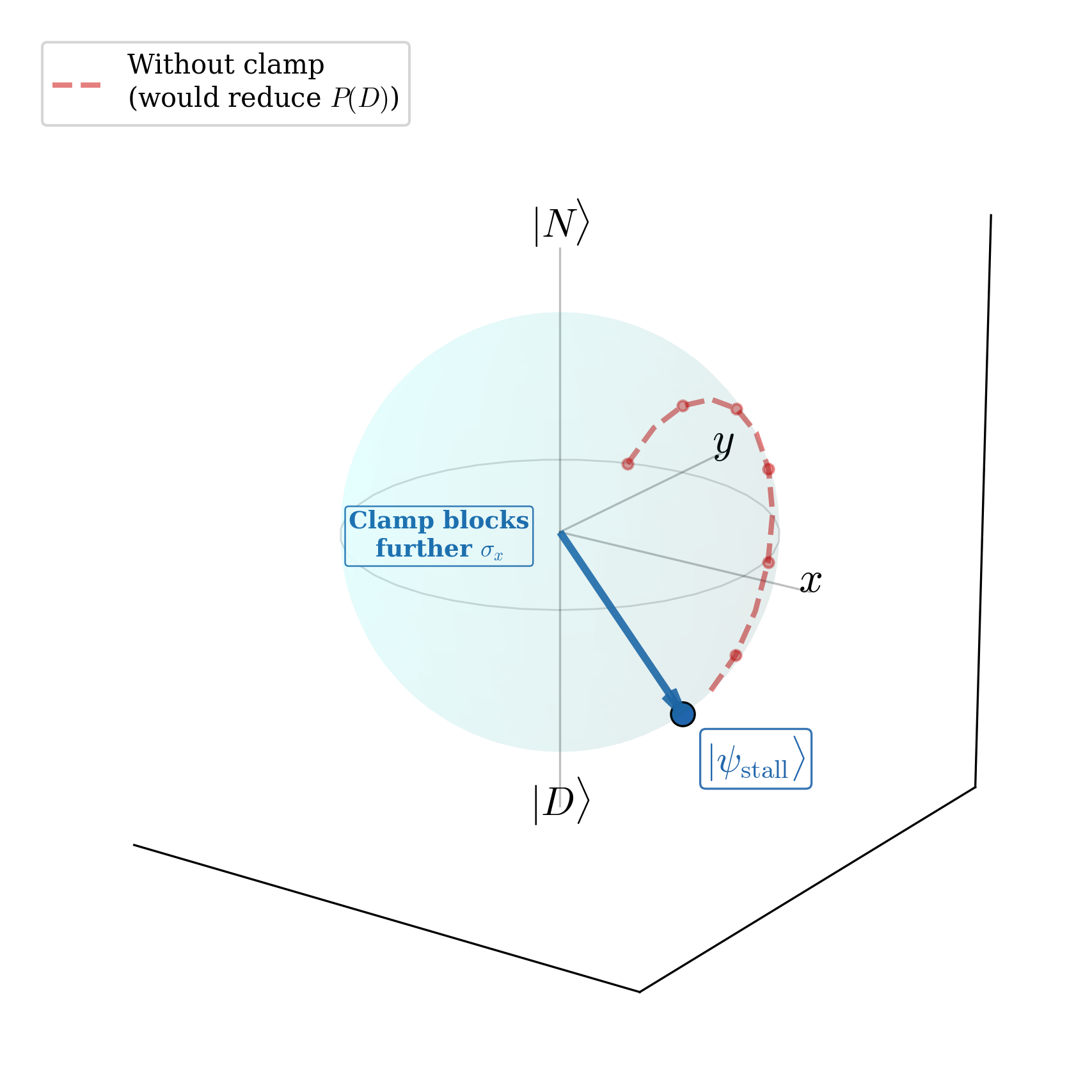}
    \caption{A fixed axis Pauli rotation is blocked.}
    \label{fig:geo_stall}
\end{subfigure}
\hfill
\begin{subfigure}[b]{0.4\textwidth}
    \centering
    \includegraphics[
        width=\textwidth,
        trim=0 1cm 0 0.4cm,
        clip
    ]{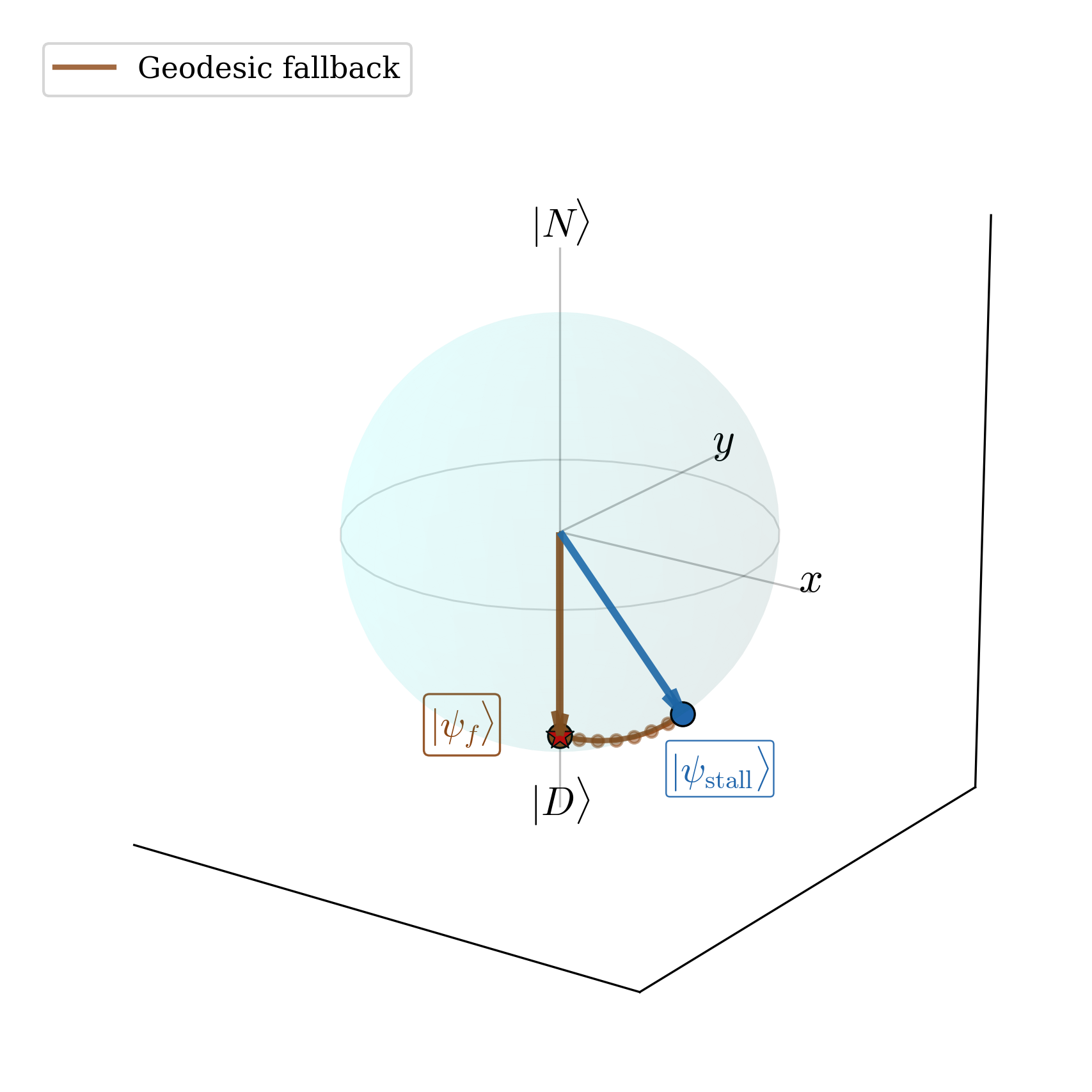}
    \caption{The blocked update is redirected to the defensive pole.}
    \label{fig:geo_redirect}
\end{subfigure}

\caption{Geodesic safeguard mechanism.}
\label{fig:geodesic_safeguard}
\end{figure}
Let:
\begin{equation}
\mathbf{r}_{it,m} = \bra{\psi_{it,m}}
\boldsymbol{\sigma} \ket{\psi_{it,m}}
= \left(
r_{x,it,m},
r_{y,it,m},
r_{z,it,m}
\right)
\label{eq:current_bloch_vector}
\end{equation}
represent the Bloch vector of the current latent mental state, where $\boldsymbol{\sigma}=(\sigma_x,\sigma_y,\sigma_z)$. The Bloch vector of the defensive state is:
\begin{equation}
\mathbf{r}_{D} = (0,0,-1).
\label{eq:defensive_bloch_vector}
\end{equation}
The remaining angular distance from the current state to the defensive pole is then: 
\begin{equation}
\phi_{it,m} = \arccos
\left( \mathbf{r}_{it,m}^{\mathsf T}\mathbf{r}_{D} \right)
= \arccos \left( -r_{z,it,m} \right).
\label{eq:angular_distance_to_defensive}
\end{equation}
When the current state is not located at either pole, the rotation axis
which move the Bloch vector to $\mathbf{r}_{D}$ in the shortest path is:
\begin{equation}
\widehat{\mathbf{n}}_{\mathrm{geo},it,m} = \frac{
\mathbf{r}_{it,m}\times\mathbf{r}_{D} }{ \left\| \mathbf{r}_{it,m}\times\mathbf{r}_{D} \right\|}
= \frac{ \left( -r_{y,it,m}, r_{x,it,m}, 0 \right) }{ \sqrt{ r_{x,it,m}^{2} + r_{y,it,m}^{2}} }.
\label{eq:geodesic_axis}
\end{equation}
The corresponding geodesic rotation operator is:
\begin{equation}
\mathcal{U}_{\mathrm{geo},it,m}(\theta) = \exp\left[ -i\frac{\theta}{2} \widehat{\mathbf{n}}_{\mathrm{geo},it,m} \cdot \boldsymbol{\sigma} \right].
\label{eq:geodesic_operator}
\end{equation}
For compactness, let
\begin{equation}
\delta_{it,m} = \min \left\{ \theta_{it,m}, \phi_{it,m} \right\}
\label{eq:geodesic_applied_angle}
\end{equation}
denote the safeguarded geodesic angle.
The full model-control operation is:
\begin{equation}
\begin{aligned}
\ket{\psi_{it,m+1}} &= \mathcal{C}_{it,m}
\left( \ket{\psi^{*}_{it,m+1}}, \ket{\psi_{it,m}} \right)
\\[1ex]
&= \begin{cases} \ket{\psi^{*}_{it,m+1}},
&
\substack{ \mathfrak{a}(c_{it,m})=z \\ \text{or }\theta_{it,m}=0, }
\\[2.5ex]
\ket{\psi^{*}_{it,m+1}}, & \substack{ \mathfrak{a}(c_{it,m})=x,
\\ p^{D,*}_{it,m+1} \geq p^{D}_{it,m}, } \\[2.5ex] 
\ket{\psi_{it,m}}, & \substack{ \mathfrak{a}(c_{it,m})=x, \\
\phi_{it,m}=0, } \\[2.5ex] 
\mathcal{U}_{\mathrm{geo},it,m}
\left( \delta_{it,m} \right) \ket{\psi_{it,m}}, & \substack{
\mathfrak{a}(c_{it,m})=x,\;
\theta_{it,m}>0, \\
\phi_{it,m}>0,\; p^{D,*}_{it,m+1}<p^{D}_{it,m}. }
\end{cases}
\end{aligned}
\label{eq:geodesic_control_operation}
\end{equation}

\newpage
\noindent
The originally assigned Pauli rotation is retained when it is a phase-changing cue, a zero angle update, or a state-changing candidate that preserves or increases the defensive state probability. 
When an $x$-axis candidate might reduce the defensive state probability, it is
redirected along the geodesic toward the defensive pole. The applied angle $\delta_{it,m}$ prevents the redirected update from passing beyond the target pole.
\begin{proposition}
\label{prop:geodesic_safeguard}
When an $x$-axis candidate update would reduce the defensive state
probability, the geodesic safeguard replaces it with an update directed toward the defensive pole. The redirected update cannot pass beyond the defensive pole and preserves the non-decreasing cue-update property.
\end{proposition}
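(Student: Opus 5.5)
The plan is to work entirely on the Bloch sphere. A unitary $\exp[-i\frac{\theta}{2}\widehat{\mathbf n}\cdot\boldsymbol\sigma]$ acts on the Bloch vector $\mathbf r$ as the SO(3) rotation by angle $\theta$ about $\widehat{\mathbf n}$. By Eq.~\eqref{eq:bloch_intermediate_probabilities}, the defensive probability is
\begin{equation*}
p^D=\sin^2(\vartheta/2)=\tfrac12(1-r_z)=\tfrac12\bigl(1+\cos(\pi-\phi)\bigr)=\cos^2(\phi/2),
\end{equation*}
where $\phi=\arccos(-r_z)\in[0,\pi]$ is the angular distance to $\mathbf r_D$ from Eq.~\eqref{eq:angular_distance_to_defensive}. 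Hence $p^D$ is strictly decreasing in $\phi$. Every claim in Proposition~\ref{prop:geodesic_safeguard} therefore reduces to a statement about how the applied rotation changes $\phi$.

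First I would check that the geodesic axis of Eq.~\eqref{eq:geodesic_axis} is well defined in the branch where it is used. That branch requires $\phi_{it,m}>0$, so $\mathbf r\neq\mathbf r_D$. If $\mathbf r$ were the neutral pole $(0,0,1)$, the cross product would vanish. However, an $x$-rotation from the neutral pole gives $p^{D,*}=\sin^2(\theta/2)\ge 0=p^D$, so that candidate is always accepted and never redirected. Thus in the redirect branch $r_x^2+r_y^2>0$ and $\widehat{\mathbf n}_{\mathrm{geo}}$ is a unit vector orthogonal to both $\mathbf r$ and $\mathbf r_D$.

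The core step is to show that rotating $\mathbf r$ about $\widehat{\mathbf n}_{\mathrm{geo}}=\mathbf r\times\mathbf r_D/\|\mathbf r\times\mathbf r_D\|$ by angle $\delta$ keeps $\mathbf r$ in the great circle through $\mathbf r$ and $\mathbf r_D$ and moves it toward $\mathbf r_D$. I would use Rodrigues' formula. Because $\widehat{\mathbf n}\perp\mathbf r$, it gives
\begin{equation*}
\mathbf r(\delta)=\cos\delta\,\mathbf r+\sin\delta\,(\widehat{\mathbf n}\times\mathbf r).
\end{equation*}
A triple-product computation gives $(\widehat{\mathbf n}\times\mathbf r)\cdot\mathbf r_D=\sin\phi$, using $\|\mathbf r\times\mathbf r_D\|=\sin\phi$ and $\mathbf r\cdot\mathbf r_D=\cos\phi$. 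Hence
\begin{equation*}
\mathbf r(\delta)\cdot\mathbf r_D=\cos\delta\cos\phi+\sin\delta\sin\phi=\cos(\phi-\delta),
\end{equation*}
so the new distance to the pole is $|\phi-\delta|$. I expect this step to be the main obstacle. The sign convention of the paper's $\exp[-i\frac{\theta}{2}\widehat{\mathbf n}\cdot\boldsymbol\sigma]$ must correspond to a right-handed rotation, so that the axis $\mathbf r\times\mathbf r_D$ turns $\mathbf r$ toward $\mathbf r_D$ and not away from it. I would verify this directly on a test case: $\mathcal U_x(\theta)$ applied to $\ket N$ must send $(0,0,1)$ toward $(0,-1,0)$, which matches the right-hand rule about $+x$.

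Finally I would conclude. With $\delta=\min\{\theta,\phi\}\le\phi$, the new distance is $\phi-\delta\in[0,\phi]$, so the update never overshoots. The state lands exactly on $\mathbf r_D$ when $\theta\ge\phi$ and otherwise stays on the same side of it, which establishes that the redirected update cannot pass beyond the defensive pole. Since $\cos^2(\cdot/2)$ is decreasing on $[0,\pi]$, the new defensive probability satisfies $\cos^2((\phi-\delta)/2)\ge\cos^2(\phi/2)=p^D_{it,m}$. Combined with the other branches of Eq.~\eqref{eq:geodesic_control_operation}, this preserves the non-decreasing property of Proposition~\ref{prop:cue_update_monotonicity}. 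Those branches are: the $z$ or zero-angle case, where $p^D$ is unchanged by Eq.~\eqref{eq:z_rotation_probability_invariance}; the accepted candidate with $p^{D,*}\ge p^D$; and the retained state when $\phi=0$. The inequality is strict whenever $\theta>0$ and $\phi>0$.
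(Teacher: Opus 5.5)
Your proof is correct and follows the paper's route. Like the paper, you show that the redirected state sits at angular distance $\phi-\delta\in[0,\phi]$ from the defensive pole, and then use that $p^D=\cos^2(\phi/2)=\tfrac12(1+\cos\phi)$ is decreasing in $\phi$. Your version is more complete in two places: your Rodrigues computation with the handedness check justifies the distance-reduction step, and your observation that the neutral pole never triggers redirection justifies the well-definedness of $\widehat{\mathbf n}_{\mathrm{geo}}$; the paper simply asserts both.

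One slip to fix: the middle term of your first display should be $\tfrac12\bigl(1-\cos(\pi-\phi)\bigr)$, because $\tfrac12\bigl(1+\cos(\pi-\phi)\bigr)=\tfrac12(1+r_z)$ is the neutral probability.
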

The proposition concerns updates that activate the safeguard. A candidate that keeps the defensive state probability is accepted by
the implemented monotonicity rule, the model does not claim that every nonzero cue necessarily produces a strict increase.
This result applies to the cue-driven update before relaxation. The
post relaxation state reflects the balance between the defensive movement generated by the cue sequence and the subsequent pull toward the baseline generated by relaxation.
A proof of Proposition~\ref{prop:geodesic_safeguard} is provided in
Appendix~\ref{app:formal_guarantees}. 
The Bloch vector representation and the general rotation about an arbitrary axis used in the safeguard are reviewed in Appendices~\ref{sec:bloch} and~\ref{app:rotations}, respectively.
The safeguard is activated only when the originally assigned $x$-axis Pauli rotation would reduce the defensive state probability. In all other updates, the originally assigned $x$- or $z$-axis rotation is retained.
The non-commutativity and cue order effects introduced in
Eqs.~\eqref{eq:pauli_noncommutativity} and \eqref{eq:rotation_order_dependence} remain part of the state evolution process. The geodesic safeguard acts as a directional correction that prevents rotational overshoot and redirects probability reducing updates to the defensive pole.

% relaxaion
\subsubsection{Relaxation Toward the Baseline State}
\label{sec:relaxation_step}

The cue-induced rotations represent the effect of currently perceived information on the driver's latent mental state. 
For cue types interpreted as threats, these updates are designed to preserve or increase the defensive state probability.  Without an opposing recovery mechanism, the effects of earlier threats could remain indefinitely in the
carried-forward state.
The relaxation mechanism is applied after the complete cue sequence at every trajectory sample. It pulls the resulting latent mental state toward its initial baseline. During strong threat exposure, the state-changing cue updates may outweigh this pull. When the cues weaken or disappear, relaxation becomes the dominant influence and allows the effects of earlier threats to fade.
For example, after a nearby vehicle moves farther away and the conflict begins to resolve, repeated relaxation moves the driver's latent mental state toward its baseline rather than allowing the earlier interaction to maintain a persistent defensive bias.

After all $M_{it}$ cues at trajectory sample $t$ have been processed, the
pre-relaxation state is given by:
\begin{equation}
\ket{\widetilde{\psi}_{it}} = \ket{\psi_{it,M_{it}}}.
\label{eq:pre_relaxation_state}
\end{equation}
The relaxation operation moves this state toward the driver's initial baseline state $\ket{\psi_{i0,0}}$. It is defined as the normalized interpolation:
\begin{equation}
\mathcal{R}_{\lambda} \left(
\ket{\widetilde{\psi}_{it}}, \ket{\psi_{i0,0}}
\right) = \frac{ (1-\lambda)\ket{\widetilde{\psi}_{it}}
+ \lambda\ket{\psi_{i0,0}} }{
\left\| (1-\lambda)\ket{\widetilde{\psi}_{it}}
+
\lambda\ket{\psi_{i0,0}} \right\| },
\qquad
0\leq\lambda<1.
\label{eq:relaxation_operator}
\end{equation}
The final latent mental state at trajectory sample $t$ is:
\begin{equation}
\ket{\psi_{it}} = \mathcal{R}_{\lambda}
\left( \ket{\widetilde{\psi}_{it}}, \ket{\psi_{i0,0}} \right).
\label{eq:relax_step}
\end{equation}

\newpage
\noindent
The parameter $\lambda$ is an interpolation weight applied before normalization. It should not be interpreted as the exact percentage reduction in the defensive state probability, the polar angle, or the Bloch sphere distance from the baseline. Larger values of $\lambda$ produce a stronger pull toward the initial baseline, but the resulting change in state probability also depends on the current complex amplitudes.
The normalized interpolation is evaluated using the recursively generated complex amplitudes under the fixed phase convention produced by the initialization and ordered state-update operations. It is a model specific relaxation relaxation mechanism.

Figure~\ref{fig:qscm_circuit} shows a circuit style representation of
the recursive Q-SCM state evolution process. 
Step~2 is not shown as a state transformation block because cue encoding is a classical preprocessing operation that supplies the rotation angle $\theta_{it,m}$ and axis assignment $\mathfrak{a}(c_{it,m})$ to the Step~3 update operations.
The initialization operator is applied once at $t=0$. The cue-induced rotation and model-control operations are then repeated for all $M_{it}$ cues at trajectory sample $t$. After all cues have been processed, Step~4 applies relaxation and carries the resulting state forward. 
Step~5 extracts the state membership probabilities using the Born rule.
The diagram is a circuit style computational representation not a unitary quantum circuit. 
The cue-induced rotation operators are unitary, but the model control operation $\mathcal{C}_{it,m}$ may redirect a candidate update and the relaxation operation $\mathcal{R}_{\lambda}$ is non-unitary. 
The Born rule operation $\mathcal{B}$ extracts the state membership probabilities without replacing or collapsing the final state that is thern carried
forward.

\begin{figure}[!htbp]
\centering

\resizebox{0.98\linewidth}{!}{%
\begin{tikzpicture}[
    node distance=0.25cm,
    circuitbox/.style={
        rectangle,
        rounded corners,
        draw=black,
        minimum height=0.85cm,
        align=center,
        inner xsep=3pt,
        inner ysep=2pt,
        font=\scriptsize
    },
    arrow/.style={->,thick},
    feedback/.style={->,thick,dashed,rounded corners}
]

\node[circuitbox,fill=red!12] (init)
{\textbf{Step 1}\\
$\mathcal{U}{\mathrm{init}}(\boldsymbol{\zeta}_i)$};

\node[circuitbox,fill=green!12,right=of init] (update1)
{\textbf{Step 3}\\
$\mathcal{U}{\mathfrak{a}(c_{it,0})}(\theta_{it,0})$};

\node[circuitbox,fill=green!12,right=of update1] (control1)
{$\mathcal{C}_{it,0}$};

\node[right=0.18cm of control1,font=\normalsize] (dots)
{$\cdots$};

\node[circuitbox,fill=green!12,right=0.18cm of dots] (updatelast)
{\textbf{Step 3}\\
$\mathcal{U}{\mathfrak{a}(c_{it,M_{it}-1})}
(\theta_{it,M_{it}-1})$};

\node[circuitbox,fill=green!12,right=of updatelast] (controllast)
{$\mathcal{C}_{it,M_{it}-1}$};

\node[circuitbox,fill=orange!12,right=of controllast] (relax)
{\textbf{Step 4}\\
$\mathcal{R}_{\lambda}$};

\node[circuitbox,fill=purple!12,right=of relax] (born)
{\textbf{Step 5}\\
$\mathcal{B}$};

\node[right=0.25cm of born,align=center,font=\scriptsize] (output)
{$P_{iNt},\,P_{iDt}$};

\draw[arrow]
([xshift=-0.70cm]init.west)
--
node[above,font=\scriptsize]
{$\ket{N}$}
(init.west);

\draw[arrow] (init) -- (update1);
\draw[arrow] (update1) -- (control1);
\draw[arrow] (control1) -- (dots);
\draw[arrow] (dots) -- (updatelast);
\draw[arrow] (updatelast) -- (controllast);
\draw[arrow] (controllast) -- (relax);
\draw[arrow] (relax) -- (born);
\draw[arrow] (born) -- (output);

\draw[feedback]
(relax.south)
-- ++(0,-0.75cm)
-| node[
    pos=0.52,
    below,
    align=center,
    font=\scriptsize
]
{state carry forward:
$\ket{\psi_{i,t+1,0}}=\ket{\psi_{it}}$}
(update1.south);

\end{tikzpicture}%
}

\caption{Circuit representation of the Q-SCM
state evolution process. }
\label{fig:qscm_circuit}
\end{figure}
In Figure~\ref{fig:qscm_circuit}, $\mathcal{B}$ is the Born rule
probability extraction operation: 
\begin{equation}
\mathcal{B} \left( \ket{\psi_{it}} \right)
= \left( \left| \braket{N}{\psi_{it}} \right|^2, \left| \braket{D}{\psi_{it}} \right|^2 \right) = \left( P_{iNt}, P_{iDt} \right).
\label{eq:born_rule_operation}
\end{equation}
The extracted probabilities enter the classical choice layer but do not
replace the final state $\ket{\psi_{it}}$. 
The same final state is carried
forward through Eq.~\eqref{eq:architecture_carry_forward}.

\subsection{Likelihood and Estimation}
\label{sec:likelihood_estimation}

\subsubsection{Objective Function}
\label{sec:likelihood}

Let $y_{ijt}=1$ if driver $i$ selects manoeuvre $j$ at trajectory sample $t$, and let $y_{ijt}=0$ otherwise. 
Conditional on latent mental state $s\in\mathcal{S}$, the state-specific  manoeuvre probability is represented using a multinomial logit model:
\begin{equation}
P_{ijt\mid s} =\frac{ \exp\left(V_{ijt\mid s}\right) }{ \displaystyle \sum_{r\in\mathcal{A}} \exp\left(V_{irt\mid s}\right) }, \qquad
s\in\{N,D\},
\label{eq:general_state_conditional_mnl}
\end{equation}
where $V_{ijt\mid s}$ is the systematic utility of manoeuvre $j$ conditional on latent mental state $s$. The empirical specification of these utilities is provided in Section~\ref{sec:application}.
The unconditional probability of observing manoeuvre $j$ is obtained by mixing the state-specific probabilities using the Q-SCM state membership probabilities:
\begin{equation}
P_{ijt} = P_{iNt}P_{ijt\mid N} + P_{iDt}P_{ijt\mid D}.
\label{eq:estimated_manoeuvre_probability}
\end{equation}

Let $\boldsymbol{q}$ include the parameters governing the quantum state membership layer, including initialization, cue encoding, rotation, and relaxation parameters. 

\newpage
\noindent
Let $\boldsymbol{c}$ contain the parameters of the state-conditional classical choice layer, and define:
\begin{equation}
\boldsymbol{\Theta} = \left( \boldsymbol{q}, \boldsymbol{c} \right).
\label{eq:full_parameter_vector}
\end{equation}
The sample log-likelihood is:
\begin{equation}
\ell
\left(
\boldsymbol{\Theta}
\right) = \sum_{i=1}^{I}
\sum_{t=0}^{T_i}
\sum_{j\in\mathcal{A}}
y_{ijt}
\log
P_{ijt}
\left(
\boldsymbol{\Theta}
\right).
\label{eq:qscm_log_likelihood}
\end{equation}

Since one manoeuvre is observed at each trajectory sample, the same log-likelihood can equivalently be written as:
\begin{equation}
\ell \left( \boldsymbol{\Theta} \right)
=
\sum_{i=1}^{I}
\sum_{t=0}^{T_i}
\log
P_{i,a_{it},t}
\left( \boldsymbol{\Theta} \right),
\label{eq:qscm_observed_log_likelihood}
\end{equation}
where $a_{it}\in\mathcal{A}$ is the manoeuvre observed for driver
$i$ at trajectory sample $t$.

The state membership probabilities entering Eq.~\eqref{eq:estimated_manoeuvre_probability} must be evaluated recursively and in temporal order for each driver. 
Specifically, the state is initialized once, updated by the ordered cue sequence, controlled by the monotonicity and geodesic mechanisms, relaxed, and then
carried forward. Consequently, $P_{iNt}$ and $P_{iDt}$ depend on the driver's cue history up to trajectory sample $t$, rather than only on the covariates observed at that sample.

\subsubsection{Estimation Procedure}
\label{sec:estimation}

The sample log-likelihood in Eq.~\eqref{eq:qscm_observed_log_likelihood} provides the data fit component of the estimation objective. Let $\Omega$ denote the feasible parameter space defined by the admissible ranges of the initialization, cue response, rotation, relaxation, and choice parameters.
The likelihood may be optimized directly or combined with an application-specific regularization term. In the empirical application
of Section~\ref{sec:application}, the Q-SCM parameters are obtained from a penalized objective that includes an $\ell_2$ penalty on selected defensive state utility coefficients. The exact penalized objective and regularization weight are specified in the application section. The constraints in $\Omega$ enforce the admissible ranges of the initialization, rotation, cue response and relaxation parameters.  For each trial parameter vector, the likelihood is evaluated driver by
driver. The computational sequence is:
\begin{enumerate}[label=\arabic*.,leftmargin=2em]
\item Construct the initial latent mental state $\ket{\psi_{i0,0}}$.
\item For each trajectory sample $t$, encode the observed cues and obtain their rotation angles and axis assignments.
\item Apply the cue-induced rotation and model control operation sequentially for $m=0,\ldots,M_{it}-1$.
\item Apply relaxation operation to get $\ket{\psi_{it}}$ and extract $P_{iNt}$ and $P_{iDt}$ using the Born rule.
\item Compute the state-conditional and unconditional manoeuvre probabilities and accumulate their log-likelihood contributions.
\item Carry $\ket{\psi_{it}}$ forward to initialize trajectory sample $t+1$.
\end{enumerate}

Since the state recursion makes the likelihood highly nonlinear and potentially non-concave, estimation should use parameter bounds and multiple suitable starting values. 
In the empirical application, the continuous parameters are estimated using the bounded L-BFGS-B algorithm. A coarse-to-fine grid search is used to obtain the starting value of the parameter for which the likelihood displayed the greatest sensitivity, after which all parameters are estimated jointly. 
The case specific parameter vector, starting values, bounds are reported in Section~\ref{sec:application}.

\newpage
Approximate parameter uncertainty is evaluated using the Hessian of the unpenalized log-likelihood, evaluated at the penalized estimate
$\widehat{\boldsymbol{\Theta}}_{\mathrm{pen}}$, along with a
driver-level cluster-robust sandwich covariance estimator. Approximate Wald statistics are calculated from the resulting standard errors. 
These quantities provide local uncertainty summaries around the penalized
solution and should not be interpreted as exact conventional
maximum-likelihood inference.
Let: 
\begin{equation}
\ell_i \left( \boldsymbol{\Theta}
\right) = \sum_{t=0}^{T_i}
\log P_{i,a_{it},t}
\left( \boldsymbol{\Theta} \right)
\label{eq:driver_log_likelihood}
\end{equation}
is the contribution of driver $i$, and define the driver-level score:
\begin{equation}
\mathbf{g}_i = \nabla_{\boldsymbol{\Theta}}
\ell_i\left(\widehat{\boldsymbol{\Theta}}_{\mathrm{pen}}\right).
\label{eq:driver_cluster_score}
\end{equation}
Let:
\begin{equation}
\mathbf{H} = - \nabla_{\boldsymbol{\Theta}}^{2}
\ell\left(\widehat{\boldsymbol{\Theta}}_{\mathrm{pen}}\right)
\label{eq:observed_information_matrix}
\end{equation}
is the observed information matrix. The driver-clustered covariance estimator is:
\begin{equation}
\widehat{\operatorname{Var}} \left( \widehat{\boldsymbol{\Theta}}
\right) = \mathbf{H}^{-1}
\left( \sum_{i=1}^{I} \mathbf{g}_i\mathbf{g}_i^{\mathsf T} \right)
\mathbf{H}^{-1}.
\label{eq:driver_cluster_covariance}
\end{equation}
These standard errors provide local uncertainty summaries around the reported penalized solution. They should not be interpreted as exact conventional maximum likelihood standard errors for an unpenalized estimator.
Clustering at the driver level allows the trajectory observations of the same driver to be statistically dependent and treating different drivers as independent clusters. Standard errors are obtained from the square roots of the diagonal elements of Eq.~\eqref{eq:driver_cluster_covariance}.

\section{Application to Naturalistic Trajectories }
\label{sec:application}
\subsection{Dataset}
\label{sec:dataset}

The rounD (Roundabout Drone Dataset) \cite{Krajewski2020rounD}, is employed in this paper which is a publicly available naturalistic trajectory dataset collected using drones at roundabouts in Germany. 
The focus is on the Neuweiler roundabout, which has the highest number of recordings in the dataset. 
We retain passenger car trajectories that interact with motorized vehicles (buses, trucks, trailers, vans, motorcycles) and exclude non motorized road users to avoid mixing fundamentally different dynamics and risk perceptions. 
Only trajectories that enter, circulate, and exit the roundabout are included. 
The channelized right turns are excluded from the analysis. The trajectories are downsampled from 0.04\,s to 1\,s intervals. 
This yields a final dataset of 85{,}754 observations from 9{,}610 passenger car drivers.
One observation corresponds to one passenger car driver's position and associated kinematic state at one 1\,s time step along the driver's trajectory.
% At each 1\,s time step, three actions are defined for the discrete choice set: decelerate ($a=1$), maintain speed ($a=2$), and accelerate ($a=3$). 
% These are determined from change in speed at each next time step. Then this value is mapped onto the choice set grid which is comprised of four speed regimes that ... XXX 
% The cue variables used to drive the quantum state evolution and the choice model are as follows. 
% The separation distance and closing time-to-collision (CTTC) serve as longitudinal threat cues assigned to the $\sigma_x$ axis. 
% The distance from the maintain-speed alternative cell to the ego vehicle's intended path serves as a contextual cue assigned to the $\sigma_z$ axis, and acting as a phase modulating signal. 
% Additional covariates enter the MNL choice layer are the STCP XXX, 
% overlap with other vehicles' projected paths, distance to the opposing path, and XX.
At each 1\,s time step, three actions are defined for the discrete choice set: decelerate ($a=1$), maintain speed ($a=2$), and accelerate ($a=3$). 
The action label is assigned from the change in speed between $t$ and $t+1$ and mapped to a three alternative choice set~\cite{alhaideri2026dualState}. The observed action distribution is dominated by deceleration: $58{,}993$ observations ($68.8\%$) are decelerations, $12{,}638$ ($14.7\%$) are maintain speed actions, and $14{,}123$ ($16.5\%$) are accelerations. 
We believe that the skewness toward deceleration is in line with the prevalence of yielding and gap acceptance manoeuvres at a roundabout.

The cue variables that drive the quantum state evolution are summarized as follows. 
The separation distance ($\mathrm{mDist}_t$) and CTTC ($\mathrm{CTTC}_t$) are the state changing cues assigned to the $\sigma_x$ axis.
The distance from the ego vehicle's current position to its intended path is the phase changing cue assigned to the $\sigma_z$ axis. 
Both state changing cues are equal to zero in the absence of an interacting vehicle.
$\mathrm{mDist}_t$ is valid in $72.2\%$ of observations.
$\mathrm{CTTC}_t$ in $46.4\%$, reflecting that drivers spend substantial portions of their trajectory without a relevant conflict partner. 

\newpage
\noindent
The lane deviation cue is defined for every observation. The MNL choice layer takes a per alternative distance from each candidate action cell to the ego's intended path ($d_{j,t}$), a per alternative non overlap indicator with interacting vehicles' projected polygons ($o_{j,t}$), and the frontal and rear directional intensities $f_t$ and $r_t$ defined in Table~\ref{tab:variables}.

\subsection{Model Specification}
\label{sec:model_spec}
To simplify notation in this application section, the driver index $i$
is suppressed. All cue variables, latent states, and probabilities remain
driver-specific, and the state recursion is restarted separately for each
driver.
We formalize the two models that we will estimate and compare. 
Both models share the same set of input variables, the same set of alternatives, and the same within-state utility specification. 
They differ only in how the latent state probabilities $P(N_t)$ and $P(D_t)$ are generated as functions of the cue history. 
In this subsection, we introduce the variables (Table~\ref{tab:variables}), the common within-state utility, and the two distinct class membership mechanisms together with their parameter sets (Tables~\ref{tab:choice_params}--\ref{tab:quantum_params}).

% Each driver contributes a trajectory of $T$ observations indexed by $t$, consistent with the indexing of Section~\ref{sec:state_evolution}.
% At every $t$ the choice set is $\mathcal{A}=\{1,2,3\}$ with $j=1$ denoting decelerate, $j=2$ maintain speed, and $j=3$ accelerate. The variables that enter both models are summarized in Table~\ref{tab:variables}.

% \subsubsection{Common Within-State Utility}

In both models, the conditional choice probability given latent state
$s\in\{N,D\}$ is specified using a multinomial logit model:
\begin{equation}
P_{jt\mid s} \equiv P(a_t=j\mid s) = \frac{ \exp\left(V_{jt\mid s}\right)
}{ \displaystyle \sum_{h\in\mathcal{A}} \exp\left(V_{ht\mid s}\right) },
\qquad j\in\mathcal{A}, \quad s\in\{N,D\}.
\label{eq:mnl}
\end{equation}
The maintain speed alternative, $j=2$, is the reference alternative for the alternative-specific constants within each latent state. Its alternative-varying attributes remain in the systematic utility.
The state-specific systematic utilities are:
\begin{align}
V_{jt\mid N} &= \beta^{N}_{\mathrm{decel}}\mathbf{1}[j=1]
+ \beta^{N}_{\mathrm{accel}}\mathbf{1}[j=3] + \beta^{N}_{\mathrm{dist\_path}}d_{j,t} \nonumber\\
&\quad +
\beta^{N}_{\mathrm{nonoverlap}}o_{j,t} + \beta^{N}_{\mathrm{int}}
\left( f_t\mathbf{1}[j=1] + r_t\mathbf{1}[j=3]
\right), 
\label{eq:V_N}
\\[1ex]
V_{jt\mid D}
&= \beta^{D}_{\mathrm{decel}}\mathbf{1}[j=1] + \beta^{D}_{\mathrm{accel}}\mathbf{1}[j=3] + beta^{D}_{\mathrm{nonoverlap}}o_{j,t}
\nonumber\\
&\quad
+ \beta^{D}_{\mathrm{int}} \left( f_t\mathbf{1}[j=1] + r_t\mathbf{1}[j=3] \right).
\label{eq:V_D}
\end{align}

The intensity coefficient $\beta^{s}_{\mathrm{int}}$ within each state governs both the deceleration response to frontal threats and the acceleration response to rear threats; a single shared coefficient per state is imposed for parsimony.
Eqs.~\eqref{eq:V_N}--\eqref{eq:V_D} yield nine choice parameters
$\boldsymbol{c}=(\beta^{N}_{\mathrm{decel}},\beta^{N}_{\mathrm{accel}},\beta^{N}_{\mathrm{dist\_path}},\beta^{N}_{\mathrm{nonoverlap}},\beta^{N}_{\mathrm{int}},\beta^{D}_{\mathrm{decel}},\beta^{D}_{\mathrm{accel}},\beta^{D}_{\mathrm{nonoverlap}},\beta^{D}_{\mathrm{int}})\in\mathbb{R}^{9}$. These parameters are estimated under both models and are summarized in Table~\ref{tab:choice_params}.

\begin{table}[!htbp]
\centering
\caption{Variables used in the Q-SCM and the classical latent class specifications. 
}
\label{tab:variables}
\footnotesize
\begin{tabular}{p{0.18\linewidth} p{0.18\linewidth} p{0.55\linewidth}}
\toprule
Symbol & Type & Description \\
\midrule
$a_t$                 & Observed action      & Driver's discrete choice at $t$; $a_t\in\{1,2,3\}$. \\
$\mathrm{CTTC}_t$     & Observation-level    & Minimum value of closing time-to-collision at $t$ (s). \\
$\mathrm{mDist}_t$    & Observation-level    & Separation distance to the nearest interacting vehicle at $t$ (m). \\
$d^{\mathrm{ego}}_{\mathrm{path},t}$ & Observation-level & Distance from the ego vehicle's current position to its intended path (m); used to construct the phase-changing deviation cue. \\
$s_{\mathrm{dist},t}$ & Encoded cue & Dimensionless distance cue defined in Eq.~\eqref{eq:distance_cue}; larger values indicate a smaller separation distance. \\
$s_{\mathrm{cttc},t}$ & Encoded cue & Dimensionless CTTC-based cue defined in Eq.~\eqref{eq:cttc_cue}; larger values indicate a more immediate closing interaction. \\
$s_{\mathrm{dev},t}$ & Encoded cue & Dimensionless deviation cue defined in Eq.~\eqref{eq:deviation_cue}; larger values indicate greater deviation from the intended path. \\

$f_t$ & Observation-level & Frontal proximity of the minimum CTTC vehicle, on a $0$--$1$ scale: 
$1$ when it is directly ahead, fading to $0$ toward the sides and behind. 
Enters the utility of the decelerate alternative. \\

$r_t$ & Observation-level & Rear proximity of the minimum CTTC vehicle, on a $0$--$1$ scale: 
$1$ when it is directly behind, fading to $0$ toward the sides and in front. 
Enters the utility of the accelerate. \\

$o_{j,t}$             & Per-alternative      & Non overlap indicator: $1$ if the polygon of alternative $j$ does not overlap with any interacting vehicle, $0$ otherwise. \\

$d_{j,t}$             & Per-alternative      & Distance from alternative $j$'s choice centroid to the ego vehicle's intended path (m). \\
\bottomrule
\end{tabular}
\end{table}

\begin{table}[!htbp]
\centering
\caption{Nine within-state choice utility parameters, common to both models.}
\label{tab:choice_params}
\footnotesize
\begin{tabular}{p{0.27\linewidth} p{0.65\linewidth}}
\toprule
Parameter & Interpretation \\
\midrule
$\beta^{N}_{\mathrm{decel}}$        & Alternative-specific constant for decelerate in the neutral state. \\
$\beta^{N}_{\mathrm{accel}}$        & Alternative-specific constant for accelerate in the neutral state. \\
$\beta^{N}_{\mathrm{dist\_path}}$   & Coefficient on per-alternative distance to intended path, neutral state. \\
$\beta^{N}_{\mathrm{nonoverlap}}$      & Coefficient on per-alternative non overlap indicator, neutral state. \\
$\beta^{N}_{\mathrm{int}}$          & Intensity coefficient (frontal$\!\to\!$decel, rear$\!\to\!$accel), neutral state. \\
$\beta^{D}_{\mathrm{decel}}$        & Alternative-specific constant for decelerate in the defensive state. \\
$\beta^{D}_{\mathrm{accel}}$        & Alternative-specific constant for accelerate in the defensive state. \\
$\beta^{D}_{\mathrm{nonoverlap}}$      & Coefficient on per-alternative non overlap indicator, defensive state. \\
$\beta^{D}_{\mathrm{int}}$          & Intensity coefficient (frontal$\!\to\!$decel, rear$\!\to\!$accel), defensive state. \\
\bottomrule
\end{tabular}
\end{table}

\subsubsection{Classical Latent Class Model Specification}

The classical latent class baseline specifies the class probabilities as a logit formulation using the contemporaneous CTTC:
\begin{equation}
W^{\mathrm{cl}}_{D,t} \;=\; \frac{\gamma_1}{1+\mathrm{CTTC}_t} + \gamma_2,\qquad W^{\mathrm{cl}}_{N,t} \;=\; 0,
\label{eq:classical_class}
\end{equation}
\begin{equation}
P^{\mathrm{cl}}(D_t) \;=\; \frac{\exp(W^{\mathrm{cl}}_{D,t})}{1+\exp(W^{\mathrm{cl}}_{D,t})},\qquad P^{\mathrm{cl}}(N_t)\;=\;1-P^{\mathrm{cl}}(D_t).
\label{eq:classical_softmax}
\end{equation}
The classical model thus introduces two additional parameters $\gamma_1$ and $\gamma_2$ beyond the nine choice parameters in Eq.~\eqref{eq:V_N}--\eqref{eq:V_D}, with a total of eleven free parameters. 
A similar specification of the classical model is described in detail in \cite{alhaideri2026dualState}.

% our QSCM
\subsubsection{Q-SCM Model Specification}

All cue variables, latent states, and probabilities remain driver-specific, and the state recursion is initialized separately for each driver.
The Q-SCM generates the statemembership probabilities $P^{\mathrm{Q}}(N_t)$ and $P^{\mathrm{Q}}(D_t)$ by the recursive state evolution framework introduced in Section~\ref{sec:state_evolution}. The application uses the cue set: 
\begin{equation}
\mathcal{K}_{\mathrm{app}} = \left\{ \mathrm{dist}, \mathrm{cttc}, \mathrm{dev} \right\}.
\label{eq:application_cue_set}
\end{equation}
Let $d_0=1\,\mathrm{m}$ and $t_0=1\,\mathrm{s}$ denote fixed scaling constants. The encoded cues are:
\begin{subequations}
\label{eq:application_cue_encoding}
\begin{align}
s_{\mathrm{dist},t}
&=
\begin{cases}
\dfrac{d_0}{d_0+\mathrm{mDist}_t},
& \text{when a separation distance is available},
\\[2ex]
0, &
\text{otherwise},
\end{cases}
\label{eq:distance_cue}
\\[2ex]
s_{\mathrm{cttc},t} &= \begin{cases}
\dfrac{t_0}{t_0+\mathrm{CTTC}_t},
& \text{when a valid closing interaction is available},
\\[2ex] 0,
& \text{otherwise},
\end{cases}
\label{eq:cttc_cue}
\\[2ex] s_{\mathrm{dev},t}
&= \frac{
d^{\mathrm{ego}}_{\mathrm{path},t} }{d_0 }.
\label{eq:deviation_cue}
\end{align}
\end{subequations}
Here, $d^{\mathrm{ego}}_{\mathrm{path},t}$ is the distance from the ego vehicle's current position to its intended path. It is distinct from $d_{j,t}$, which is the alternative-specific distance from the centroid of candidate manoeuvre choice $j$ to the intended path.
Each encoded cue is mapped to a bounded saturation factor and rotation given by:
\begin{equation}
\tau_{c,t} = \frac{ s_{c,t} }{ k_c+s_{c,t} }, \qquad \theta_{c,t} = \theta_{\max,c}\tau_{c,t}, \qquad c\in\mathcal{K}_{\mathrm{app}}.
\label{eq:application_rotation_mapping}
\end{equation}
Here, $k_c>0$ is the cue-specific saturation (sensitivty) parameter and $\theta_{\max,c}$ is the maximum rotation angle for cue type $c$. The factor $\tau_{c,t}$ equals zero when the encoded cue is zero and approaches one as the encoded cue increases. 
Equation \eqref{eq:application_rotation_mapping} is the case-specific form of the general response function $g_c(\cdot)$ introduced in the Section~\ref{sec:framework}.

The application uses the current encoded cue level $s_{c,t}$ at each trajectory sample rather than the change in cue level between consecutive samples. 
This is an application-specific modelling assumption. It allows a traffic condition that remains present, such as continued close proximity or a persistently short CTTC, to continue influencing the latent state even when the cue changes only slightly between consecutive observations. 
A change-based specification would instead emphasize abrupt
cue variations and would assign little additional influence to a persistent but nearly constant condition. Recovery after a threat weakens is represented separately through the relaxation mechanism.

The general initialization mechanism is specified in this application as a population-level $y$-axis rotation by:
\begin{equation}
\ket{\psi_{0,0}} = \mathcal{U}_y(\theta_0)\ket{N} = \cos\left( \frac{\theta_0}{2} \right)\ket{N} + \sin\left( \frac{\theta_0}{2} \right)\ket{D}, \qquad \theta_0\in[0,\pi].
\label{eq:application_initial_state}
\end{equation}
The parameter $\theta_0$ determines the common baseline neutral--defensive balance before the first observed cue is processed.

The common initialization is a population-level modelling assumption.
All drivers begin from the same estimated baseline state and share the same state-evolution and choice parameters. This does not imply that drivers retain identical latent states during their observed episodes.
After initialization, their state trajectories can diverge because they experience different cue values and cue histories. Thus, the present specification represents within-driver temporal variation and history-dependent heterogeneity, but it does not separately estimate driver-specific random coefficients or initial states.

At every trajectory sample, three cues are processed, so that $M_t=3$. The fixed within-sample cue sequence is:
\begin{equation}
c_{t,0} = \mathrm{dist}, \qquad c_{t,1} = \mathrm{cttc}, \qquad c_{t,2} = \mathrm{dev}. 
\label{eq:application_cue_order}
\end{equation}
The corresponding axis assignment is:
\begin{equation}
\mathfrak{a}(\mathrm{dist})=x, \qquad \mathfrak{a}(\mathrm{cttc})=x, \qquad \mathfrak{a}(\mathrm{dev})=z.
\label{eq:application_axis_assignment}
\end{equation}
The cue-to-axis assignments in Eq.~\eqref{eq:application_axis_assignment}, together with the fixed distance--CTTC--deviation processing order in
Eq.~\eqref{eq:application_cue_order}, are application-specific modelling assumptions. They are not intended to imply that every driver consciously
or universally processes perceptual information in this exact discrete
sequence. Rather, the ordering provides a parsimonious and behaviourally
interpretable representation of how different types of traffic
information may enter the driver's evolving appraisal.

One possible behavioural interpretation is that separation distance is processed first as an immediate spatial proximity cue indicating the presence and closeness of an interaction. CTTC is then processed as a dynamic urgency cue that evaluates how rapidly that spatial interaction is developing. Finally, deviation from the intended path is processed as
a contextual cue reflecting the driver's current path position and manoeuvring context. Under the adopted axis assignment, the distance and CTTC cues act as state-changing signals through $x$-axis rotations.
The deviation cue acts as a phase-changing signal by a $z$-axis rotation. The latter does not directly change the
pre-relaxation state membership probabilities, but it can modify the context carried by the complex amplitudes and thereby influence the post relaxation state and subsequent cue responses.

Applying the same fixed sequence at every trajectory sample also provides a consistent recursive specification across drivers and observations. 
Although the underlying distance and CTTC Pauli rotations share the same $x$-axis and commute when considered as uncontrolled rotations, their controlled updates need not commute because the monotonicity and geodesic mechanisms are evaluated after each intermediate state. Consequently,
the distance--CTTC--deviation sequence is an explicit component of the empirical Q-SCM specification. Alternative cue assignments and processing orders may be examined in future applications.

The cue-induced state evolution is: 
\begin{align}
\ket{\psi^{*}_{t,1}} &=
\mathcal{U}_x \left( \theta_{\mathrm{dist},t} \right) \ket{\psi_{t,0}},
&
\ket{\psi_{t,1}} &= \mathcal{C}_{t,0} \left( \ket{\psi^{*}_{t,1}}, \ket{\psi_{t,0}} \right),
\label{eq:application_distance_update}
\\[1ex]
\ket{\psi^{*}_{t,2}} &= \mathcal{U}_x \left( \theta_{\mathrm{cttc},t} \right) \ket{\psi_{t,1}},
&
\ket{\psi_{t,2}} &= \mathcal{C}_{t,1} \left( \ket{\psi^{*}_{t,2}}, \ket{\psi_{t,1}} \right),
\label{eq:application_cttc_update}
\\[1ex]
\ket{\psi^{*}_{t,3}} &= \mathcal{U}_z \left( \theta_{\mathrm{dev},t} \right) \ket{\psi_{t,2}}, & \ket{\psi_{t,3}}
&= \mathcal{C}_{t,2} \left( \ket{\psi^{*}_{t,3}}, \ket{\psi_{t,2}} \right).
\label{eq:application_deviation_update}
\end{align}

The deviation cue is processed last as a phase-changing cue. Its $z$-axis rotation leaves the pre-relaxation state membership probabilities unchanged. However, because relaxation is applied immediately after the three cue sequence and acts on the complex amplitudes, the phase generated by the deviation cue can influence the post relaxation probabilities at the current trajectory sample. The phase is also retained through state carry forward and can influence subsequent state-changing updates.
The pre-relaxation state is:
\begin{equation}
\ket{\widetilde{\psi}_t} = \ket{\psi_{t,3}}.
\label{eq:application_pre_relaxation_state}
\end{equation}
The final latent mental state at trajectory sample $t$ is:
\begin{equation}
\ket{\psi_t} = \mathcal{R}_{\lambda} \left( \ket{\widetilde{\psi}_t}, \ket{\psi_{0,0}} \right),
\label{eq:application_relaxed_state}
\end{equation}
and is carried forward according to:
\begin{equation}
\ket{\psi_{t+1,0}} = \ket{\psi_t}. 
\label{eq:application_carry_forward}
\end{equation}

The Q-SCM state membership probabilities are then extracted using the Born rule:
\begin{equation}
P^{\mathrm{Q}}(N_t) = \left| \braket{N}{\psi_t} \right|^2, \qquad P^{\mathrm{Q}}(D_t) = \left| \braket{D}{\psi_t} \right|^2, \qquad P^{\mathrm{Q}}(N_t) + P^{\mathrm{Q}}(D_t) = 1.
\label{eq:application_state_probabilities}
\end{equation}
The Q-SCM includes eight free parameters beyond the nine choice parameters of Table~\ref{tab:choice_params}.
These are the three maximum rotation angles $\theta_{\max,c}$, the three saturation parameters $k_c$ for
$c\in\{\mathrm{dist},\mathrm{cttc},\mathrm{dev}\}$, the population-level initialization angle $\theta_0$, and the relaxation  weight $\lambda$.
The parameters are summarized in Table~\ref{tab:quantum_params}. The complete Q-SCM specification therefore contains seventeen estimated parameters.

\begin{table}[!htbp]
\centering
\caption{Quantum layer parameters of the Q-SCM.}
\label{tab:quantum_params}
\footnotesize
\begin{tabular}{p{0.22\linewidth} p{0.10\linewidth} p{0.58\linewidth}}
\toprule
Parameter & Range & Interpretation \\
\midrule
$\theta_{\max,\mathrm{dev}}$ & $[0,\pi]$ & Maximum phase rotation generated by the deviation cue.\\
$\theta_{\max,\mathrm{dist}}$ & $[0,\pi]$ & Maximum state-changing rotation generated by the distance cue. \\
$\theta_{\max,\mathrm{cttc}}$ & $[0,2.5]$ & Maximum state-changing rotation generated by the CTTC cue. \\
$k_{\mathrm{dev}}$ & $[10^{-3},100]$ & Saturation parameter of the deviation cue in Eq.~\eqref{eq:application_rotation_mapping}. \\
$k_{\mathrm{dist}}$ & $[10^{-3},100]$ & Saturation parameter of the distance cue. \\
$k_{\mathrm{cttc}}$ & $[10^{-3},100]$ & Saturation parameter of the CTTC-based cue. \\ 
$\theta_0$ & $[0,\pi]$ & Population-level initialization angle in Eq.~\eqref{eq:application_initial_state}. \\
$\lambda$ & $[0,0.99]$ & Per-sample relaxation weight in Eq.~\eqref{eq:application_relaxed_state}. \\
\bottomrule
\end{tabular}
\end{table}

% \subsubsection{Estimation Strategy}
% \label{sec:estimation_strategy}

The case-study parameter vector is $\boldsymbol{\Theta}=(\boldsymbol{q},\boldsymbol{c})\in\mathbb{R}^{17}$, where: 
\begin{equation}
\boldsymbol{q} = \left( \theta_{\max,\mathrm{dev}}, \theta_{\max,\mathrm{dist}}, \theta_{\max,\mathrm{cttc}}, k_{\mathrm{dev}},
k_{\mathrm{dist}}, k_{\mathrm{cttc}}, \theta_0, \lambda \right)
\label{eq:application_quantum_parameter_vector}
\end{equation}
contains the eight parameters of the Q-SCM state membership layer, and $\boldsymbol{c}$ contains the nine state conditional choice parameters listed in Table~\ref{tab:choice_params}.
The Q-SCM parameters are obtained using penalized maximum likelihood. Define: 
\begin{equation}
\mathcal{P}_D = \left\{ \beta^{D}_{\mathrm{decel}}, \beta^{D}_{\mathrm{accel}}, \beta^{D}_{\mathrm{nonoverlap}}, \beta^{D}_{\mathrm{int}} \right\}
\label{eq:penalized_parameter_set}
\end{equation}
as the set of defensive state choice coefficients subject to regularization. The penalized objective is:
\begin{equation}
\mathcal{Q} \left( \boldsymbol{\Theta} \right) = - \ell \left(\boldsymbol{\Theta} \right) + \eta_{\mathrm{reg}} \sum_{\beta\in\mathcal{P}_D} \beta^2,
\label{eq:penalized_objective}
\end{equation}
where $\eta_{\mathrm{reg}}=0.1$ is regularization weight used in the numerical implementation. The Q-SCM estimate is:
\begin{equation}
\widehat{\boldsymbol{\Theta}}_{\mathrm{pen}} = \underset{\boldsymbol{\Theta}\in\Omega}{\arg\min} \; \mathcal{Q} \left( \boldsymbol{\Theta} \right).
\label{eq:penalized_estimator}
\end{equation}
The regularization term is introduced as a numerical stabilization choice to limit excessively large Defensive state utility coefficients during joint estimation. The selected penalty set $\mathcal{P}_D$ and fixed weight $\eta_{\mathrm{reg}}=0.1$ are components of the reported empirical specification; they should not be interpreted
as behavioural restrictions requiring Defensive state coefficients to be small. The reported results and approximate uncertainty summaries are conditional on this regularization choice.
The unpenalized log-likelihood:
\begin{equation}
\widehat{\ell}_{\mathrm{eval}} = \ell \left( \widehat{\boldsymbol{\Theta}}_{\mathrm{pen}} \right)
\label{eq:reported_unpenalized_likelihood}
\end{equation}
is reported as the measure of in sample data fit. Hence, the reported log-likelihood excludes the penalty even though the Q-SCM parameters are obtained by minimizing the penalized objective.

Because the likelihood could include multiple local maxima with respect to the maximum lane deviation rotation, $\theta_{\max,\mathrm{dev}}$, we adopt a two stage grid search to obtain a suitable starting value for this parameter.
% can be non-concave. Therefore we use a two stage grid search to initialize this parameter.
At first, we evaluate a coarse grid of 50 points uniformly spaced over $[0.1,\pi-0.01]$.

\newpage
Then  a fine grid of 30 points within a $\pm 0.3$ rad neighbourhood of the best performing course grid value. 
At each grid point, the $(\theta_{\max,\mathrm{dev}})$ is fixed at the candidate value.
The remaining quantum layer parameters are temporarily fixes at 
$\theta_{\max,\mathrm{dist}}=\theta_{\max,\mathrm{CTTC}}=\pi$, $k_{\mathrm{dev}}=k_{\mathrm{dist}}=k_{\mathrm{CTTC}}=1$, $\theta_0=0$, and $\lambda=0$. 
The log-likelihood function is then maximized only with respect to the nine choice coefficients $\boldsymbol{c}$ using Limited-memory Broyden--Fletcher--Goldfarb--Shanno with Bounds (L-BFGS-B). 
The value of $(\theta_{\max,\mathrm{dev}})$ that has the highest likelihood in the fine grid search is then used as the starting value for the final joint estimation.
% The fine grid maximizer is then used as the warm start for the joint estimation stage.
Finally, the penalized objective in Eq.~\eqref{eq:penalized_objective} is minimized jointly over the complete 17 parameter vector using the bounded L-BFGS-B algorithm. 
The maximum rotation parameters $(\theta_{\max,\mathrm{dev}})$ and $(\theta_{\max,\mathrm{dist}})$ are constrained to $[0,\pi]$, except for $\theta_{\max,\mathrm{CTTC}}$, which is constrained to $[0,2.5]$ to avoid boundary attraction. The saturation constants are constrained to $[10^{-3},100]$, the initial angle is constrained to $\theta_0\in[0,\pi]$, and the relaxation parameter is constrained to $\lambda\in[0,0.99]$.

The classical specification has 11 parameters and is estimated using the Expectation--Maximization algorithm. In the expectation step, the posterior probabilities of belonging to the neutral and defensive classes are obtained using Bayes' rule. In the maximization step, the state-specific choice coefficients are updated using weighted logit estimation. These two steps are repeated untill convergence. 
Since the conditional choice utility specification is identical in the classical and Q-SCM models, any difference in model fit is attributable to the class assignment mechanism. 
In the classical model, the defensive state probability $P^{\mathrm{cl}}(D_t)$ in Eq.~\eqref{eq:classical_class} is a memoryless function of contemporaneous CTTC. 

We deliberately adopt the latent class logit, rather than a richer classical model as the benchmark in this work. 
Holding the nine parameter choice utility fixed and changing only the class assignment mechanism isolates the contribution of the quantum sequential evolution.
In other words, any difference in fit is attributable to cue history dependence and order effects, not to a different choice kernel or additional covariates. 
A random parameters (mixed) latent class model would instead introduce between driver heterogeneity, and this is a different and complementary source of richness from the within driver sequential dynamics that Q-SCM represents.
Therefore, we treat it as future work rather than as the benchmark here.
To verify that the Q-SCM is identifiable and that the proposed estimation procedure can recover its parameters, additionally we estimate the model on a synthetic dataset. 
The cues are drawn from the empirical distributions of the original rounD trajectories. The observed actions are simulated from the Q-SCM under the estimated parameters and treated as ground truth.

% =============================================================
\section{Estimation Results }
\label{sec:results}

% The Q-SCM and the classical benchmark are both estimated in Python. 
% The PennyLane library~\cite{bergholm2018pennylane} was used to define and verify quantum rotation operators for the membership layer. 
% However, for computational efficiency, the recursive state evolution for 85{,}754 observations including  monotonicity constraint, geodesic safeguard, and relaxation mechanism were evaluated directly using vectorized complex valued linear algebra in NumPy~\cite{harris2020numpy}. 
% The Q-SCM parameters were estimated by maximizing the observed log-likelihood in Eq.~\eqref{eq} using the bounded L-BFGS-B algorithm implemented in SciPy~\cite{virtanen2020scipy}. 
% The optimization is initialized using the two stage grid search procedure as discribed earlier. 
% The classical latent class benchmark is estimated using an Expectation--Maximization algorithm implemented with the same NumPy and SciPy computational framework. 
% Parameter uncertainty was quantified using a finite-difference approximation of the log-likelihood Hessian together with a driver-level cluster-robust sandwich covariance estimator constructed from observation-level score vectors. Clustering at the driver level accounts for dependence among repeated trajectory observations from the same driver.

We compare the proposed Q-SCM to a classical probability formulation which is the latent class specification, similar to \cite{alhaideri2026dualState} and also described at the end of Section~\ref{sec:likelihood}. 
Both models are estimated on the full 85{,}754 observations from 9{,}610 passenger car drivers in the rounD Neuweiler dataset and share the identical nine parameter choice utility specification (Eqs.~\ref{eq:V_N}--\ref{eq:V_D}). 
% The class membership in the classical model is defined using the CTTC which is a traffic conflict indicator in addition to an alternative specific constant specified for the defensive class. A similar specification is described in detail in \cite{alhaideri2026dualState}. 
% The conditional choice layer in both models is identical. By doing this, by holding the choice layer fixed we isolate the contribution of the class assignment mechanism, which is the methodological focus of this paper.
Table~\ref{tab:gof_summary} presents a comparison between the two estimated formulations, the Q-SCM and the classical model. 
The Q-SCM has a log-likelihood of $\hat{\ell}^{\mathrm{Q\text{-}SCM}} = -63{,}359.9$, which is larger than the the classical model $\hat{\ell}^{\mathrm{Classical}} = -63{,}831.7$ by $+471.8$ units.
% As can be seen from the table, the AIC of the Q-SCM is lower by $\Delta\mathrm{AIC} = 931.5$ ($126{,}753.9$ vs.\ $127{,}685.4$). 
% $\bar{P}(D)$ is the sample mean of the model implied defensive class probability.

\begin{table}[h]
\centering
\caption{Descriptive in-sample fit comparison
($n=85{,}754$).}
\label{tab:gof_summary}
\footnotesize
\begin{tabular}{lrrr}
\toprule
Model
&
\multicolumn{1}{c}{Unpenalized log-likelihood}
&
\multicolumn{1}{c}{Reported parameters}
&
$\bar{P}(D)$
\\
\midrule
Q-SCM
&
$-63{,}359.9$
&
$17$
&
$0.424$
\\
Classical
&
$-63{,}831.7$
&
$11$
&
$0.663$
\\
\midrule
$\Delta$
&
$+471.8$
&
$+6$
&
$-0.239$
\\
\bottomrule
\end{tabular}
\end{table}

Since the Q-SCM parameters are obtained from a penalized objective, standard likelihood-based information criteria do not have their usual unpenalized maximum-likelihood interpretation. 
Therefore, the comparison focuses on the unpenalized log-likelihood evaluated at the reported estimates and treats the fit differences as descriptive rather than as a formal likelihood-ratio comparison. 

\newpage
\noindent
All the reported standard errors and $p$-values are approximate local Wald type summaries. They are calculated using the driver-level cluster robust covariance matrix based on the unpenalized log-likelihood evaluated at the penalized Q-SCM estimate.

% \begin{table}[h]
% \centering
% \caption{Goodness of fit comparison ($n=85{,}754$).}
% \label{tab:gof_summary}
% \begin{tabular}{llllll}
% \toprule
% Model       & Log-Likelihood  & $df$ & AIC          & BIC          & $\bar{P}(D)$ \\
% \midrule
% Q-SCM       & $-63{,}359.9$ & $17$ & $126{,}753.9$ & $126{,}913.0$ & $0.424$ \\
% Classical   & $-63{,}831.7$ & $11$ & $127{,}685.4$ & $127{,}788.4$ & $0.663$ \\
% \midrule
% $\Delta$    & $+471.8$      & $+6$ & $-931.5$     & $-875.4$     & $-0.239$ \\
% \bottomrule
% \end{tabular}
% \end{table}

\subsection{Q-SCM quantum layer parameter estimates}
\label{sec:results_quantum}

Table~\ref{tab:quantum_estimates} shows the estimates of the quantum layer parameters.
All three maximum rotation angles are statistically different from zero ($p<0.001$). Since $\theta_{\max,c}=0$ would remove cue $c$ from membership layer entirely, each cue is retained by the data as a driver of state evolution.
The estimated angles are $\theta_{\max,\mathrm{dist}} = 3.12$\,rad (at the upper bound $\pi$), $\theta_{\max,\mathrm{dev}} = 2.61$\,rad, and $\theta_{\max,\mathrm{CTTC}} = 1.76$\,rad.
However, because the distance and CTTC cues are bounded above by one (Eqs.~\ref{eq:distance_cue}--\ref{eq:cttc_cue}), the saturation factor $\tau_{c,t}=s_{c,t}/(k_c+s_{c,t})$ cannot reach one.
The attainable single update rotation is $\theta_{\max,c}/(k_c+1)$ rather than $\theta_{\max,c}$.
Based on the estimated parameters, this attainable rotation is $2.13$\,rad for distance.
It can increase $P(D)$ from $0$ to at most $\sin^2(2.13/2)\approx 0.77$ in a single update, and $0.63$\,rad for CTTC. The deviation cue which has an unbounded encoding can approach its full $2.61$\,rad.
Hence, the attainable per-update responses are behaviourally large but bounded below a complete neutral-to-defensive flip.
The saturation constants differ by a factor of around four ($\hat{k}_{\mathrm{dist}} = 0.47$, $\hat{k}_{\mathrm{dev}} = 1.07$, $\hat{k}_{\mathrm{CTTC}} = 1.78$). This implies the different scales of the underlying cue signals.
Since $k_c$ controls both the saturation rate and the attainable rotation ceiling, having a single shared $k$ across cues could mis-state both.
% The rotation magnitudes are all substantial, indicating that each of the three cues contributes meaningfully to the state evolution.
% The three maximum rotation angles are statistically significant. 
% Their estimated magnitudes are also large relative to their admissible range $[0,\pi]$.
% A rotation of $\pi$ corresponds to a complete neutral-to-defensive transition.
% A fully saturated distance cue can move the state across the full neutral--defensive range in a single update ($\theta_{\max,\mathrm{dist}} = 3.12 \approx \pi$), while the
% deviation and CTTC cues generate maximum rotations of $2.61$\,rad ($\approx 150^{\circ}$) and $1.76$\,rad ($\approx 101^{\circ}$), respectively.
% The deviation cue rotates the state by up to $\theta_{\max,\mathrm{dev}} = 2.61$\,rad ($\approx 150^{\circ}$) per timestep.
% The distance cue reaches the bound at $\theta_{\max,\mathrm{dist}} = 3.12$\,rad ($\approx \pi$), saturated at the feasibility boundary.
% The CTTC cue reaches $\theta_{\max,\mathrm{CTTC}} = 1.76$\,rad ($\approx 101^{\circ}$).
% The cue specific saturation constants differ by a factor of almost four: $\hat{k}_{\mathrm{dist}} = 0.47$, $\hat{k}_{\mathrm{CTTC}} = 1.78$, $\hat{k}_{\mathrm{dev}} = 1.07$.
% This heterogeneity reflects the different scales of the underlying cue signals (inverse separation distance, inverse CTTC, and lane deviation).
% It indicates that imposing a single shared $k$ across cues would substantially mis-specify the rate at which each cue saturates.
The population level initial polar angle is estimated to be $\theta_0 = 0.065$,rad ($\approx 3.7^{\circ}$), which implies a baseline defensive probability of $\sin^2(\theta_0/2) \approx 0.001$.
Although close to lower bound of its admissible range, the estimate is precisely determined (SE $= 0.0078$, $t \approx 8.3$, $p<0.001$). This means that the data distinguish this small nonzero baseline from an exactly neutral initial state. In other words, the drivers enter the observation window in a near pure but not perfectly pure neutral state.
% This implies that drivers enter the observation window in a near pure neutral state.

The estimated relaxation weight is $\widehat{\lambda}=0.44$. 
Before normalization, the relaxation map assigns weight $\widehat{\lambda}$ to the initial baseline state and weight $1-\widehat{\lambda}$ to the pre-relaxation state. 
Because the resulting vector is normalized, this value should not be interpreted as an exact $44\%$ reduction in the defensive state probability or angular displacement.
This model specific relaxation mechanism prevents the carried forward state from retaining the effects of earlier interactions indefinitely.
Since the estimated initialization angle is small but greater than zero, relaxation occurs toward a near neutral initial baseline rather than toward a mathematically pure neutral state.

\begin{table}[h]
\centering
\caption{Quantum layer parameter estimates}
\label{tab:quantum_estimates}
\footnotesize
\begin{tabular}{llll}
\toprule
Parameter & Estimate & SE & \textit{p}-value \\
\midrule
$\theta_{\max,\mathrm{dev}}$   & $2.6143$ & $0.0130$ & $<0.001$ \\
$\theta_{\max,\mathrm{dist}}$  & $3.1191$ & $0.0213$ & $<0.001$ \\
$\theta_{\max,\mathrm{CTTC}}$ & $1.7565$ & $0.0180$ & $<0.001$ \\
$k_{\mathrm{dev}}$             & $1.0721$ & $0.0079$ & $<0.001$ \\
$k_{\mathrm{dist}}$            & $0.4652$ & $0.0088$ & $<0.001$ \\
$k_{\mathrm{CTTC}}$           & $1.7781$ & $0.0131$ & $<0.001$ \\
$\theta_0$                     & $0.0649$ & $0.0078$ & $<0.001$ \\
$\lambda$                      & $0.4415$ & $0.0077$ & $<0.001$ \\
\bottomrule
\end{tabular}
%\vspace{2pt}

%\vspace{-2pt}
%\begin{minipage}{0.96\linewidth}
%\footnotesize
%\textit{Note:} The reported standard errors and $p$-values are approximate %local Wald type summaries. They are calculated using the
%driver-level cluster-robust covariance matrix based on then penalized log-%likelihood evaluated at the penalized Q-SCM  estimate. 
%\end{minipage}

\end{table}

\subsection{Choice Layer Parameter Estimates}
\label{sec:results_choice}

Table~\ref{tab:choice_estimates} presents the within-state MNL coefficients for both models. 
The signs are consistent for the two specifications, except for $\beta^N_{\mathrm{decel}}$ which is insignificant in the classical model.
A positive deceleration constant and negative acceleration constant in the Defensive state, capturing the expected within-state action preferences.
The negative alternative-specific path distance coefficient in the Neutral state indicates that drivers prefer candidate manoeuvres whose centroids remain closer to the intended path.
And a positive intensity coefficient in both states, indicating that frontal threats induce deceleration and rear threats induce acceleration. 
The Neutral state constants differ in sign across the two models.

\newpage
The classical model has $\beta^N_{\mathrm{decel}} < 0$ and $\beta^N_{\mathrm{accel}} > 0$.
The Q-SCM has both Neutral constants positive. This difference is consistent with the two models assigning the Neutral and Defensive classes to qualitatively different observation sets, as discussed below.
The maintain speed alternative ($j=2$) is the reference alternative for
the alternative-specific constants within each latent state. 
Its alternative varying attributes, such as $d_{j,t}$ and $o_{j,t}$, remain in the utility specification.
% The maintain speed ($j=2$) is the within-state reference alternative
% The neutral class is the reference class in the classical model.

\begin{table}[h]
\centering
\footnotesize
\caption{Q-SCM and Classical models within-state choice coefficient estimates.}
\label{tab:choice_estimates}
\begin{tabular}{lllllll}
\toprule
 & \multicolumn{3}{c}{Q-SCM} & \multicolumn{3}{c}{Classical} \\
\cmidrule(lr){2-4} \cmidrule(lr){5-7}
Parameter   & Estimate & SE       & \textit{p}-value & Estimate & SE       & \textit{p}-value \\
\midrule
$\beta^{N}_{\mathrm{decel}}$       & $ 0.4288$ & $0.0336$ & $<0.001$ & \textcolor{gray}{-0.1422} & \textcolor{gray}{0.1038} & \textcolor{gray}{0.171}  \\
$\beta^{N}_{\mathrm{accel}}$       & $ 2.0198$ & $0.0471$ & $<0.001$ & $ 5.7358$ & $0.1595$ & $<0.001$ \\
$\beta^{N}_{\mathrm{dist\_path}}$  & $-2.1288$ & $0.0524$ & $<0.001$ & $-5.7142$ & $0.1700$ & $<0.001$ \\
$\beta^{N}_{\mathrm{nonoverlap}}$     & $ 0.5190$ & $0.0299$ & $<0.001$ & $ 0.6494$ & $0.0623$ & $<0.001$ \\
$\beta^{N}_{\mathrm{int}}$         & $ 0.3087$ & $0.0315$ & $<0.001$ & $ 0.5562$ & $0.0661$ & $<0.001$ \\
$\beta^{D}_{\mathrm{decel}}$       & $ 1.8540$ & $0.0392$ & $<0.001$ & $ 1.3543$ & $0.0177$ & $<0.001$ \\
$\beta^{D}_{\mathrm{accel}}$       & $-2.0404$ & $0.0879$ & $<0.001$ & $-1.7118$ & $0.0777$ & $<0.001$ \\
$\beta^{D}_{\mathrm{nonoverlap}}$     & $ 0.9378$ & $0.1485$ & $<0.001$ & $ 0.4611$ & $0.0807$ & $<0.001$ \\
$\beta^{D}_{\mathrm{int}}$         & $ 0.3348$ & $0.0652$ & $<0.001$ & $ 0.1654$ & $0.0344$ & $<0.001$ \\
\bottomrule
\end{tabular}
\end{table}
%\vspace{-2pt}
%\begin{minipage}{0.96\linewidth}
%\footnotesize
%\textit{Note:} The reported standard errors and $p$-values are approximate local Wald type summaries. They are calculated using the
%driver-level cluster-robust covariance matrix based on the  npenalized log-likelihood evaluated at the penalized Q-SCM  estimate. 
%\end{minipage}

\begin{table}[h]
\centering
\small
\caption{Classical model class membership coefficients estimates (Eq.~\ref{eq:classical_class}).}
\label{tab:classical_class}
\footnotesize
\begin{tabular}{llll}
\toprule
Parameter (Defensive Class) & Estimate & SE & \textit{p}-value \\
\midrule
$\gamma_1$ (CTTC-based encoded cue coefficient) & $-4.4431$ & $0.1192$ & $<0.001$ \\
$\gamma_2$ (constant)         & $ 1.0134$ & $0.0224$ & $<0.001$ \\
\bottomrule
\end{tabular}
\end{table}
% \vspace{-2pt}
%\begin{minipage}{0.96\linewidth}
%\footnotesize
%\textit{Note:} The reported standard errors and $p$-values are approximate local Wald type summaries. They are calculated using the
%driver-level cluster-robust covariance matrix based on the  npenalized log-likelihood evaluated at the penalized Q-SCM  estimate. 
%\end{minipage}

The classical class membership estimates suggest a switching pattern rather than a simple conflict to defensive state response. 
The positive constant indicates that under weak or moderate instantaneous conflict, the model assigns a higher baseline probability to the Defensive class. 
This class is behaviourally consistent with cautious driving in the within-state choice layer, as it favours deceleration and avoids acceleration relative to maintaining speed. 
However, the negative CTTC coefficient indicates that, as the conflict becomes more immediate, represented by smaller CTTC and larger values of $1/(1+\mathrm{CTTC}_t)$, the probability of belonging to this deceleration oriented class decreases. 

Thus, the classical model suggests that drivers may move away from a cautious deceleration state when a severe conflict materializes.
Instead they enter a more committed manoeuvre state, characterized by stronger acceleration and lane following behaviour. 
This interpretation is plausible in urgent interactions.
In such instances, a driver may choose to accelerate through or complete the manoeuvre rather than continue slowing down. 
However, it also shows that the classical state labels should be interpreted cautiously.
The class called Defensive is defensive in terms of its within-state manoeuvre preferences, but it is not directly activated by severe instantaneous CTTC. 
This contrasts with Q-SCM, where the defensive probability is generated through sequential cue processing, accumulated state evolution, and state carry forward rather than a single contemporaneous CTTC membership term.

The unpenalized log-likelihood evaluated at the reported Q-SCM estimate is 471.8 units higher than that of the classical benchmark. 
Since Q-SCM estimate is obtained using regularization and its membership layer uses a broader set of cues, this difference is interpreted as a descriptive improvement relative to the selected contemporaneous CTTC-based benchmark. It should not be attributed exclusively to quantum sequential history and order effects.
% We beleive that the 471.8 LL improvement is not a consequence of the Q-SCM having strictly more parameters than the classical baseline, because the AIC penalty already adjusts for that. Rather, it identifies a real difference in the structure of the class assignment mapping. 
We highlight three methodological features of the Q-SCM that are absent from the classical baseline specification used here.
The classical class probability $P(D_t)$ is a function only of the contemporaneous cue $\mathrm{CTTC}_t$. Two observations with identical CTTC are assigned identical $P(D)$ irrespective of the driver's history. 

\newpage
The Q-SCM's $P(D_t)$, in contrast, is a function of the entire cue trajectory $\{s_{\mathrm{dist},u},\,s_{\mathrm{CTTC},u},\,s_{\mathrm{dev},u}\}_{u\le t}$ for the driver, integrated through the recursive quantum state evolution. 
Under the classical baseline used in this paper, two drivers with identical CTTC at $t$ receive identical $P(D_t)$, even if their cue histories differ. 
In Q-SCM, the same two drivers can have different $P(D_t)$ because the current state carries the accumulated effect of prior cue updates.
% , and inspection of the per observation log-likelihood differences shows that the advantage is broadly distributed across the dataset rather than concentrated in a few outlier observations.

The aggregate gain is consistent but modest at the level of individual observations. 
Figure~\ref{fig:cumulative_ll} shows the cumulative difference
in per observation log likelihood between the two models across all
$85{,}754$ observations.
As can be seen fromt he figure, the curve rises steadily to its final value of $+471.8$ without large jumps.
This implies that the Q-SCM advantage is broadly distributed rather than driven by a few outliers. 
At the same time, the per observation improvement is small on average ($+471.8/85{,}754 \approx +0.006$), and Q-SCM assigns a higher likelihood than the classical model to roughly $54\%$ of drivers and of observations. 
The contribution of Q-SCM is therefore best understood as a
consistent structural improvement in the class assignment mechanism,
together with a more behaviourally interpretable latent state, rather than a large gain in point wise predictive accuracy.

% add the individual cumulative LL fig here
\begin{figure}[h]
\centering
\includegraphics[width=0.8\linewidth]{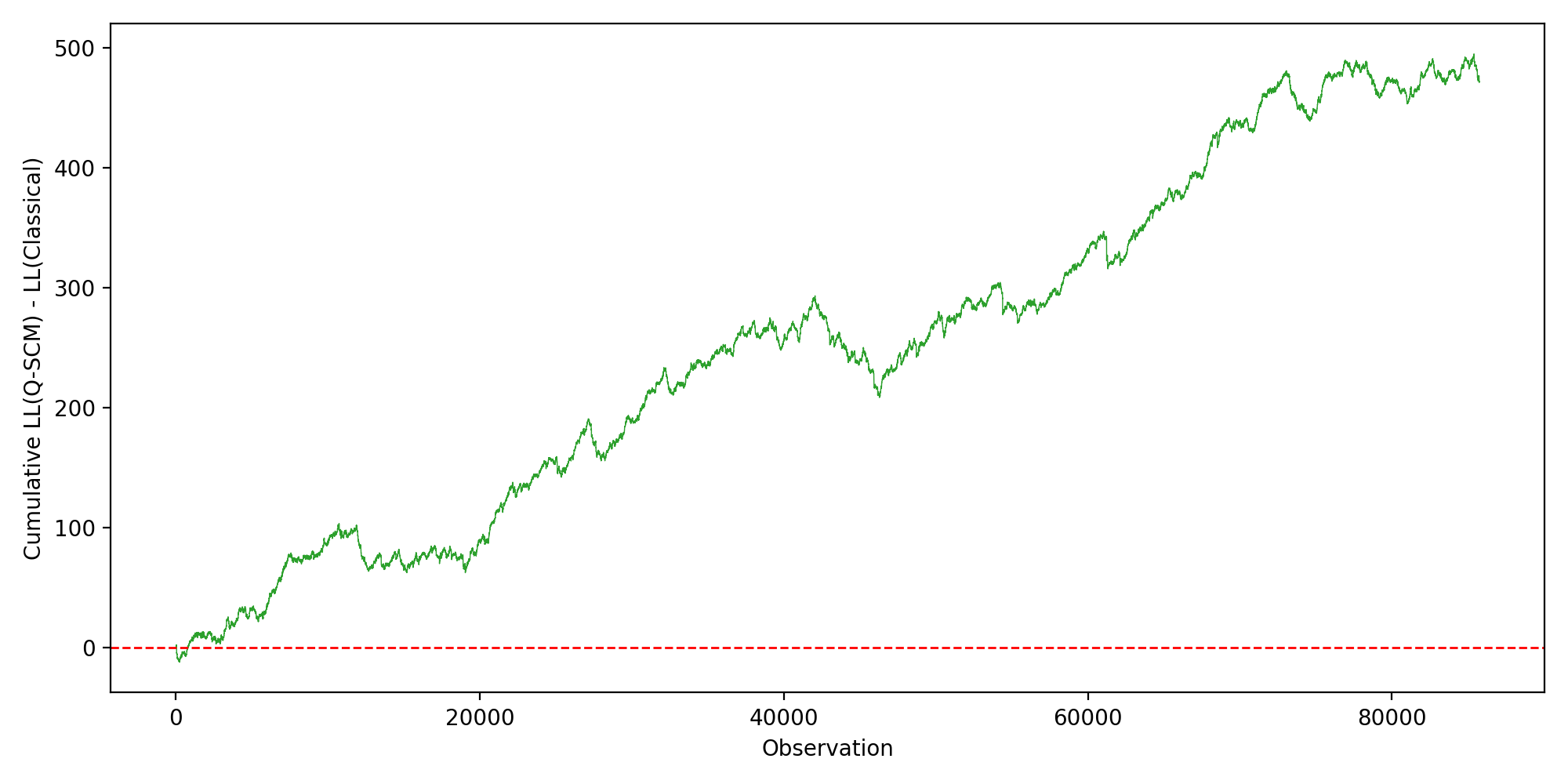}
\caption{Cumulative difference in per observation log likelihood between Q-SCM and the classical model.}
\label{fig:cumulative_ll}
\end{figure}

Unlike the classical class membership, where the effect of CTTC on class utility is controlled directly by the estimated coefficient and can change sharply from one observation to the next ,the Q-SCM updates the defensive probability by a bounded state rotations. 
Each cue can only move the cognitive state by a limited amount at each timestep, and the saturating response function prevents abrupt unbounded changes. 
The no pendulum constraint ensures that sustained threat exposure does not produce unrealistic oscillation in the defensive probability. 
When the threat weakens or disappears, the estimated relaxation weight $\lambda=0.44$ allows the state to move back to the neutral baseline. 
Hence, Q-SCM provides a smoother and more behaviourally constrained representation of how defensive state probability increases under continued threat and decreases after the threat passes.

Figure~\ref{fig:pd_histograms} shows distributions of $P(D_t)$ for the 85{,}754 observations in the two models. 
The classical distribution is concentrated near $P(D)\approx0.73$ with a long left tail.
Since the membership utility of the Defensive class decreases when the short CTTC conflict term increases, observations with severe instantaneous conflict are shifted away from the Defensive toward the Neutral class.
The Q-SCM distribution is bimodal instead.
Drivers either are near $P(D)\approx 0$ or near $P(D)\approx 1$, with intermediate probabilities representing observations during state transition. 
This near neutral versus near defensive separation is a direct consequence of the bounded sequential response. 
Once cumulative cue processing has moved the state vector close to one pole, the state tends to remain near that pole until subsequent cue evidence and relaxation move it away. 

\newpage
The two models identify qualitatively different latent constructs. 
The classical construct corresponds to a default cruise mode in which deceleration is the preferred response. 
The Q-SCM construct corresponds to a committed response mode that drivers enter only after sufficient accumulated threat has been processed over time. 
Both interpretations are internally consistent, but the Q-SCM interpretation is more closely aligned with the behavioural notion of defensive driving as a state shift triggered by accumulated risk perception.

% Figure~\ref{fig:pd_trajectories} illustrate the qualitative difference at the driver-trajectory level
% Q-SCM trajectories show persistent runs in the high-$P(D)$ regime once a driver has been pushed there,  classical trajectories track instantaneous CTTC without the kind of within-driver autocorrelation that the quantum walk imposes.

\begin{figure}[H]
\centering
\includegraphics[width=0.9\linewidth]{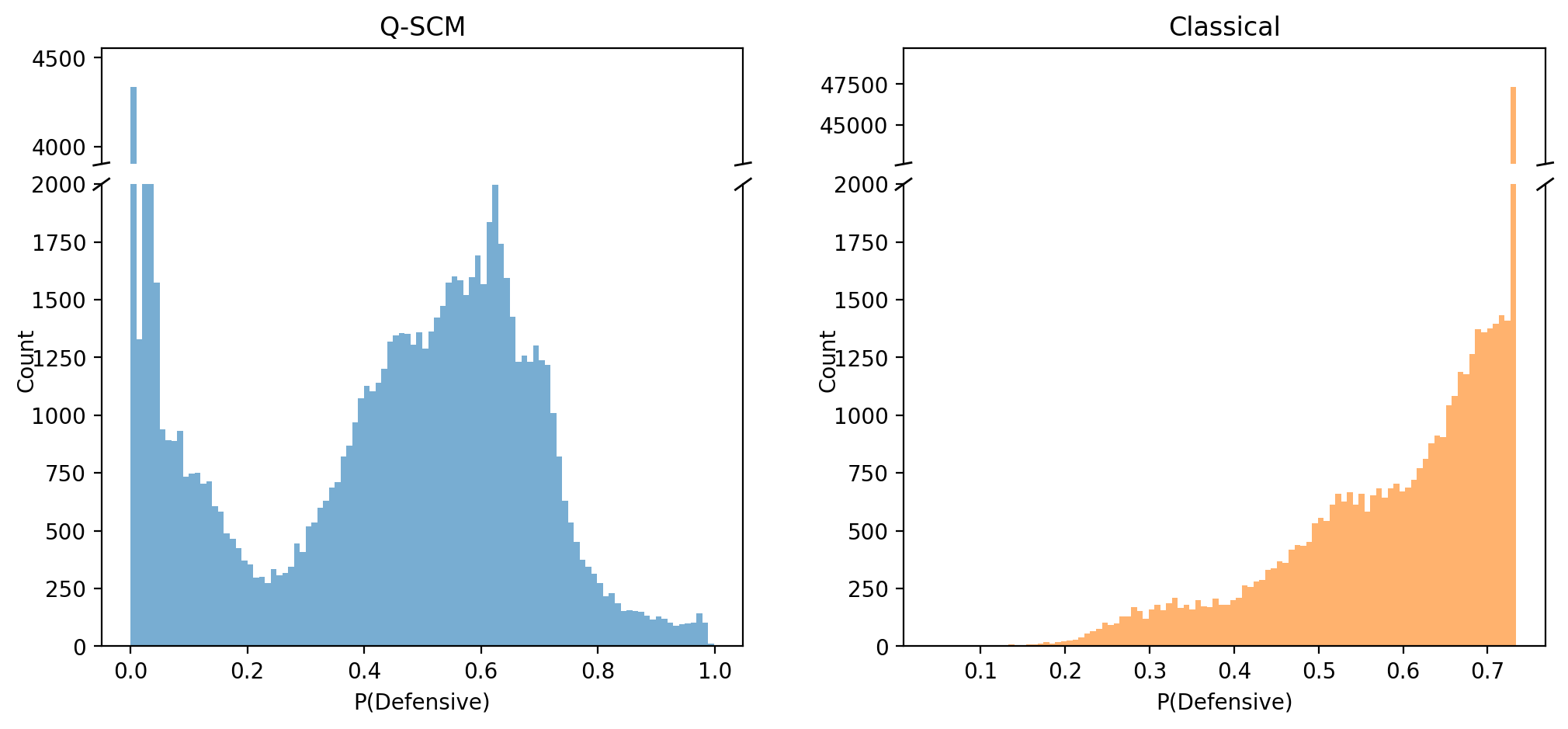}
\caption{Distribution of $P(\mathrm{Defensive}_t)$ in the Q-SCM and classical baseline.}
\label{fig:pd_histograms}
\end{figure}

% add synthetic
% add synthetic
\subsection{Synthetic Data Estimation}
\label{sec:synthetic}

The purpose of the synthetic data estimation is to evaluate the proposed Q-SCM in a controlled finite sample setting in which the data generating mechanism and parameter values are known by construction.
Unlike the empirical application, where the driver's true latent state process is 
unobserved, the synthetic experiment generates actions directly from the Q-SCM using a fixed data-generating parameter vector. This provides a check of whether the estimation procedure can approximately recover the imposed Q-SCM parameters and whether the resulting action patterns can be distinguished from those represented by the selected classical benchmark. 
% The exercise is a single finite sample recovery experiment rather than a formal proof of global identification or a repeated sampling Monte Carlo assessment of estimator bias.
The synthetic dataset is generated in two stages. First, $9{,}610$ driver trajectories are sampled with replacement from the original set of $9{,}610$ drivers in the rounD Neuweiler dataset. For each sampled driver, the complete observed trajectory is retained including chronological sequence of cue and alternative-specific variables$ \left( \mathrm{mDist}_t, \mathrm{CTTC}_t, d^{\mathrm{ego}}_{\mathrm{path},t},
o_{j,t}, d_{j,t}, f_t, r_t \right)$.
% Each sampled trajectory is assigned a unique synthetic driver identifier, including when the same empirical trajectory is selected more than once. This ensures that recursive state evolution and driver-level clustering are applied separately to each bootstrap draw. The resulting synthetic sample has $9{,}610$ synthetic drivers and $85{,}569$ observations.
Sampling complete trajectories preserves the within-driver temporal ordering and the joint empirical dependence among the cue and feature variables within each selected trajectory. Because the trajectories are sampled with replacement, the resulting synthetic dataset is a new
bootstrap analysis rather than a direct replication of the empirical
sample. 
In the second stage, the action variable is simulated. The data generating parameter vector $\boldsymbol{\Theta}^{\star}$ is specified by the reported empirical Q-SCM estimates. Once fixed for the synthetic experiment, these values are then treated as the known data generating parameters. 
At each synthetic observation, one of the three manoeuvres is sampled from the Q-SCM choice probabilities at $\boldsymbol{\Theta}^{\star}$. The simulated actions are treated as the dependent variable in the synthetic dataset.
Finally, Q-SCM is re-estimated using the synthetic data, and the recovered estimates are compared with the known data-generating values $\boldsymbol{\Theta}^{\star}$. The synthetic Q-SCM is estimated using bounded unpenalized maximum likelihood, with the regularization weight set
to zero.
Figure~\ref{fig:synth_distributions} shows the empirical and synthetic distributions of the cue variables, alternative-specific variables, and action shares. 
The figure shows that the driver-level bootstrap preserves the empirical distributions and within-driver sequences represented in the sampled trajectories.

\begin{figure}[h]
\centering
\includegraphics[width=0.98\linewidth]
{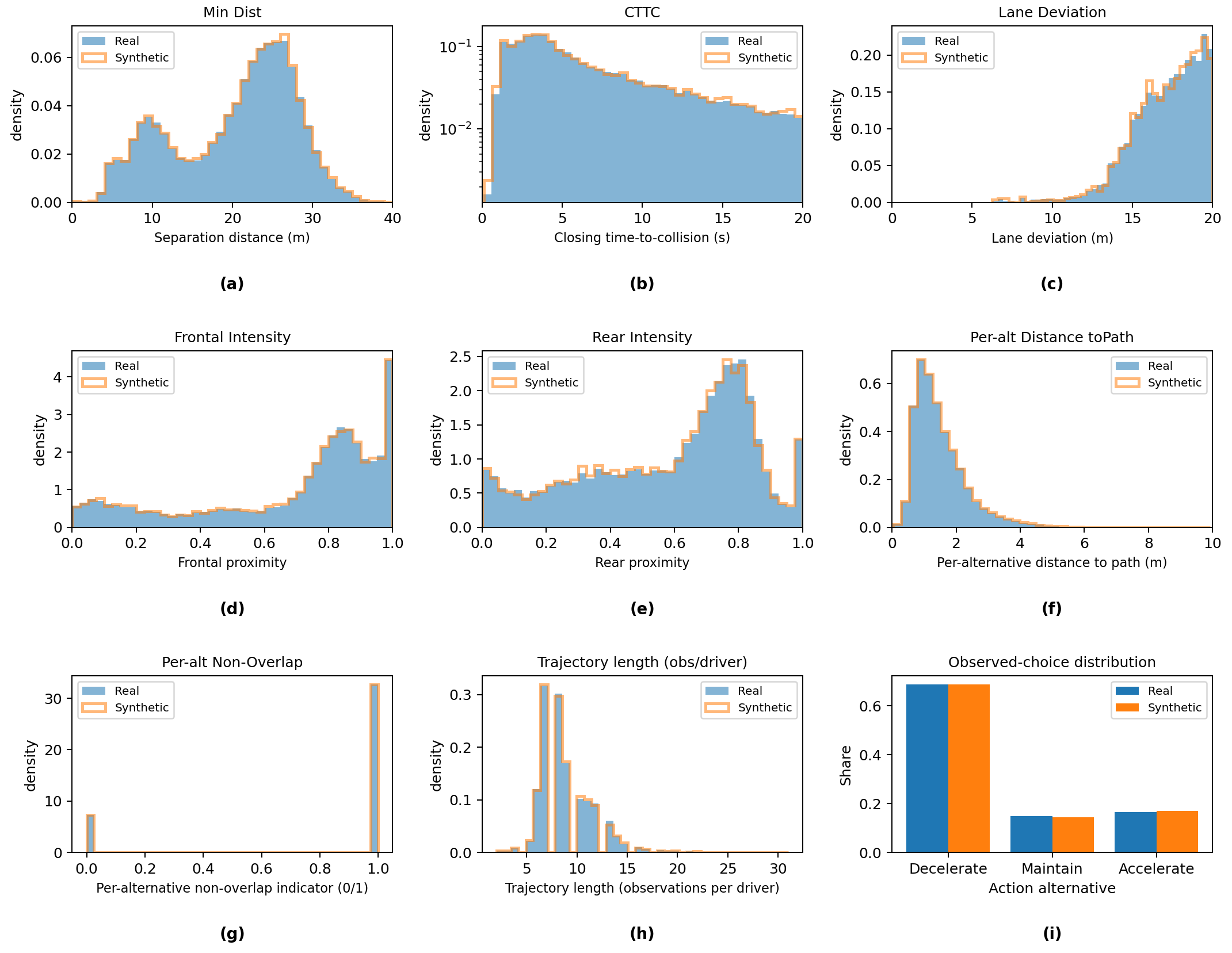}
\caption{Comparison of empirical and synthetic distributions.}
\label{fig:synth_distributions}
\end{figure}

Table~\ref{tab:synth_recovery} shows the Q-SCM parameter recovery results. For each parameter, the table reports the fixed data generating value, estimate obtained from the synthetic data, the difference labelled as bias, the driver-level cluster robust standard error, and the standardized recovery statistic. 
% For parameter $\Theta_r$, the reported statistic is calculated as:
% \begin{equation}
% t^{\mathrm{rec}}_r
% =
% \frac{
% \widehat{\Theta}^{\mathrm{synth}}_r-\Theta^{\star}_r
% }{
% \widehat{\operatorname{SE}}_{\mathrm{cl}}
% \left(
% \widehat{\Theta}^{\mathrm{synth}}_r
% \right)
% }.
% \label{eq:synthetic_recovery_statistic}
% \end{equation}
% The statistic evaluates whether the recovered estimate differs from the fixed data-generating value relative to its estimated cluster robust uncertainty. Because the results are based on one synthetic analysis, the reported differences should not be interpreted as estimates of the repeated-sampling bias of the estimator.
In this analysis, several Q-SCM parameters are recovered closely. The relaxation weight $\lambda$ differs from its data-generating value by only $-0.0003$. The maximum rotation parameters for the distance and deviation cues, $\theta_{\max,\mathrm{dist}}$ and $\theta_{\max,\mathrm{dev}}$, are also close to their designated data generating values.

\begin{table}[h]
\centering
\footnotesize
\caption{Q-SCM estimates using synthetic data. }
\label{tab:synth_recovery}
\begin{tabular}{lrrrrr}
\toprule
Parameter & True (Empirical) & Estimate (Synthetic) & Bias & SE & $t$-stat \\
\midrule
\multicolumn{6}{l}{\textit{Quantum layer}} \\
$\theta_{\max,\mathrm{dev}}$    & $ 2.6143$ & $ 2.6033$ & $-0.0110$ & $0.0194$ & $-0.57$ \\
$\theta_{\max,\mathrm{dist}}$   & $ 3.1191$ & $ 3.1349$ & $+0.0158$ & $0.0699$ & $+0.23$ \\
$\theta_{\max,\mathrm{CTTC}}^{*}$ & $ 1.7565$ & $ 2.1316$ & $+0.3751$ & $0.0478$ & $+7.85$ \\
$k_{\mathrm{dev}}^{*}$    & $ 1.0721$ & $ 0.9617$ & $-0.1104$ & $0.0134$ & $-8.26$ \\
$k_{\mathrm{dist}}$             & $ 0.4652$ & $ 0.5307$ & $+0.0655$ & $0.0528$ & $+1.24$ \\
$k_{\mathrm{CTTC}}^{*}$  & $ 1.7781$ & $ 1.5893$ & $-0.1888$ & $0.0220$ & $-8.57$ \\
$\theta_0^{*}$            & $ 0.0649$ & $ 0.0796$ & $+0.0147$ & $0.0064$ & $+2.29$ \\
$\lambda$                       & $ 0.4415$ & $ 0.4412$ & $-0.0003$ & $0.0091$ & $-0.03$ \\
\midrule
\multicolumn{6}{l}{\textit{Choice layer}} \\
$\beta^{N,*}_{\mathrm{decel}}$       & $ 0.4288$ & $ 0.3315$ & $-0.0973$ & $0.0234$ & $-4.16$ \\
$\beta^{N}_{\mathrm{accel}}$                 & $ 2.0198$ & $ 2.0219$ & $+0.0021$ & $0.0265$ & $+0.08$ \\
$\beta^{N,*}_{\mathrm{dist\_path}}$  & $-2.1288$ & $-2.2058$ & $-0.0770$ & $0.0351$ & $-2.19$ \\
$\beta^{N}_{\mathrm{nonoverlap}}$               & $ 0.5190$ & $ 0.5487$ & $+0.0297$ & $0.0292$ & $+1.02$ \\
$\beta^{N}_{\mathrm{int}}$                   & $ 0.3087$ & $ 0.3156$ & $+0.0069$ & $0.0307$ & $+0.22$ \\
$\beta^{D,*}_{\mathrm{decel}}$      & $ 1.8540$ & $ 1.9421$ & $+0.0881$ & $0.0414$ & $+2.13$ \\
$\beta^{D}_{\mathrm{accel}}$                 & $-2.0404$ & $-1.6706$ & $+0.3698$ & $0.2520$ & $+1.47$ \\
$\beta^{D}_{\mathrm{nonoverlap}}$               & $ 0.9378$ & $ 1.2351$ & $+0.2973$ & $0.2019$ & $+1.47$ \\
$\beta^{D}_{\mathrm{int}}$                   & $ 0.3348$ & $ 0.2732$ & $-0.0616$ & $0.0708$ & $-0.87$ \\
\bottomrule
\end{tabular}
\vspace{2pt}

\begin{minipage}{0.96\linewidth}
\footnotesize
\textit{Note:} ``True (Empirical)'' is the fixed data-generating values used to simulate the synthetic actions. Q-SCM is re-estimated using bounded unpenalized maximum likelihood. Bias is calculated as the synthetic estimate minus the fixed data generating value.  
\end{minipage}
\end{table}

The clearest joint parameter trade-off occurs between the CTTC cue parameters $\theta_{\max,\mathrm{CTTC}}$ and $k_{\mathrm{CTTC}}$. 
The recovered value of $\theta_{\max,\mathrm{CTTC}}$ is higher than its data generating value.
The recovered value of $k_{\mathrm{CTTC}}$ is lower. These opposite movements are consistent with the fact that the two parameters jointly determine the bounded cue response function.
A larger maximum rotation can be partly offset by a lower saturation constant.
This produces a similar effective response over the cue values represented in the sample. The two parameters should be interpreted jointly through the resulting rotation response function and not by their individual differences from the data generating values.

The maximum rotation for the deviation cue is recovered closely, although $k_{\mathrm{dev}}$ has a large standardized recovery difference.
The initial angle $\theta_0$ differs from its data generating value by only $0.0147$\,rad, although its standardized difference exceeds $1.96$ since its estimated standard error is small. The relaxation weight is recovered particularly closely.
Most choice layer estimates are also close to their data generating values. 
These results provide a finite-sample recovery diagnostic for the reported synthetic analysis. They do not establish the repeated sampling bias or coverage properties of the estimator.

The synthetic analysis also has a descriptive check of if the  Q-SCM generated action patterns can be diiferentiated from those represented by the selected classical benchmark. 
Since that the synthetic actions are generated from Q-SCM, the contemporaneous classical membership specification is mis-specified relative to the known data generating process and is expected to fit less well.
Table~\ref{tab:gof_synth} compares the maximized log-likelihood and information criteria of the two un-penalized synthetic data estimates.
Q-SCM has a log-likelihood of $-63{,}513.84$, compared with $-65{,}376.97$ for the classical benchmark. This leads to a descriptive difference of $1{,}863.13$ log-likelihood units in favour of Q-SCM.
 The Q-SCM membership layer processes distance, CTTC, and deviation cues sequentially, but the selected classical membership model uses contemporaneous CTTC only. 
 The fit difference therefore reflects the combined effects of the membership layer information set and the  state evolution mechanism. It does not isolate the contribution of cue history or cue order effects alone.

\begin{table}[h]
\centering
\caption{Goodness of fit for the synthetic dataset.}
\label{tab:gof_synth}
\footnotesize
\begin{tabular}{lrrrr}
\toprule
Model      & Log Likelihood   & $df$ & AIC          & BIC \\
\midrule
Q-SCM      & $-63{,}513.84$   & $17$ & $127{,}061.69$ & $127{,}220.76$ \\
Classical  & $-65{,}376.97$   & $11$ & $130{,}775.94$ & $130{,}878.86$ \\
\midrule
$\Delta$   & $+1{,}863.13$    & $+6$ & $-3{,}714.25$  & $-3{,}658.10$ \\
\bottomrule
\end{tabular}
\vspace{2pt}

\begin{minipage}{0.96\linewidth}
\footnotesize
\textit{Note:} Both Q-SCM and the classical benchmark are estimated
without regularization in the synthetic experiment. AIC and BIC are calculated using the
maximized unpenalized log-likelihoods. 
\end{minipage}
\end{table}

Table~\ref{tab:classical_synth} compares the classical model estimates obtained from the synthetic data with the corresponding classical model estimates obtained from the empirical data. 
The empirical estimates are included only as descriptive reference values. 
They are not true parameter values for the synthetic experiment because the synthetic actions are generated from Q-SCM rather than from the classical model.

\newpage
Several coefficients change when the classical model is fitted to the Q-SCM generated actions. The magnitude of the CTTC based class membership coefficient decreases from $\widehat{\gamma}_{1}=-4.44$ in the empirical model to $\widehat{\gamma}_{1}^{\mathrm{synth}}=-0.79$ in the synthetic model.
The class membership constant decreases from $\widehat{\gamma}_{2}=1.01$ to $\widehat{\gamma}_{2}^{\mathrm{synth}}=0.16$. 
The Neutral state acceleration constant also decreases in magnitude, from $\widehat{\beta}^{N}_{\mathrm{accel}}=5.74$ to $3.06$, as does the alternative-specific path-distance coefficient, from $\widehat{\beta}^{N}_{\mathrm{dist\_path}}=-5.71$ to $-2.86$.

These are consistent with two related effects. First, the classical membership equation uses only contemporaneous CTTC and cannot directly represent the distance, deviation, and history-dependent information embedded in the Q-SCM-generated actions. 
Second, under this misspecification, the classical membership and within-state utility coefficients jointly adapt to approximate the choice probabilities generated by Q-SCM. 
The resulting differences should therefore not be interpreted as conventional estimation bias relative to a true classical parameter vector. Rather, they illustrate how the selected classical specification reallocates explanatory influence across its membership and utility parameters when fitted to data generated by a sequential state evolution mechanism.

The synthetic comparison supports a limited structural conclusion. 
Under the controlled Q-SCM data generating process, the selected contemporaneous classical benchmark provides a less adequate approximation of the Q-SCM generated choice probabilities and resulting action patterns. 
The exercise does not directly evaluate recovery of the Q-SCM latent state membership probabilities by the classical model.
It does not establish that the empirical log-likelihood difference is free from all numerical or specification effects.

\begin{table}[h]
\centering
\footnotesize
\caption{Classical model estimates using synthetic data.}
\label{tab:classical_synth}
\begin{tabular}{lrrrr}
\toprule
Parameter & True (Empirical) & Estimate (synthetic) & SE & \textit{p}-value \\
\midrule
\multicolumn{5}{l}{\textit{Within-state choice utility}} \\
$\beta^{N}_{\mathrm{decel}}$        & $-0.1422$ & $ 0.3339$ & $0.0679$ & $<0.001$ \\
$\beta^{N}_{\mathrm{accel}}$        & $ 5.7358$ & $ 3.0579$ & $0.1011$ & $<0.001$ \\
$\beta^{N,*}_{\mathrm{dist\_path}}$ & $-5.7142$ & $-2.8618$ & $0.0879$ & $<0.001$ \\
$\beta^{N}_{\mathrm{nonoverlap}}$      & $ 0.6494$ & $ 0.8260$ & $0.0429$ & $<0.001$ \\
$\beta^{N}_{\mathrm{int}}$          & $ 0.5562$ & $ 0.1707$ & $0.0446$ & $<0.001$ \\
$\beta^{D,*}_{\mathrm{decel}}$      & $ 1.3543$ & $ 1.5614$ & $0.0430$ & $<0.001$ \\
$\beta^{D}_{\mathrm{accel}}$        & $-1.7118$ & $-1.6638$ & $0.1482$ & $<0.001$ \\
$\beta^{D}_{\mathrm{nonoverlap}}$      & $ 0.4611$ & $ 0.6837$ & $0.1468$ & $<0.001$ \\
$\beta^{D,*}_{\mathrm{int}}$        & $ 0.1654$ & $ 0.3598$ & $0.0600$ & $<0.001$ \\
\midrule
\multicolumn{5}{l}{\textit{Class membership}} \\
$\gamma_{1}$ (CTTC coefficient) & $-4.4431$ & $-0.7893$ & $0.1251$ & $<0.001$ \\
$\gamma_{2}$ (constant)         & $ 1.0134$ & $ 0.1602$ & $0.0401$ & $<0.001$ \\
\bottomrule
\end{tabular}
\vspace{2pt}

\begin{minipage}{0.96\linewidth}
\footnotesize
\textit{Note:} These values are not true parameters for the synthetic experiment because the synthetic actions are generated from Q-SCM. The synthetic estimates are obtained by fitting the unpenalized classical model to the Q-SCM generated actions. The reported standard errors are obtained from the inverse observed Hessian.
\end{minipage}
\end{table}

\section{Conclusions}
\label{sec:conc}
Motivated by quantum cognition theory, the Quantum-Sequential Choice Model (Q-SCM) offers a novel two-stage architecture to model dynamic driver behaviour. The framework represents a driver's evolving latent mental state, specifically transitioning between Neutral and Defensive regimes, using a two-dimensional complex state vector. This quantum state evolution generates time-varying latent class membership probabilities, which are then combined with classical conditional formulation to yield unconditional probablity for maneuver choices. The methodology is distinguished by its integration of state carry-forward, non-commuting cue-dependent rotations, phase information, and post-cue relaxation into a single recursive model. To govern the cue-driven state evolution before relaxation, the model incorporates formal mathematical safeguards. A monotonicity constraint ensures that state-changing updates do not inadvertently reduce the pre-relaxation Defensive state probability. Additionally, a geodesic safeguard redirects updates toward the Defensive pole and limits the applied angle to prevent overshooting. 

When applied to 85,754 observations from 9,610 drivers in the rounD naturalistic roundabout dataset, the Q-SCM outperformed a classical contemporaneous latent class benchmark. It achieved this by sequentially processing separation distance and time-to-collision as state-changing cues on the x-axis, alongside path deviation as a phase-changing cue on the z-axis. The inferred state trajectories successfully captured complex behavioural dynamics targeted by the specification. These dynamics include bounded cue responses, the accumulation of historical cue exposure, behavioural persistence even after a threat begins to weaken, and the eventual gradual recovery toward a baseline state. These properties emerge mathematically from the specified recursion rather than direct cognitive measurement. The synthetic data analysis using resampled empirical trajectories successfully validated the framework by closely recovering the underlying state evolution and choice parameters.

Despite these advancements, the current framework is constrained by its parsimonious two-state, pure-state formulation, which cannot capture unobserved cognitive noise, stochastic decoherence, or other driving regimes such as aggression or distraction. Furthermore, the model relies on rigid structural assumptions, including fixed cue-to-axis assignments, a static sequence for processing cues, and the use of absolute cue levels rather than temporal variations. It also assumes homogeneous parameters across the population, lacking the capacity to estimate baseline heterogeneity or individual behavioral sensitivities, and its validation is currently limited to a 1-second temporal resolution at a single roundabout location.

The future research will explore expanding the latent state space using generalized generators instead of relying solely on Pauli matrices, and adopting density-matrix representations to model mixed cognitive configurations. Introducing random parameters through simulated maximum likelihood or hierarchical Bayesian methods could effectively separate within-driver history dependence from between-driver heterogeneity. Additionally, the fixed cue-processing constraints could be relaxed to account for varying attention, salience, and non-commuting processing orders. A rigorous external validation is required, both through comparisons with richer dynamic classical models (e.g., hidden Markov transitions or autoregressive states), and via controlled virtual reality driving experiments that utilize neurophysiological sensors (e.g., electroencephalography and eye tracking) to empirically link predicted latent states with observed cognitive shifts.

\section*{Acknowledgments}
The first author is funded by the Canada Postdoctoral Research Award (CPRA) through the Natural Sciences and Engineering Research Council of Canada (NSERC). First and second authors are funded by the Canada Research Chair program in Disruptive Transportation Technologies and Services (CRC-2021-00480).

\appendix
\titleformat{\section}
  {\normalfont\bfseries\normalsize}
  {Appendix~\thesection}
  {0.6em}
  {}

% ===============================

% add table summary

% \clearpage
%\newpage
\section{Notation Summary}
\addcontentsline{toc}{section}{Notation Summary}

The following table summarizes the indices, sets, variables, operators,
probabilities, and model parameters used in the paper.

\small
\setlength{\LTpre}{4pt}
\setlength{\LTpost}{4pt}
\setlength{\tabcolsep}{5pt}
\renewcommand{\arraystretch}{1.15}

\begin{longtable}{
    L{0.1\linewidth}
    L{0.61\linewidth}
    L{0.12\linewidth}
}
% \footnotesize
\caption{Summary of indices, sets, variables, operators, and parameters
used in the Q-SCM.}
\label{tab:notation_summary}
\\

\toprule
\textbf{Symbol} &
\textbf{Definition} &
\textbf{Range/unit}
\\
\midrule
\endfirsthead

\multicolumn{3}{c}
{\tablename\ \thetable\ -- \textit{continued from the previous page}}
\\
\toprule
\textbf{Symbol} &
\textbf{Definition} &
\textbf{Range/unit}
\\
\midrule
\endhead

\midrule
\multicolumn{3}{r}{\textit{Continued on the next page}}
\\
\endfoot

\bottomrule
\endlastfoot

% ============================================================
\multicolumn{3}{l}{\textit{Indices and sets}}
\\
\midrule

$i$ &
Driver index; &
$i=1,\ldots,I$
\\

$I$ &
Total number of drivers; &
Positive integer
\\

$t$ &
Trajectory-sample index within the observed driving episode of driver
$i$; &
$t=0,\ldots,T_i$
\\

$T_i$ &
Final trajectory-sample index for driver $i$; &
Nonnegative integer
\\

$m$ &
Within-sample state index. The state
$\ket{\psi_{it,m}}$ is the state after the first $m$ cue updates have
been completed. Cue updates themselves run from
$m=0$ to $M_{it}-1$; &
$m=0,\ldots,M_{it}$
\\

$M_{it}$ &
Number of perceptual cues processed by driver $i$ at trajectory sample
$t$; &
Nonnegative integer
\\

$j$ &
Observed manoeuvre-alternative index; &
$j\in\mathcal{A}$
\\

$r,\;j'$ &
Dummy alternative indices used in summations over the choice set; &
$\in\mathcal{A}$
\\

$\mathcal{A}$ &
Set of observed manoeuvre alternatives; &
Finite set
\\

$s$ &
Latent mental state index; &
$s\in\mathcal{S}$
\\

$\mathcal{S}$ &
Set of latent mental states,
$\mathcal{S}=\{N,D\}$; &
Two states
\\

$N$ &
Neutral latent mental state; &
--
\\

$D$ &
Defensive latent mental state; &
--
\\

$k$ &
Basis-state index in the general $K$-state formulation; &
$k=1,\ldots,K$
\\

$K$ &
Dimension of the general latent mental state space. The proposed Q-SCM
uses $K=2$; &
Positive integer
\\

$c_{it,m}$ &
Type of cue processed by driver $i$ at trajectory sample $t$ and
within-sample update $m$; &
$c_{it,m}\in\mathcal{K}$
\\

$\mathcal{K}$ &
Set of admissible cue types; &
Finite set
\\

$a$ &
Rotation-axis index used in $\mathcal{U}_a(\theta)$; &
$a\in\{x,z\}$
\\

% ============================================================
\addlinespace
\multicolumn{3}{l}{\textit{Latent mental state representation}}
\\
\midrule

$\mathcal{H}_K$ &
$K$-dimensional complex Hilbert space used to represent the general
latent mental state; &
$\mathbb{C}^{K}$
\\

$\ket{S_k}$ &
Basis vector corresponding to latent mental state $S_k$ in the general
$K$-state formulation; &
Unit vector
\\

$\ket{N}$ &
Neutral-state basis vector,
$\ket{N}=[1\;0]^{\mathsf T}$; &
Unit vector
\\

$\ket{D}$ &
Defensive state basis vector,
$\ket{D}=[0\;1]^{\mathsf T}$; &
Unit vector
\\

$\ket{\psi_{it,m}}$ &
Latent mental state of driver $i$ at trajectory sample $t$ after the
first $m$ within-sample cue updates have been completed; &
Normalized
\\

$\ket{\psi_{it,0}}$ &
Beginning-of-sample state before any cue at trajectory sample $t$ is
processed; &
Normalized
\\

$\ket{\psi^{*}_{it,m+1}}$ &
Candidate state produced by cue update $m$, before application of the
model-control operation; &
Normalized
\\

$\ket{\widetilde{\psi}_{it}}$ &
Pre-relaxation state after all $M_{it}$ cues at trajectory sample $t$
have been processed,
$\ket{\widetilde{\psi}_{it}}=\ket{\psi_{it,M_{it}}}$; &
Normalized
\\

$\ket{\psi_{it}}$ &
Final post-relaxation latent mental state at trajectory sample $t$; &
Normalized
\\

$\ket{\psi_{i,t+1,0}}$ &
Beginning-of-sample state at trajectory sample $t+1$, obtained by
carrying forward $\ket{\psi_{it}}$; &
Normalized
\\

$\alpha_{k,it,m}$ &
Complex probability amplitude associated with general basis state
$\ket{S_k}$; &
$\mathbb{C}$
\\

$\alpha_{it,m}$ &
Complex probability amplitude associated with the neutral basis state
$\ket{N}$; &
$\mathbb{C}$
\\

$\beta_{it,m}$ &
Complex probability amplitude associated with the defensive basis state
$\ket{D}$; &
$\mathbb{C}$
\\

$\vartheta_{it,m}$ &
Polar angle of the latent mental state on the Bloch sphere; &
$[0,\pi]$
\\

$\varphi_{it,m}$ &
Azimuthal angle representing the relative phase between the neutral and
defensive amplitudes; &
$[0,2\pi)$
\\

$\mathbf{r}_{it,m}$ &
Bloch vector corresponding to $\ket{\psi_{it,m}}$,
with components
$(r_{x,it,m},r_{y,it,m},r_{z,it,m})$; &
$\|\mathbf r\|=1$
\\

$\mathbf{r}_{D}$ &
Bloch vector of the defensive pole,
$\mathbf{r}_{D}=(0,0,-1)$; &
Unit vector
\\

% ============================================================
\addlinespace
\multicolumn{3}{l}{\textit{Initialization, cue encoding, and state evolution}}
\\
\midrule

$\mathcal{U}_{\mathrm{init}}(\boldsymbol{\zeta}_i)$ &
Initialization operator used once at $t=0$ to construct the initial
state of driver $i$; &
Operator
\\

$\boldsymbol{\zeta}_i$ &
Parameters governing the initial state of driver $i$; &
Application-dependent
\\

$x_{it,m}$ &
Observed input associated with cue $c_{it,m}$; &
Cue-dependent
\\

$s_{it,m}$ &
Encoded representation of the observed cue input. This symbol is
distinct from the latent-state index $s$; &
Cue-dependent
\\

$f_c(\cdot)$ &
Cue-encoding function for cue type $c$; &
Function
\\

$\boldsymbol{\xi}_c$ &
Parameter vector governing the cue-encoding function $f_c(\cdot)$; &
Application-dependent
\\

$g_c(\cdot)$ &
Cue-specific function mapping the encoded cue to a rotation angle; &
Function
\\

$\boldsymbol{\rho}_c$ &
Parameter vector governing the rotation-response function
$g_c(\cdot)$; &
Application-dependent
\\

$\mathfrak{a}(c)$ &
Mapping assigning cue type $c$ to a rotation axis; &
$\{x,z\}$
\\

$\theta_{it,m}$ &
Rotation angle generated by cue $c_{it,m}$ at update $(i,t,m)$; &
Radians
\\

$\theta_{\max,c}$ &
Maximum rotation angle permitted for cue type $c$; &
Typically $[0,\pi]$
\\

$k_c$ &
Positive cue-specific sensitivity or saturation parameter. This symbol
is distinct from the general basis-state index $k$; &
$k_c>0$
\\

$\sigma_x,\sigma_y,\sigma_z$ &
Pauli matrices generating rotations around the $x$-, $y$-, and
$z$-axes of the Bloch sphere; &
$2\times2$
\\

$\boldsymbol{\sigma}$ &
Vector of Pauli matrices,
$\boldsymbol{\sigma}=(\sigma_x,\sigma_y,\sigma_z)$; &
--
\\

$\mathcal{U}_a(\theta)$ &
Unitary rotation operator through angle $\theta$ around axis $a$; &
Unitary
\\

$\mathcal{U}_{it,m}$ &
Cue-induced unitary operator associated with update $(i,t,m)$; &
Unitary
\\

$I_2$ &
Two-dimensional identity matrix; &
$2\times2$
\\

$(\cdot)^{\dagger}$ &
Conjugate-transpose operation; &
--
\\

$p^{D}_{it,m}$ &
Intermediate defensive state probability before cue $m$ is processed,
$p^{D}_{it,m}=|\braket{D}{\psi_{it,m}}|^2$; &
$[0,1]$
\\

$p^{D,*}_{it,m+1}$ &
Defensive state probability of the candidate state before model
control; &
$[0,1]$
\\

$\mathcal{M}_{it,m}$ &
Monotonicity acceptance operation used to assess a candidate
state-changing update; &
Operation
\\

$\mathcal{C}_{it,m}$ &
Model-control operation incorporating the monotonicity constraint and,
when needed, the geodesic safeguard; &
Operation
\\

$\phi_{it,m}$ &
Remaining angular distance between the current Bloch vector and the
defensive pole. This is distinct from the relative phase
$\varphi_{it,m}$; &
$[0,\pi]$
\\

$\widehat{\mathbf{n}}_{\mathrm{geo},it,m}$ &
Unit rotation axis defining the shortest Bloch-sphere path from the
current state toward the defensive pole; &
Unit vector
\\

$\mathcal{U}_{\mathrm{geo},it,m}(\theta)$ &
Geodesic rotation operator activated when the originally assigned Pauli
rotation would violate the monotonicity requirement; &
Unitary
\\

$\mathcal{R}_{\lambda}$ &
Normalized interpolation operation moving the pre-relaxation state
toward its initial baseline; &
Non-unitary
\\

$\lambda$ &
Relaxation weight governing movement toward the initial baseline after
all within-sample cues have been processed; &
$0\leq\lambda<1$
\\

$\mathcal{B}$ &
Born-rule probability-extraction operation; &
Operation
\\

% ============================================================
\addlinespace
\multicolumn{3}{l}{\textit{Latent-state and manoeuvre probabilities}}
\\
\midrule

$P_{ist}$ &
Probability that driver $i$ is in latent mental state $s$ at trajectory
sample $t$; &
$[0,1]$
\\

$P_{iNt}$ &
Final neutral-state membership probability at trajectory sample $t$,
obtained from the post-relaxation state; &
$[0,1]$
\\

$P_{iDt}$ &
Final defensive state membership probability at trajectory sample $t$,
obtained from the post-relaxation state; &
$[0,1]$
\\

$P_{ijt\mid s}$ &
Probability that driver $i$ selects manoeuvre $j$ at trajectory sample
$t$, conditional on latent mental state $s$; &
$[0,1]$
\\

$P_{ijt}$ &
Unconditional probability that driver $i$ selects manoeuvre $j$ at
trajectory sample $t$; &
$[0,1]$
\\

$V_{ijt\mid s}$ &
Systematic utility of manoeuvre $j$ for driver $i$ at trajectory sample
$t$, conditional on latent mental state $s$; &
$\mathbb{R}$
\\

$a_{it}$ &
Observed manoeuvre selected by driver $i$ at trajectory sample $t$.
The driver index may be suppressed and written as $a_t$ in the empirical
application; &
$\in\mathcal A$
\\

$y_{ijt}$ &
Choice indicator equal to one when driver $i$ selects manoeuvre $j$ at
trajectory sample $t$, and zero otherwise; &
$\{0,1\}$
\\

$\mathbf{1}[\cdot]$ &
Indicator function equal to one when the condition in brackets is true,
and zero otherwise; &
$\{0,1\}$
\\

% ============================================================
\addlinespace
\multicolumn{3}{l}{\textit{Likelihood and estimation}}
\\
\midrule

$\boldsymbol{q}$ &
Parameter vector governing the quantum state membership layer; &
$\mathbb{R}^{n_q}$
\\

$\boldsymbol{c}$ &
Parameter vector governing the classical state-conditional choice
layer. This bold symbol is distinct from cue type $c_{it,m}$; &
$\mathbb{R}^{n_c}$
\\

$\boldsymbol{\Theta}$ &
Complete model parameter vector,
$\boldsymbol{\Theta}=(\boldsymbol{q},\boldsymbol{c})$; &
$\mathbb{R}^{n_q+n_c}$
\\

$\mathcal{P}_D$ &
Set of defensive state choice coefficients subject to the
$\ell_2$ regularization penalty; &
Set
\\

$\eta_{\mathrm{reg}}$ &
Regularization weight applied to the defensive state choice coefficients
during penalized Q-SCM estimation; &
$0.1$
\\

$\mathcal{Q}(\boldsymbol{\Theta})$ &
Penalized Q-SCM estimation objective consisting of the negative
unpenalized log-likelihood and the $\ell_2$ regularization term; &
$\mathbb{R}$
\\

$\widehat{\boldsymbol{\Theta}}_{\mathrm{pen}}$ &
Q-SCM parameter vector obtained by minimizing the penalized objective; &
--
\\

$\widehat{\ell}_{\mathrm{eval}}$ &
Unpenalized log-likelihood evaluated at the penalized Q-SCM parameter
estimate; &
$\mathbb{R}$
\\

$\Omega$ &
Feasible parameter space used in likelihood based and penalized estimation; &
Set
\\

$\ell(\boldsymbol{\Theta})$ &
Sample log-likelihood evaluated at parameter vector
$\boldsymbol{\Theta}$; &
$\mathbb{R}$
\\

$\ell_i(\boldsymbol{\Theta})$ &
Log-likelihood contribution of driver $i$; &
$\mathbb{R}$
\\

$\widehat{\boldsymbol{\Theta}}$ &
Maximum-likelihood estimate of the complete parameter vector; &
--
\\

$\mathbf{g}_i$ &
Driver-level score vector evaluated at
$\widehat{\boldsymbol{\Theta}}$; &
Vector
\\

$\mathbf{H}$ &
Observed information matrix, defined as the negative Hessian of the
sample log-likelihood; &
Matrix
\\

$\widehat{\operatorname{Var}}
(\widehat{\boldsymbol{\Theta}})$ &
Estimated driver-clustered covariance matrix of the parameter
estimates; &
Matrix
\\

% ============================================================
\addlinespace
\multicolumn{3}{l}{\textit{Application-specific observed variables}}
\\
\midrule

$\mathrm{CTTC}_{it}$ &
Minimum closing time-to-collision observed for driver $i$ at trajectory
sample $t$. The driver index is suppressed in the application tables; &
Seconds
\\

$\mathrm{mDist}_{it}$ &
Separation distance from the ego vehicle to the nearest interacting
vehicle; &
Metres
\\

$s_{\mathrm{dist},it}$ &
Distance-based cue,
$s_{\mathrm{dist},it}=1/(1+\mathrm{mDist}_{it})$; &
$[0,1]$
\\

$s_{\mathrm{CTTC},it}$ &
CTTC-based cue,
$s_{\mathrm{CTTC},it}=1/(1+\mathrm{CTTC}_{it})$; &
$[0,1]$
\\

$s_{\mathrm{dev},it}$ &
Lane-deviation cue measuring departure from the ego vehicle's intended
path; &
Application scale
\\

$f_{it}$ &
Frontal proximity of the minimum-CTTC interacting vehicle; &
$[0,1]$
\\

$r_{it}$ &
Rear proximity of the minimum-CTTC interacting vehicle; &
$[0,1]$
\\

$o_{ijt}$ &
Non-overlap indicator for manoeuvre alternative $j$; &
$\{0,1\}$
\\

$d_{ijt}$ &
Distance from the centroid of manoeuvre alternative $j$ to the ego
vehicle's intended path; &
Metres
\\

$j=1$ &
Decelerate manoeuvre; &
--
\\

$j=2$ &
Maintain-speed manoeuvre and reference alternative within each latent
state; &
--
\\

$j=3$ &
Accelerate manoeuvre; &
--
\\

% ============================================================
\addlinespace
\multicolumn{3}{l}{\textit{Application-specific model parameters}}
\\
\midrule

$\beta^{N}_{\mathrm{decel}}$,
$\beta^{N}_{\mathrm{accel}}$ &
Alternative-specific constants for deceleration and acceleration in the
neutral state; &
$\mathbb{R}$
\\

$\beta^{D}_{\mathrm{decel}}$,
$\beta^{D}_{\mathrm{accel}}$ &
Alternative-specific constants for deceleration and acceleration in the
defensive state; &
$\mathbb{R}$
\\

$\beta^{N}_{\mathrm{dist\_path}}$ &
Effect of alternative-specific distance from the intended path in the
neutral state; &
$\mathbb{R}$
\\

$\beta^{N}_{\mathrm{nonoverlap}}$,
$\beta^{D}_{\mathrm{nonoverlap}}$ &
Effects of the alternative-specific non-overlap indicator in the neutral
and defensive states; &
$\mathbb{R}$
\\

$\beta^{N}_{\mathrm{int}}$,
$\beta^{D}_{\mathrm{int}}$ &
Interaction-intensity coefficients governing the frontal-threat
deceleration effect and rear-threat acceleration effect in each state; &
$\mathbb{R}$
\\

$\theta_{\max,\mathrm{dist}}$,
$\theta_{\max,\mathrm{CTTC}}$,
$\theta_{\max,\mathrm{dev}}$ &
Maximum rotation angles for the distance, CTTC, and lane-deviation cues,
respectively; &
Radians
\\

$k_{\mathrm{dist}}$,
$k_{\mathrm{CTTC}}$,
$k_{\mathrm{dev}}$ &
Cue-specific sensitivity or saturation parameters; &
Positive
\\

$\theta_0$ &
Population-level initial polar angle used to initialize the latent mental
state at $t=0$; &
$[0,\pi]$
\\

$\gamma_1$ &
CTTC coefficient in the classical latent class membership model; &
$\mathbb{R}$
\\

$\gamma_2$ &
Defensive-class constant in the classical latent class membership
model; &
$\mathbb{R}$
\\

$V^{\mathrm{cl}}_{D,it}$,
$V^{\mathrm{cl}}_{N,it}$ &
Defensive and neutral class membership utilities in the classical
benchmark; &
$\mathbb{R}$
\\

$P^{\mathrm{cl}}(D_{it})$,
$P^{\mathrm{cl}}(N_{it})$ &
Defensive and neutral membership probabilities generated by the
classical benchmark; &
$[0,1]$
\\

$P^{\mathrm{Q}}(D_{it})$,
$P^{\mathrm{Q}}(N_{it})$ &
Application-specific notation for the defensive and neutral membership
probabilities generated by the Q-SCM; &
$[0,1]$
\\

\end{longtable}

\normalsize

%  add to appendix
\section{Bloch Sphere Representation}
\label{sec:bloch}

The Bloch sphere provides a geometric visualization of the two-state
quantum system. The north pole corresponds to $\ket{N}$, representing the neutral state, and the south pole corresponds to $\ket{D}$, representing the defensive state, as illustrated in
Figure~\ref{fig:bloch_sphere}. Consistent with the notation introduced in Eq.~\eqref{eq:bloch_parameterization}, let $\vartheta$ denote the polar angle of the latent mental state and let $\varphi$ denote its relative phase. The polar angle determines the Defensive state probability through
\begin{equation}
P(D) = \sin^2\left(\frac{\vartheta}{2}\right),
\end{equation}
whereas the relative phase angle $\varphi$ contains contextual information that is not visible from the Defensive state probability
alone. The state polar angle $\vartheta$ is distinct from the cue-induced rotation angle $\theta_{it,m}$ defined in Eq.~\eqref{eq:general_rotation_mapping}.

Two latent mental states may have the same Defensive state probability but different relative phases because they were generated by different cue histories. These phase differences can influence the post relaxation state because relaxation acts on the complex amplitudes. 
They are also retained through state carry-forward and can influence
subsequent noncommuting state-changing updates.

Any pure two-state cognitive state can be written as:
\begin{equation}
\ket{\psi} = \alpha\ket{N} + \beta\ket{D}, \qquad |\alpha|^2 + |\beta|^2 = 1,
\label{eq:superposition}
\end{equation}
where $\alpha$ and $\beta$ are complex probability amplitudes associated with the neutral and defensive states, respectively. $|\cdot|$ is the modulus of a complex number. 
Using the standard Bloch sphere parameterization, the same state can be written as:
\begin{equation}
\ket{\psi} = \cos\left(\frac{\vartheta}{2}\right)\ket{N} +
e^{i\varphi} \sin\left(\frac{\vartheta}{2}\right)\ket{D},
\label{eq:bloch_state_parameterization}
\end{equation}
where $\vartheta\in[0,\pi]$ is the polar angle measured from the neutral pole and $\varphi\in[0,2\pi)$ is the azimuthal angle representing the relative phase between the neutral and defensive amplitudes. 
Equivalently,
\begin{equation}
\alpha = \cos\left(\frac{\vartheta}{2}\right),
\qquad 
\beta = e^{i\varphi} \sin\left(\frac{\vartheta}{2}\right).
\label{eq:alpha_beta_bloch}
\end{equation}

The squared moduli of the amplitudes give the probabilities of the two cognitive states:
\begin{equation}
P(N) = |\alpha|^2 = \cos^2\left(\frac{\vartheta}{2}\right),
\qquad
P(D) = |\beta|^2 = \sin^2\left(\frac{\vartheta}{2}\right) = \frac{1-\cos(\vartheta)}{2}.
\label{eq:state_probabilities_alpha_beta}
\end{equation}
Hence, the polar angle $\vartheta$ determines the driver's position on the neutral--defensive spectrum. For instance, a state with
$|\beta|^2=0.3$ assigns a probability of $0.3$ to the defensive state and a probability of $0.7$ to the neutral state.
The relative phase $\varphi$ does not alter these probabilities when the state is considered in isolation. In the Q-SCM, however, it can affect the post relaxation state because relaxation acts on the complex amplitudes, and it can influence subsequent noncommuting cue-induced
rotations through state carry-forward.

The state vector in Eq.~\eqref{eq:superposition} is a pure state, representing a fully specified superposition that lies on the surface of the Bloch sphere. 
The present model operates entirely using pure states. 
In contrast, a mixed state represents classical uncertainty about which pure state the system occupies. 
For example, if there is a 60\% chance that the driver is in pure state $\ket{\psi_1}$ and a 40\% chance that the driver is in pure state $\ket{\psi_2}$, the mixture is described by the density matrix
\begin{equation}
\rho = 0.6\ket{\psi_1}\!\bra{\psi_1} + 0.4\ket{\psi_2}\!\bra{\psi_2}.
\end{equation}
On the Bloch sphere, a pure state lies on the surface with $\|\mathbf{r}\|=1$, whereas a mixed state lies inside the sphere with $\|\mathbf{r}\|<1$~\cite{nielsen2010quantum}. 

The use of a pure state is a parsimonious modelling assumption. Each driver trajectory is represented as a single recursively evolving state initialized once and updated deterministically conditional on the observed cues and model parameters. The present formulation does not separately represent stochastic mixtures of latent cognitive configurations, attentional noise, or loss of phase information within a trajectory. Such effects would require a density matrix formulation or another stochastic state evolution mechanism.

The state can also be mapped to a Bloch vector $\mathbf{r}=(r_x,r_y,r_z)$ by taking the expectation value of each Pauli matrix with respect to the current state:
\begin{equation}
r_x = \bra{\psi}\sigma_x\ket{\psi}, \qquad
r_y = \bra{\psi}\sigma_y\ket{\psi}, \qquad
r_z = \bra{\psi}\sigma_z\ket{\psi},
\label{eq:bloch_vector}
\end{equation}
where $\boldsymbol{\sigma}=(\sigma_x,\sigma_y,\sigma_z)$ are the Pauli matrices. 
Each component measures the alignment of the state with the corresponding axis of the Bloch sphere. 
The $r_z$ component determines the neutral and defensive probabilities through the Born rule:
\begin{equation}
P(N)=\frac{1+r_z}{2}, \qquad P(D)=\frac{1-r_z}{2}.
\label{eq:born_rule_bloch}
\end{equation}
The remaining components, $r_x$ and $r_y$, encode the phase information that is not visible from $P(D)$ alone but influences later state evolution.

\section{Unitary Rotations on the Bloch Sphere}
\label{app:rotations}

Unitary transformations on a two-state system are rotations on the
Bloch sphere. The fundamental generators of all such rotations are the
three Pauli matrices:
\begin{equation}
\sigma_x = \begin{pmatrix} 0 & 1 \\ 1 & 0 \end{pmatrix}, \quad
\sigma_y = \begin{pmatrix} 0 & -i \\ i & 0 \end{pmatrix}, \quad
\sigma_z = \begin{pmatrix} 1 & 0 \\ 0 & -1 \end{pmatrix}.
\label{eq:pauli}
\end{equation}
Any rotation on the Bloch sphere can be built by combining $\sigma_x$,
$\sigma_y$, and $\sigma_z$ in appropriate proportions~\cite{nielsen2010quantum}.
In the most general case, a rotation around an arbitrary axis
$\hat{\mathbf{n}}=(n_x,n_y,n_z)$ with $\|\hat{\mathbf{n}}\|=1$ is
\begin{equation}
\mathcal{U}_{\hat{\mathbf{n}}}(\theta) = \exp\!\left(-i\frac{\theta}{2}\,\hat{\mathbf{n}}\cdot\boldsymbol{\sigma}\right)
= \cos\!\left(\frac{\theta}{2}\right)I
  - i\sin\!\left(\frac{\theta}{2}\right)\!\left(n_x\sigma_x+n_y\sigma_y+n_z\sigma_z\right),
\label{eq:U_general}
\end{equation}
which reduces, when the axis is aligned with a coordinate direction, to
the single axis rotation
\begin{equation}
\mathcal{U}_a(\theta) = \exp\!\left(-i\frac{\theta}{2}\sigma_a\right)
= \cos\!\left(\frac{\theta}{2}\right)I - i\sin\!\left(\frac{\theta}{2}\right)\sigma_a,
\qquad a\in\{x,y,z\}.
\label{eq:Ua_def}
\end{equation}
The $\sigma_x$ and $\sigma_y$ components mix the amplitudes $\alpha$
and $\beta$ between $\ket{N}$ and $\ket{D}$, directly altering
$P(D)=|\beta|^2$. The $\sigma_z$ component
$\mathcal{U}z(\theta)=\mathrm{diag}(e^{-i\theta/2},e^{i\theta/2})$ multiplies
each amplitude by a phase factor and leaves $|\alpha|^2$ and $|\beta|^2$
unchanged. Starting from the neutral pole $\ket{N}$ (Figure~\ref{fig:pauli_rotations}):
\begin{itemize}
\item $\sigma_x$ rotation (state change, imaginary amplitude):
  $\mathcal{U}x(\theta)\ket{N}=\cos(\theta/2)\ket{N}-i\sin(\theta/2)\ket{D}$,
  giving $P(D)=\sin^2(\theta/2)$.
\item $\sigma_z$ rotation (phase change only):
  $\mathcal{U}z(\theta)\ket{N}=e^{-i\theta/2}\ket{N}$, giving $P(D)=0$.
\item $\sigma_y$ rotation (state change, real amplitude):
  $\mathcal{U}y(\theta)\ket{N}=\cos(\theta/2)\ket{N}+\sin(\theta/2)\ket{D}$,
  giving $P(D)=\sin^2(\theta/2)$ as well, with a purely real $\ket{D}$
  amplitude.
\end{itemize}

\begin{figure}[H]
\centering
\begin{subfigure}[b]{0.32\textwidth}
    \centering
    \includegraphics[width=\textwidth,trim=0 0 0 1.7cm,clip]{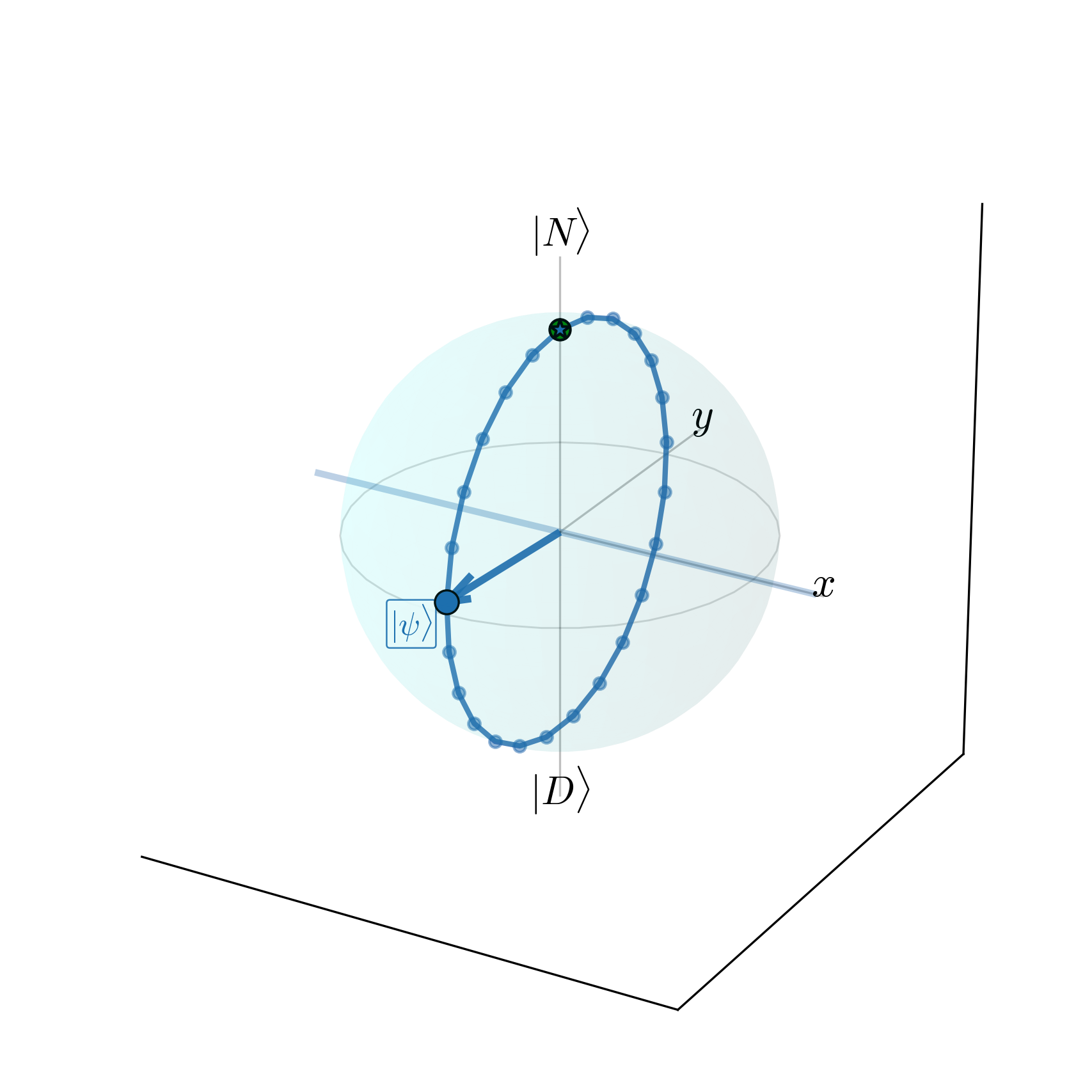}
    \caption{$\sigma_x$ rotation}\label{fig:pauli_rot_x}
\end{subfigure}\hfill
\begin{subfigure}[b]{0.32\textwidth}
    \centering
    \includegraphics[width=\textwidth,trim=0 0 0 1.7cm,clip]{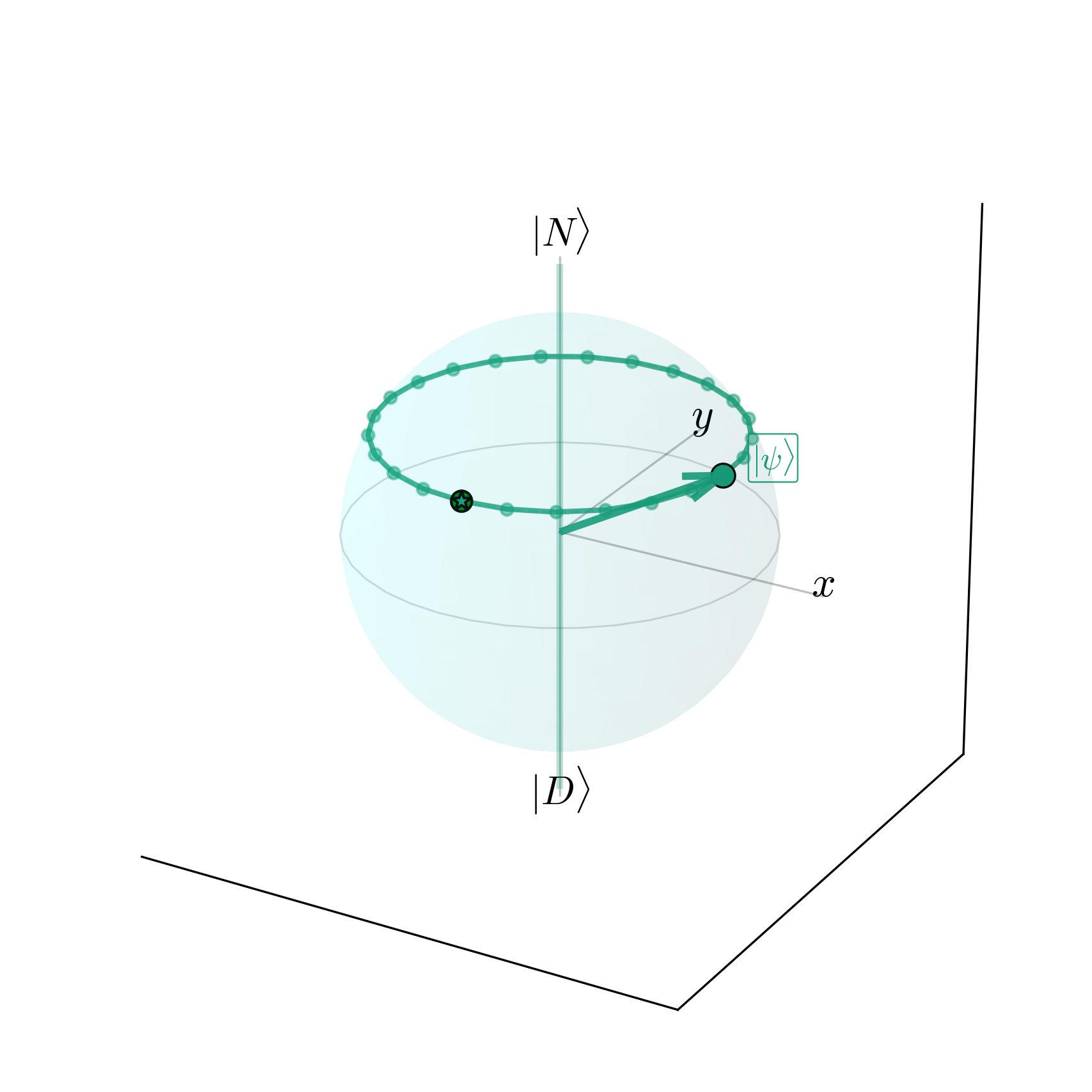}
    \caption{$\sigma_z$ rotation}\label{fig:pauli_rot_z}
\end{subfigure}\hfill
\begin{subfigure}[b]{0.32\textwidth}
    \centering
    \includegraphics[width=\textwidth,trim=0 0 0 1.7cm,clip]{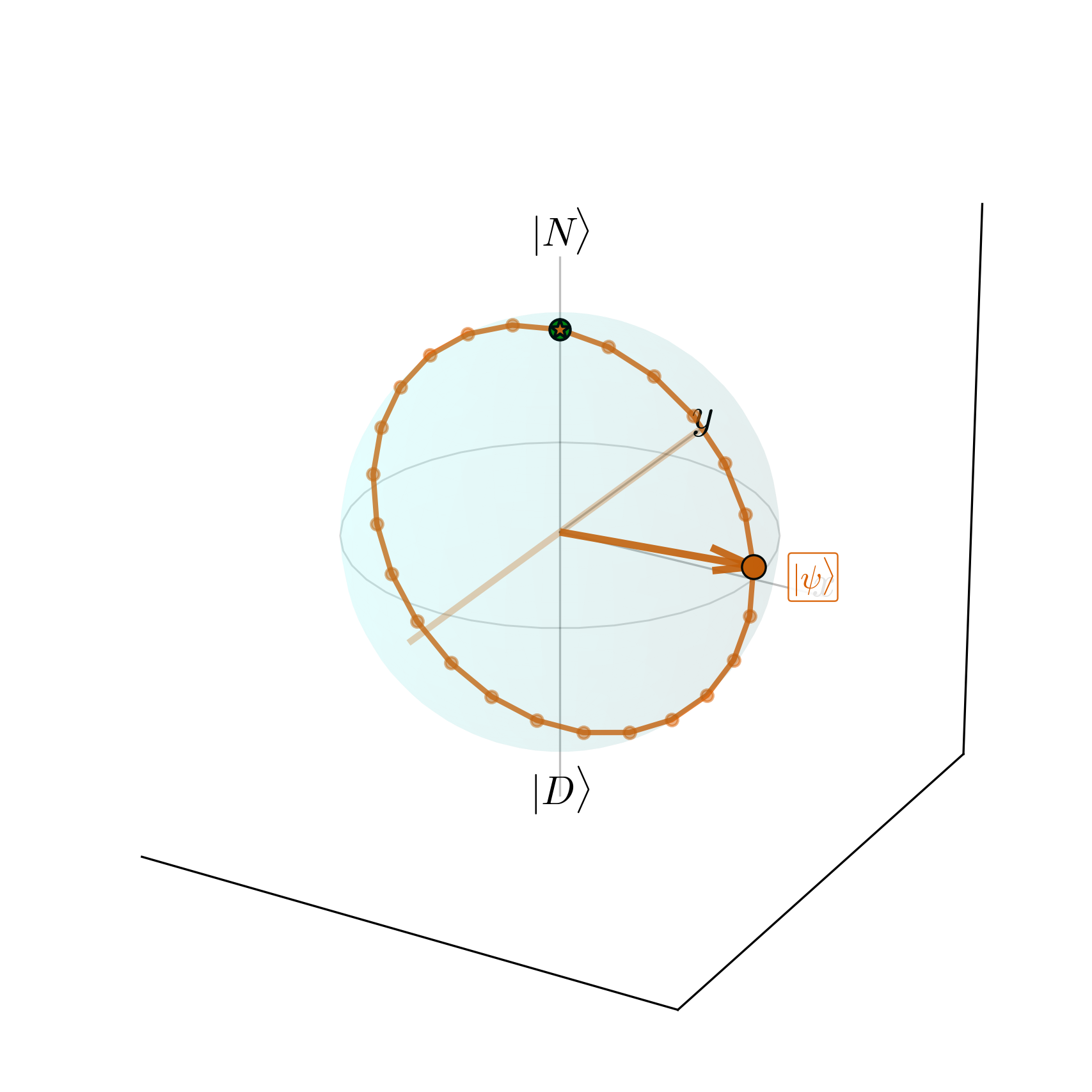}
    \caption{$\sigma_y$ rotation}\label{fig:pauli_rot_y}
\end{subfigure}
\caption{Rotation on the Bloch sphere for each Pauli axis.}
\label{fig:pauli_rotations}
\end{figure}

In Q-SCM, each cue type is assigned to exactly one of two Pauli axes.
The $\sigma_x$ axis is reserved for cues that directly change the
defensive probability (state change) and the $\sigma_z$ axis for cues
that modulate the internal phase without directly altering $P(D)$
(phase change). No cue is assigned to $\sigma_y$, and no cue induces a
rotation around a general off-axis direction. This assumption is
motivated by three considerations. First, the two retained axes capture
the two fundamental and distinct ways a cue can influence cognition.
A $\sigma_z$ cue can be understood as a priming or context-setting
signal which does not, by itself, make the driver more or less
defensive but conditions how subsequent state-changing cues are
processed. Second, confining each cue to a single axis yields a
parsimonious model with clear cognitive interpretation, and the model
parameters (axis assignment and rotation magnitude) are directly
identifiable. Third, the $\sigma_y$ axis simultaneously produces both
state and phase change, coupling the two mechanisms in a way that
increases model complexity without a clear cognitive justification at
this stage. When two cues on different axes ($\sigma_x$ and $\sigma_z$)
are composed sequentially, the resulting product rotation generically
contains a $\sigma_y$ component, a direct consequence of the
commutation relation $[\sigma_x,\sigma_z]=-2i\sigma_y$. Therefore, combined state and phase
effects emerge naturally from the sequential composition of cue
updates rather than being assigned to individual cues. We adopt
$\sigma_x$ as the state changing axis and $\sigma_z$ as the phase
changing axis; the specific assignment of driving cues to these axes is
detailed in Section~\ref{sec:state_evolution}.

\section{Formal Guarantees}
\label{app:formal_guarantees} 
\subsection{Proof of Proposition~\ref{prop:cue_update_monotonicity}}
% \subsection{Cue Update Monotonicity}

% \begin{proposition}
% \label{prop:cue_update_monotonicity}
For every controlled cue update performed before relaxation, the Defensive state probability is non-decreasing.
% \begin{equation}
% p^{D}_{it,m+1} \geq p^{D}_{it,m}, \qquad m=0,\ldots,M_{it}-1.
% \label{eq:cue_update_monotonicity}
% \end{equation}
Consequently, the sequence of Defensive state probabilities generated during the cue-driven evolution within a trajectory sample cannot exhibit an increase followed by a decrease caused by rotational overshoot.

\begin{proof}
Consider the latent state immediately before cue $m$ is processed:
\[ \ket{\psi_{it,m}}, \qquad p^{D}_{it,m} = 
\left| \braket{D}{\psi_{it,m}} \right|^2. \]
For a state-changing cue assigned to the $x$-axis, the originally assigned
rotation produces the candidate state:
\[ \ket{\psi^{*}_{it,m+1}} = \mathcal{U}_x(\theta_{it,m})\ket{\psi_{it,m}}, \]
with candidate Defensive state probability 
\[ p^{D,*}_{it,m+1} = \left| \braket{D}{\psi^{*}_{it,m+1}}
\right|^2. \]
If $p^{D,*}_{it,m+1}\geq p^{D}_{it,m}$,
the candidate state is accepted directly. Then:
\[ p^{D}_{it,m+1} = p^{D,*}_{it,m+1} \geq
p^{D}_{it,m}. \]
If $p^{D,*}_{it,m+1}<p^{D}_{it,m}$, the candidate is not accepted in its original form. The geodesic safeguard redirects the update toward the Defensive pole without passing beyond it. By the no-overshoot property of the safeguard, the redirected state satisfies:
\[ p^{D}_{it,m+1} \geq p^{D}_{it,m}. \]
Thus, every controlled state-changing cue update satisfies:
\[ p^{D}_{it,m+1} \geq p^{D}_{it,m}. \]
For a phase-changing cue assigned to the $z$-axis, the rotation changes only the relative phase and therefore leaves the pre-relaxation Defensive state probability unchanged:
\[ p^{D}_{it,m+1} = p^{D}_{it,m}. \]
Applying these results successively to all within-sample cue updates produces a non-decreasing sequence of pre-relaxation Defensive state probabilities. Such a sequence cannot increase and then decrease because of rotational overshoot.
The result applies only to the cue-driven state evolution before relaxation. The subsequent relaxation operation may reduce the Defensive state probability by moving the state toward the initial baseline.
\end{proof}
\subsection{Proof of Proposition~\ref{prop:geodesic_safeguard}}

\begin{proof}
Consider an $x$-axis candidate update that would reduce the Defensive state probability and activates the geodesic safeguard in Eq.~\eqref{eq:geodesic_control_operation}. Let $\phi_{it,m}\in(0,\pi]$ be the angular distance between the current Bloch vector and the Defensive pole. The safeguard applies a rotation toward the Defensive pole with angle:
\[\delta_{it,m}=\min\left(\theta_{it,m},\phi_{it,m}\right).\]
Because:
\[0\leq\delta_{it,m}\leq\phi_{it,m},\]
the angular distance to the Defensive pole after the redirected update is:
\[\phi'_{it,m}=\phi_{it,m}-\delta_{it,m},\]
satisfies:
\[0\leq\phi'_{it,m}\leq\phi_{it,m}.\]
Thus, the redirected state moves along the geodesic toward the Defensive pole and cannot pass beyond it.
The Defensive state probability can be expressed in terms of the angular distance from the Defensive pole by:
\[p^{D}_{it,m}=\frac{1+\cos\left(\phi_{it,m}\right)}{2}.\]
Since $\cos(\phi)$ is non-increasing on $[0,\pi]$ and $\phi'_{it,m}\leq\phi_{it,m}$, the redirected state satisfies
\[p^{D}_{it,m+1}=\frac{1+\cos\left(\phi'_{it,m}\right)}{2}\geq\frac{1+\cos\left(\phi_{it,m}\right)}{2}=p^{D}_{it,m}.\]
So the geodesic safeguard cannot reduce the Defensive state probability and cannot overshoot the Defensive pole. When $\delta_{it,m}>0$, the inequality is strict.
When the state is already at the Defensive pole, $\phi_{it,m}=0$ and the state remains unchanged.
% This result is only for an individual cue-driven update for which the geodesic safeguard is activated. It does not imply that every nonzero cue produces strict progress, because an originally assigned candidate that preserves the Defensive state probability is accepted directly. It also does not imply convergence to a Defensive state probability of one during the complete trajectory, because the subsequent relaxation operation may move the state toward its baseline.
\end{proof}

\setlength{\bibsep}{0pt plus 0.3ex}
\bibliography{biblio1}
\end{document}